\documentclass[
  twocolumn,
  letterpaper,
  dvipsnames,
  allowtoday,
]{quantumarticle}

\pdfoutput=1

\usepackage[T1]{fontenc}
\usepackage{graphicx}
\usepackage[normalem]{ulem}

\usepackage[utf8]{inputenc}
\usepackage[english]{babel}

\usepackage{amsmath,amssymb,amsthm,mathtools}
\usepackage{latexsym}
\usepackage{mathrsfs}
\usepackage{extarrows}
\usepackage{slashed}
\usepackage{verbatim}
\usepackage{comment}
\usepackage{soul}
\usepackage[toc,page]{appendix}
\usepackage[vcentermath]{youngtab}
\usepackage{multirow}

\usepackage{tikz}
\usetikzlibrary{decorations.pathreplacing}
\usepackage{quantikz}

\usepackage{pgfplots}
\pgfplotsset{compat=1.18}

\usepackage{hyperref}
\usepackage[capitalise]{cleveref}
\usepackage[numbers]{natbib}

\usepackage{physics}

\newtheorem{theorem}{Theorem}

\definecolor{cothblue}{RGB}{31,119,180}
\definecolor{exporange}{RGB}{255,127,14}
\definecolor{invgreen}{RGB}{44,160,44}
\definecolor{invblue}{RGB}{23,70,162}
\definecolor{mybluecolor}{RGB}{0,0,139}

\def\:={\,\raisebox{0.85pt}{.}\hspace{-2.78pt}\raisebox{2.85pt}{.}\!\!=\,}
\def\=:{\,=\!\!\raisebox{0.85pt}{.}\hspace{-2.78pt}\raisebox{2.85pt}{.}\,}

\newcommand{\HSket}[1]{\left|#1\right)}
\newcommand{\HSbra}[1]{\left(#1\right|}
\newcommand{\HSinner}[2]{\left(#1\middle|#2\right)}
\newcommand{\HSnorm}[1]{\|#1\|_{\mathrm{HS}}}

\usepackage{xparse}

\makeatletter

\newif\ifinappendix
\inappendixfalse

\renewcommand{\appendixtocname}{Contents of the Appendix}

\newcommand{\printappendixtoc}{%
  \begingroup
  \setcounter{tocdepth}{2}
  \begin{center}
    \textbf{\large \appendixtocname}
  \end{center}
  \@starttoc{atoc}%
  \endgroup
}

\let\oldsection\section
\RenewDocumentCommand{\section}{s o m}{%
  \IfBooleanTF{#1}
    {\oldsection*{#3}}%
    {%
      \IfNoValueTF{#2}{\oldsection{#3}}{\oldsection[#2]{#3}}%
      \ifinappendix
        \addcontentsline{atoc}{section}{%
          \protect\numberline{\thesection}%
          \IfNoValueTF{#2}{#3}{#2}}%
      \fi
    }%
}

\let\oldsubsection\subsection
\RenewDocumentCommand{\subsection}{s o m}{%
  \IfBooleanTF{#1}
    {\oldsubsection*{#3}}%
    {%
      \IfNoValueTF{#2}{\oldsubsection{#3}}{\oldsubsection[#2]{#3}}%
      \ifinappendix
        \addcontentsline{atoc}{subsection}{%
          \protect\numberline{\thesubsection}%
          \IfNoValueTF{#2}{#3}{#2}}%
      \fi
    }%
}

\makeatother

\begin{document}
\title{Universal scaling of finite-temperature quantum adiabaticity in driven many-body systems}
\date{\today}

\author{Li-Ying Chou}
\email{liying729927@gmail.com}
\affiliation{Department of Physics, National Central University, Chungli 32001, Taiwan}
\orcid{0009-0007-7704-9776}

\author{Jyong-Hao Chen}
\email{jyonghaochen@gmail.com }
\affiliation{Department of Physics, National Central University, Chungli 32001, Taiwan}
\orcid{0000-0003-3951-049X}

\begin{abstract}
Establishing quantitative adiabaticity criteria at finite temperature remains substantially less developed than in the pure-state setting, even though realistic quantum systems are never at absolute zero.
Here, by combining a mixed-state quantum speed limit with mixed-state fidelity susceptibility within the Liouville-space formulation of quantum mechanics, we derive rigorous bounds on the Hilbert--Schmidt fidelity between mixed states.
Focusing on protocols that drive an initial Gibbs state toward a quasi-Gibbs target, these bounds yield an explicit threshold driving rate for the onset of nonadiabaticity.
For a broad class of local Hamiltonians in gapped phases, we show that, in the thermodynamic limit, the threshold driving rate factorizes into a system-size contribution that recovers the zero-temperature scaling and a universal temperature-dependent factor.
The latter is exponentially close to unity at low temperature, whereas at high temperature it is linear in temperature.
We verify the predicted scaling in several spin-\(1/2\) chains by obtaining closed-form expressions for the threshold driving rate.
Our results provide a practical and largely model-independent criterion for finite-temperature adiabaticity in driven many-body systems under closed-system unitary evolution.
\end{abstract}

\maketitle

\section{Introduction}

Slow driving is one of the few broadly useful routes to controlling the dynamics of quantum many-body systems.
At zero temperature, the quantum adiabatic theorem~\cite{Born27,Born28} asserts that a system initialized in an eigenstate remains close to the corresponding instantaneous eigenstate when the Hamiltonian varies sufficiently slowly.
However, in experiments and in many applications of quantum simulation and quantum information processing, systems inevitably start at finite temperature---even if very low---so it is essential to quantify when adiabatic following persists away from absolute zero.
While {\it zero-temperature} adiabaticity has been studied extensively for decades~\cite{Born27,Born28,Kato50,Messiah14,Nenciu93,Avron99,Hagedorn02,Teufel03,Ambainis06,Jansen07,Amin09,Lidar09,Cheung11,Elgart12,Ge16,Bachmann17,Albash18,Bachmann18,Bachmann20}, quantitative criteria at {\it finite temperature} remain much less developed.
This limitation is relevant not only for realistic driven experiments, but also for emerging directions such as thermal-state preparation in quantum information science~\cite{irmejs2026,granet2025,shirai2025} and finite-temperature topological phases in condensed matter physics~\cite{Hastings11,Unanyan20,Zhou25}.

Several recent works~\cite{Skelt_2020,Ilin_2021,Greenblatt_2024} have nevertheless advanced the study of finite-temperature adiabaticity.
First, Ref.~\cite{Skelt_2020} proposed metrics based on the quantum state and particle density for characterizing the degree of adiabaticity in finite-temperature many-body dynamics, and discussed the role of memory effects.
Their framework is insightful and potentially useful in both theory and experiment, but it is primarily diagnostic in nature.
Second, Ref.~\cite{Ilin_2021} extended the notion of adiabaticity to closed systems initially prepared in Gibbs states, identified the quasi-Gibbs state as the natural finite-temperature target, and proved a sufficient condition for adiabatic following.
They also showed that finite-temperature adiabaticity can, in some many-body settings, be more robust than its zero-temperature counterpart.
However, their general result is formulated as a trace-distance bound involving several auxiliary quantities and integral terms, so it does not immediately provide a simple, broadly applicable criterion for generic driven many-body systems.

More recently, Ref.~\cite{Greenblatt_2024} established a rigorous adiabatic theorem for low-temperature finite-range many-body fermionic lattice systems under suitable assumptions on the driving.
They proved closeness of the evolved state to the instantaneous Gibbs state in the sense of expectation values of local observables and obtained linear response as a corollary.
Their result is mathematically powerful, but it is restricted to low temperatures and a specific class of many-body systems.
Despite these important advances, a quantitatively sharp and broadly applicable finite-temperature criterion for adiabaticity breakdown in driven many-body systems is still lacking.
In particular, it remains unclear how to characterize the onset of adiabatic breakdown in a model-independent way and how such a characterization should connect to quantities that can be computed systematically in many-body systems.

In this work, we provide such a criterion by combining two concepts that are well developed for pure states but less exploited for mixed states: the quantum speed limit (QSL)~\cite{Mandelstam45,Vaidman92,Pfeifer93a,Pfeifer93b,Pfeifer95,Deffner17} and fidelity susceptibility~\cite{Zanardi_2006,Zanardi_2007,You_2007,GU_2010}.
Working in the Liouville-space formulation~\cite{Ernst87,Fick90,Gyamfi_2020}, we derive rigorous bounds on the Hilbert--Schmidt fidelity between mixed states under closed-system unitary evolution and apply them to protocols that drive an initial Gibbs state toward a quasi-Gibbs target~\cite{Skelt_2020,Ilin_2021}.
These bounds lead to an explicit threshold driving rate \(\Gamma^{\,}_{\mathrm{th}}\) that quantitatively characterizes the onset of adiabatic breakdown.

\begin{table}[t]
\begin{center}
\begin{tabular}{lcc}
\hline
\hline
 & low-temp.\ regime & high-temp.\ regime \\
\hline
\(f(\beta)\) &
\(\simeq 1 + c^{\,}_{1} e^{-\beta \Delta}\) &
\(\simeq c^{\,}_{2}/\beta\) \\
\hline
\hline
\end{tabular}
\end{center}
\caption{Universal scaling of the temperature-dependent factor \(f(\beta)\) in the threshold driving rate
\(\Gamma^{\,}_{\mathrm{th}}\sim\Gamma^{\,}_{N}f(\beta)\) for generic local quantum many-body systems in gapped phases.
Here, \(\beta\) is the inverse temperature and \(\Delta\) is the relevant excitation gap of the initial Hamiltonian.
The coefficient \(c^{\,}_{1}\in(0,2]\) is a model-dependent dimensionless constant, whereas \(c^{\,}_{2}>0\) has dimensions of inverse energy.
For the transverse-field Ising and quantum XY chains, \(f(\beta)=\coth(\beta\Delta/2)\) exactly.}
\label{tab:driving-rate-scaling}
\end{table}

Our central result is an explicit finite-temperature adiabatic threshold for closed-system driving from an initial Gibbs state: for a broad class of local Hamiltonians in gapped phases, the threshold factorizes in the thermodynamic limit (\(N\to\infty\)) as
\(
\Gamma^{\,}_{\mathrm{th}}\sim\Gamma^{\,}_{N}f(\beta).
\)
\footnote{Throughout, ``\(\sim\)'' denotes asymptotic equivalence as \(N\to\infty\): \(A^{\,}_N\sim B^{\,}_N\) means \(\lim_{N\to\infty}A^{\,}_N/B^{\,}_N=1\).}
Here, \(\Gamma^{\,}_{N}\) captures the familiar zero-temperature size dependence (typically decreasing with increasing \(N\)~\cite{Lychkovskiy:2016pfb}), while \(f(\beta)\) is a universal temperature-dependent factor summarized in Table~\ref{tab:driving-rate-scaling}.
In particular, \(f(\beta)\) is exponentially close to unity at low temperature, with the scale set by the relevant excitation gap, and is linear in temperature in the high-temperature regime.
To the best of our knowledge, this universal temperature dependence has not been quantitatively characterized in this form for generic driven many-body systems.
We verify this structure in several spin-chain models, including the transverse-field Ising, quantum XY, and mixed-field Ising chains, where closed-form expressions for \(f(\beta)\) can be obtained using transfer-matrix techniques~\cite{Huang1987,Nishimori2010}.

This paper is organized as follows.
In Sec.\ \ref{sec:liouville_space_formulation_and_mixed_state_fidelity}, we introduce the general Liouville-space formulation and the mixed-state fidelity measure used throughout.
In Sec.\ \ref{sec:mixed_state_quantum_speed_limit_and_fidelity_bounds}, we derive mixed-state fidelity bounds based on quantum speed limit ideas within this framework.
We then apply these bounds to finite-temperature adiabatic dynamics in Sec.\ \ref{sec:quantum_adiabaticity_at_finite_temperature}, where we introduce the threshold driving rate \(\Gamma^{\,}_{\mathrm{th}}\).
In Sec.\ \ref{sec:universal_temperature_scaling_of_the_threshold_driving_rate}, we determine the universal low- and high-temperature limits of \(\Gamma^{\,}_{\mathrm{th}}\) and clarify the physical origin of the corresponding scaling behavior.
In Sec.\ \ref{sec:spin_chain_models_as_illustrations}, we illustrate and verify the general results using representative spin-chain models.
Finally, in Sec.\ \ref{sec:summary_and_outlook}, we summarize the results, discuss broader implications, and outline possible extensions.
Additional technical details are presented in the Appendices.

\begin{figure}[t]
    \centering
    \begin{tikzpicture}[
        x=1.0cm,y=1.0cm,
        line cap=round,
        line join=round,
        tri/.style={draw=black, line width=0.75pt},
        statedot/.style={circle, draw=black, fill=white, minimum size=2.7mm, inner sep=0pt}
    ]

    \coordinate (L) at (0,0);
    \coordinate (R) at (5.4,0);
    \coordinate (B) at (2.7,-3.15);

    \draw[tri] (L)--(R);
    \draw[tri] (L)--(B);
    \draw[tri] (B)--(R);

    \node[statedot] at (L) {};
    \node[statedot] at (R) {};
    \node[statedot] at (B) {};

    \node[font=\normalsize] at (0,0.82) {target state};
    \node[font=\normalsize] at (5.4,0.82) {dynamical state};
    \node[font=\normalsize] at (2.7,-4.12) {initial state};

    \node[font=\large] at (0,0.4) {$\hat{\sigma}^{\,}_\lambda$};
    \node[font=\large] at (5.4,0.4) {$\hat{\rho}^{\,}_\lambda$};
    \node[font=\large] at (2.7,-3.55) {$\hat{\rho}^{\,}_0$};

    \node[font=\large] at (2.7,0.3)
        {$F[\hat{\sigma}^{\,}_\lambda,\hat{\rho}^{\,}_\lambda]=?$};

    \node[font=\large, rotate=-49]
        at ($(L)!0.53!(B)+(-0.4,0.02)$)
        {$F[\hat{\sigma}^{\,}_\lambda,\hat{\rho}^{\,}_0]$};

    \node[font=\large, rotate=49]
        at ($(B)!0.52!(R)+(0.4,0.01)$)
        {QSL};

    \end{tikzpicture}
    
    \caption{
    Schematic illustration of the relation between the initial state $\hat{\rho}^{\,}_0$, the dynamical state $\hat{\rho}^{\,}_\lambda$, 
    and the target state $\hat{\sigma}^{\,}_\lambda$. 
    The quantity of main interest is the mixed-state fidelity \(F[\hat{\sigma}^{\,}_{\lambda},\hat{\rho}^{\,}_{\lambda}]\). 
    The right edge is bounded by a quantum speed limit (QSL) on the dynamical Hilbert--Schmidt angle between \(\hat{\rho}^{\,}_{0}\) and \(\hat{\rho}^{\,}_{\lambda}\), whereas the left edge is given by the fidelity \(F[\hat{\sigma}^{\,}_{\lambda},\hat{\rho}^{\,}_{0}]\).
    }
    \label{fig:mixed_state_triangle}
\end{figure}

\section{Liouville-space formulation and mixed-state fidelity}
\label{sec:liouville_space_formulation_and_mixed_state_fidelity}

We consider a closed system with initial density matrix \(\hat{\rho}^{\,}_{0}\), driven by a Hamiltonian \(\hat{H}^{\,}_{\lambda}\) with time-dependent control parameter \(\lambda=\lambda(t)\), where \(\lambda(0)=0\) and \(\lambda(t)>0\) for \(t>0\).
Let \(\hat{\rho}^{\,}_{\lambda}\) denote the dynamical state obtained from \(\hat{\rho}^{\,}_{0}\) under the unitary evolution generated by \(\hat{H}^{\,}_{\lambda(t)}\), and let \(\hat{\sigma}^{\,}_{\lambda}\) denote a target family satisfying \(\hat{\sigma}^{\,}_{\lambda(0)}=\hat{\rho}^{\,}_{0}\).
We seek to bound how closely the dynamical state \(\hat{\rho}^{\,}_{\lambda}\) can track the target state \(\hat{\sigma}^{\,}_{\lambda}\) as the control parameter varies.
See Fig.~\ref{fig:mixed_state_triangle} for an illustration of the relation among the three states \(\hat{\rho}^{\,}_{0}\), \(\hat{\rho}^{\,}_{\lambda}\), and \(\hat{\sigma}^{\,}_{\lambda}\).

In the pure-state setting, closeness can be quantified by the overlap between the two state vectors, and corresponding bounds can be derived by projecting onto the subspace associated with the initial state~\cite{Lychkovskiy:2016pfb,Chen:2021gbb,Chen:2022qyb}.
For mixed states, however, there are several inequivalent notions of fidelity and distance~\cite{Jozsa94,Nielsen00,Wang_2008,Wilde17,Liang19}.
Common choices, such as the Uhlmann fidelity~\cite{uhlmann76} and the trace distance~\cite{Holevo73,Helstrom76,FuchsVanDeGraaf99}, are often difficult to evaluate for generic many-body states, since they involve matrix square roots or trace norms, typically requiring diagonalization or singular-value decomposition.

To generalize the projection-operator approach of Refs.~\cite{Chen:2021gbb,Chen:2022qyb}, we work in {\it Liouville-space}~\cite{Ernst87,Fick90,Gyamfi_2020}, where an operator \(\hat{A}\) is represented as a vector \(\HSket{A}\).
This space is endowed with the {\it Hilbert--Schmidt inner product}
\(\HSinner{A}{B}:=\Tr\big[\hat{A}^{\dag}\hat{B}\big]\)
and the induced {\it Hilbert--Schmidt norm} \(\HSnorm{A}:=\sqrt{\HSinner{A}{A}}\).
In particular, for a density matrix \(\hat{\rho}\), one has \(\HSnorm{\rho}^{2}=\Tr[\hat{\rho}^{2}]\), namely, the purity.

For any two density matrices \(\hat{\rho}\) and \(\hat{\sigma}\), we define the fidelity in Liouville-space, analogously to the pure-state case, as the squared overlap of the normalized Liouville vectors,
\begin{align}
F[\hat{\rho},\hat{\sigma}]
:=\left(\frac{|\HSinner{\rho}{\sigma}|}{\HSnorm{\rho}\HSnorm{\sigma}}\right)^2
=\frac{\big(\Tr[\hat{\rho}\hat{\sigma}]\big)^{2}}{\Tr[\hat{\rho}^{2}]\,\Tr[\hat{\sigma}^{2}]}\,,
\label{eq:mixed-state_fidelity}
\end{align}
which motivates the term \textit{Hilbert--Schmidt fidelity}~\footnote{The Hilbert--Schmidt fidelity defined in this work is the square of the {\it geometric mean fidelity} defined in Eq.\ (2.11) of Ref.~\cite{Liang19} and the square of the {\it operator fidelity} introduced in Eq.\ (5) of Ref.~\cite{Wang_2008}.}.
Note that, by the Cauchy--Schwarz inequality, one has \(0\le F[\hat{\rho},\hat{\sigma}]\le 1\), with \(F[\hat{\rho},\hat{\sigma}]=1\) if and only if \(\hat{\rho}=\hat{\sigma}\).

\section{Mixed-state quantum speed limit and fidelity bounds}
\label{sec:mixed_state_quantum_speed_limit_and_fidelity_bounds}

The Hilbert--Schmidt fidelity between the initial state \(\hat{\rho}^{\,}_{0}\) and the dynamical state \(\hat{\rho}^{\,}_{\lambda}\), namely
$F[\hat{\rho}^{\,}_{0},\hat{\rho}^{\,}_{\lambda}],$ plays a central role in what follows.
For convenience, we introduce the dynamical {\it Hilbert--Schmidt angle}
\(\Theta^{\,}_{\lambda} := \arccos\sqrt{F[\hat{\rho}^{\,}_{0},\hat{\rho}^{\,}_{\lambda}]}\),
which quantifies the separation between \(\hat{\rho}^{\,}_{0}\) and \(\hat{\rho}^{\,}_{\lambda}\).
Analogous to the role of the Fubini--Study angle~\cite{Anandan90} in the pure-state case, \(\Theta^{\,}_{\lambda}\) for a closed system is bounded from above by a mixed-state quantum speed limit (QSL) inequality (see App.~\ref{sec: derivation_mixed_QSL}):
\begin{subequations}
\label{eq: QSL mixed states}
\begin{align}
\Theta^{\,}_\lambda \leq \mathcal{R}(\lambda),
\;
\mathcal{R}(\lambda) :=
\int^{\lambda}_{0} 
\frac{d\lambda'}{\bigl|\partial^{\,}_{t}\lambda'\bigr|}
\sqrt{
2\,I^{\,}_{\mathrm{WY}}\!\left(\hat{\tilde{\rho}}^{\,}_{0},\hat{H}^{\,}_{\lambda'}\right)
},
\label{eq: QSL mixed states a}
\end{align}
where $\hat{\tilde{\rho}}^{\,}_{0}:=\hat{\rho}^{2}_{0}/\Tr[\hat{\rho}^{2}_{0}]$ is the escort density matrix of order-2~\cite{Naudts2005}, and
\begin{align}
I^{\,}_{\mathrm{WY}}\!\left(\hat{\tilde{\rho}}^{\,}_{0},\hat{H}^{\,}_{\lambda}\right)
&:= 
\frac{1}{2}\left\|\bigl[\hat{\tilde{\rho}}^{1/2}_{0},\hat{H}^{\,}_{\lambda}\bigr]\right\|^{2}_{\mathrm{HS}}
\nonumber\\
&=
\Tr\!\left[\hat{\tilde{\rho}}^{\,}_{0}\,\hat{H}^{2}_{\lambda}\right]
-\Tr\!\left[\hat{\tilde{\rho}}^{1/2}_{0}\hat{H}^{\,}_{\lambda}\hat{\tilde{\rho}}^{1/2}_{0}\hat{H}^{\,}_{\lambda}\right]
\label{eq: QSL mixed states b}
\end{align}
\end{subequations}
is the {\it Wigner--Yanase skew information}~\cite{Wigner1963} of the state $\hat{\tilde{\rho}}^{\,}_{0}$ with respect to the Hamiltonian $\hat{H}^{\,}_{\lambda}$.
In this formulation, \(I^{\,}_{\mathrm{WY}}(\hat{\tilde{\rho}}^{\,}_{0},\hat{H}^{\,}_{\lambda})\) quantifies the noncommutativity between the escort state 
\(\hat{\tilde{\rho}}^{\,}_{0}\) and the Hamiltonian \(\hat{H}^{\,}_{\lambda}\), while \(\sqrt{2\,I^{\,}_{\mathrm{WY}}(\hat{\tilde{\rho}}^{\,}_{0},\hat{H}^{\,}_{\lambda})}\) bounds the instantaneous ``speed'' associated with \(\Theta^{\,}_{\lambda}\).

Three remarks on Eq.~\eqref{eq: QSL mixed states} are in order.
First, Eq.~\eqref{eq: QSL mixed states} does not reduce to the standard Mandelstam--Tamm bound~\cite{Mandelstam45} in the pure-state limit: if \(\hat{\rho}^{\,}_{0}\) is pure, the integral \(\mathcal{R}(\lambda)\) exceeds the pure-state result by a factor of \(\sqrt{2}\), indicating that this mixed-state QSL need not be optimal.
This suboptimality is immaterial for our purposes, since we will use Eq.~\eqref{eq: QSL mixed states} only to extract the scaling form of the adiabaticity-breakdown 
condition.
Second, when the Hamiltonian is time-independent, Eq.~\eqref{eq: QSL mixed states} reduces to a form consistent with the result 
of Ref.~\cite{BolonekLason2021}, obtained there by a different approach.
Here, by contrast, Eq.~\eqref{eq: QSL mixed states} is derived within the Liouville-space formulation for general time-dependent Hamiltonians.
Third, while many mixed-state QSLs have been proposed (see, e.g., Refs.~\cite{Taddei13,Campo13,Deffner13,Uzdin16,Pires16,Pires18,Funo19,Ilin21b,Srivastav25}), adopting the Hilbert--Schmidt angle as our distance measure naturally leads to the corresponding Liouville-space QSL derived here.
To the best of our knowledge, this result does not appear to have been reported previously, apart from the time-independent special case of Ref.~\cite{BolonekLason2021} noted above and related QSL bounds in Ref.~\cite{Pires18} [see Eqs.~(4) and (6) therein].

The fidelity between the dynamical state \(\hat{\rho}^{\,}_{\lambda}\) and the target state \(\hat{\sigma}^{\,}_{\lambda}\), namely \(F[\hat{\sigma}^{\,}_{\lambda},\hat{\rho}^{\,}_{\lambda}]\), is the quantity we aim to bound.
Using a projection-operator approach similar to that in the pure-state case~\cite{Chen:2021gbb}, together with the QSL inequality~\eqref{eq: QSL mixed states}, we obtain (see App.~\ref{sec:derivation_of_mixed_state_fidelity_bounds_ref_eq_fidelity_bound})
\begin{subequations}
\label{eq: fidelity bound}
\begin{align}
    \bigl|F[\hat{\sigma}^{\,}_{\lambda},\hat{\rho}^{\,}_{\lambda}] 
          - F[\hat{\sigma}^{\,}_{\lambda},\hat{\rho}^{\,}_{0}]\bigr| 
    &\leq \sin\bigl(\widetilde{\mathcal{R}}(\lambda)\bigr),
    \label{eq: fidelity bound a}
\end{align}
and
\begin{align}
    \bigl|F[\hat{\sigma}^{\,}_{\lambda},\hat{\rho}^{\,}_{\lambda}] 
          - F[\hat{\sigma}^{\,}_{\lambda},\hat{\rho}^{\,}_{0}]\bigr|
    &\leq g(\lambda),
    \label{eq: fidelity bound b}
\end{align}
\end{subequations}
where \(F[\hat{\sigma}^{\,}_{\lambda},\hat{\rho}^{\,}_{0}]\) is the Hilbert--Schmidt fidelity between the initial state \(\hat{\rho}^{\,}_{0}\) and the target state \(\hat{\sigma}^{\,}_{\lambda}\).
Here, \(g(\lambda)\) is an auxiliary function defined as
\begin{subequations}
\begin{align}
    g(\lambda) &:= g^{\,}_{1}(\lambda)+g^{\,}_{2}(\lambda),\\
    g^{\,}_{1}(\lambda) &:= \sin^{2}\!\widetilde{\mathcal{R}}(\lambda)\,
        \bigl|1-2F[\hat{\sigma}^{\,}_{\lambda},\hat{\rho}^{\,}_{0}]\bigr|,\\
    g^{\,}_{2}(\lambda) &:= \sin\!\bigl(2\widetilde{\widetilde{\mathcal{R}}}\!(\lambda)\bigr)\,
        \sqrt{F[\hat{\sigma}^{\,}_{\lambda},\hat{\rho}^{\,}_{0}]}\,
        \sqrt{1-F[\hat{\sigma}^{\,}_{\lambda},\hat{\rho}^{\,}_{0}]},
    \label{eq: fidelity bound c}
\end{align}
with
\begin{align}
    \widetilde{\mathcal{R}}(\lambda) 
    := \min\!\left(\mathcal{R}(\lambda),\frac{\pi}{2}\right),
    \qquad
    \widetilde{\widetilde{\mathcal{R}}}(\lambda) 
    := \min\!\left(\mathcal{R}(\lambda),\frac{\pi}{4}\right).
    \nonumber
\end{align}
\end{subequations}
By construction, the bound in Eq.~\eqref{eq: fidelity bound a} is strictly weaker than that in Eq.~\eqref{eq: fidelity bound b}.

In the rest of this work, we focus on Hamiltonians of the form
\begin{align}
  \hat{H}^{\,}_{\lambda} = \hat{H}^{\,}_{0} + \lambda\,\hat{V},
  \label{eq: define interpolating Hamiltonian}
\end{align}
where \(\hat{H}^{\,}_{0}\) and \(\hat{V}\) are time-independent Hermitian operators with $[\hat{H}^{\,}_{0},\hat{V}]\neq0$.
Assuming that the initial state is stationary with respect to the initial Hamiltonian, i.e., \([\hat{\rho}^{\,}_{0},\hat{H}^{\,}_{0}]=0\), and that the driving rate \(\Gamma := \partial^{\,}_{t}\lambda\) is a positive constant, the QSL integral \(\mathcal{R}(\lambda)\) in Eq.~\eqref{eq: QSL mixed states} simplifies to
\begin{align}
\mathcal{R}(\lambda) =
\frac{\lambda^{2}}{2\Gamma}\,\delta V, 
\qquad 
\delta V := \sqrt{2\,I^{\,}_{\mathrm{WY}}\!\left(\hat{\tilde{\rho}}^{\,}_{0},\hat{V}\right)}.
\label{eq: QSL mixed states simplified}
\end{align}
Here, \(\delta V\) quantifies the quantum fluctuation of the driving term \(\hat{V}\) in the initial state \(\hat{\rho}^{\,}_{0}\) and will play a key role below.

\section{Quantum adiabaticity at finite temperature}
\label{sec:quantum_adiabaticity_at_finite_temperature}

Although the inequalities derived in Eq.~\eqref{eq: fidelity bound} are valid for arbitrary unitary dynamics,
we now specialize to finite-temperature adiabatic evolution and show how the mixed-state fidelity bounds in Eq.~\eqref{eq: fidelity bound} constrain the resulting dynamics.
We assume that the initial state $\hat{\rho}^{\,}_{0}$ is a Gibbs (thermal) state of the form
\begin{align}
    \hat{\rho}^{\,}_{0}(\beta)
   := \frac{1}{Z^{\,}_{0}} \sum_{n\geq 0} e^{-\beta E^{(0)}_n} |E^{(0)}_{n}\rangle \langle E^{(0)}_{n}|,
    \quad 
    Z^{\,}_{0} := \sum_{n\geq0} e^{-\beta E^{(0)}_{n}},
    \label{eq: initial Gibbs state}
\end{align}
where $\beta$ is the inverse temperature and $|E^{(0)}_{n}\rangle$ denotes an eigenstate of the initial Hamiltonian $\hat{H}^{\,}_{0}$ with eigenvalue $E^{(0)}_{n}$.

Since adiabatic evolution transports each initial eigenstate $|E^{(0)}_n\rangle$ to its instantaneous counterpart, we define the target state $\hat{\sigma}^{\,}_{\lambda}$ as the {\it quasi-Gibbs state}~\cite{Skelt_2020,Ilin_2021}.
It is obtained by adiabatically transporting the eigenbasis of the initial Gibbs state~\eqref{eq: initial Gibbs state} to the instantaneous eigenbasis of $\hat{H}^{\,}_{\lambda}$ [Eq.~\eqref{eq: define interpolating Hamiltonian}] while keeping the initial Boltzmann weights fixed:
\begin{align}
    \hat{\sigma}^{\,}_{\lambda}(\beta) 
    := \frac{1}{Z^{\,}_{0}} \sum_{n\geq0} e^{-\beta E^{(0)}_n} |E^{\,}_n(\lambda)\rangle\langle E^{\,}_n(\lambda)|,
    \label{eq: quasi-Gibbs state}
\end{align}
where \(\ket{E^{\,}_{n}(\lambda)}\) are the instantaneous eigenstates of \(\hat{H}^{\,}_{\lambda}\) [Eq.~\eqref{eq: define interpolating Hamiltonian}], and \(E^{(0)}_n \equiv E^{\,}_{n}(\lambda=0)\).

For finite-temperature driving, we quantify the closeness of the dynamical state \(\hat{\rho}^{\,}_{\lambda}\) to the quasi-Gibbs target state \(\hat{\sigma}^{\,}_{\lambda}(\beta)\) [Eq.~\eqref{eq: quasi-Gibbs state}] via their Hilbert--Schmidt fidelity [Eq.~\eqref{eq:mixed-state_fidelity}],
\begin{align}
\mathcal{F}(\lambda)
:=F[\hat{\sigma}^{\,}_{\lambda}(\beta),\hat{\rho}^{\,}_{\lambda}]
=
\frac{\big(\Tr[\hat{\rho}^{\,}_{\lambda}\hat{\sigma}^{\,}_{\lambda}(\beta)]\big)^2}
{\Tr[\hat{\rho}^{2}_{\lambda}]\,\Tr[\hat{\sigma}^{2}_{\lambda}(\beta)]},
\label{eq:adiabatic_fidelity_defined}
\end{align}
which we refer to as the {\it adiabatic fidelity}.
Following Refs.~\cite{Lychkovskiy:2016pfb,Chen:2022qyb}, we diagnose adiabaticity by introducing an adiabatic mean-free path $\lambda^{\,}_{\mathrm{ad}}$, 
defined by the condition that $\mathcal{F}(\lambda)\ge e^{-1}$ for \(0\le \lambda\le \lambda^{\,}_{\mathrm{ad}}\).
Combining this criterion with the fidelity bounds~\eqref{eq: fidelity bound} yields an upper bound on the driving rate, \(\Gamma\le \Gamma^{\,}_{\mathrm{th}}\), where the {\it threshold driving rate} \(\Gamma^{\,}_{\mathrm{th}}\) takes the form
\begin{align}
\Gamma^{\,}_{\mathrm{th}}
:= \frac{\delta V}{\chi^{\,}_{\mathrm{F}}}\,\alpha,
\label{eq: define threshold driving rate}
\end{align}
with \(\alpha=\mathcal{O}(1)\)~\cite{Chen:2021gbb} for both bounds in Eq.~\eqref{eq: fidelity bound}. 
In this expression, \(\delta V\) is the quantum fluctuation given by Eq.~\eqref{eq: QSL mixed states simplified}, and
\begin{align}
\chi^{\,}_{\mathrm{F}}
:= -\frac{\partial^{2}\ln\mathcal{C}(\lambda)}{\partial \lambda^{2}}\Big|_{\lambda=0},
\label{eq: define fidelity susceptiblity}
\end{align}
is the (Hilbert--Schmidt) mixed-state {\it fidelity susceptibility}.%
~\footnote{
Fidelity susceptibility serves as a standard probe of quantum phase transitions~\cite{Zanardi_2006,Zanardi_2007,You_2007,GU_2010}.
For mixed states, the Hilbert--Schmidt fidelity used in this work provides a tractable alternative to the Uhlmann fidelity~\cite{Zanardi_2007,You_2007}.
}
The quantity
\begin{align}
\mathcal{C}(\lambda)
:=F[\hat{\rho}^{\,}_{0}(\beta),\hat{\sigma}^{\,}_{\lambda}(\beta)]
=
\frac{\big(\Tr[\hat{\rho}^{\,}_{0}(\beta)\hat{\sigma}^{\,}_{\lambda}(\beta)]\big)^2}
{\Tr[\hat{\rho}^{2}_{0}(\beta)]\,\Tr[\hat{\sigma}^{2}_{\lambda}(\beta)]},
\label{eq: thermal state overlap defined}
\end{align}
is the Hilbert--Schmidt fidelity between the initial Gibbs state \(\hat{\rho}^{\,}_{0}(\beta)\) [Eq.~\eqref{eq: initial Gibbs state}] and the quasi-Gibbs state \(\hat{\sigma}^{\,}_{\lambda}(\beta)\) [Eq.~\eqref{eq: quasi-Gibbs state}], which we refer to as the {\it thermal-state overlap}.

For comparison, we denote the zero-temperature (pure-state) counterpart of Eq.~\eqref{eq: define threshold driving rate} by
\begin{align}
\Gamma^{\,}_{N} := 
\frac{\delta V^{(0)}}{\chi^{(0)}_{\mathrm{F}}}\,\alpha,
\label{eq: threshold driving rate at zero temperature}
\end{align}
where $\delta V^{(0)}$ and $\chi^{(0)}_{\mathrm{F}}$ are the ground-state counterparts of $\delta V$ and $\chi^{\,}_{\mathrm{F}}$, respectively.
They are obtained by replacing the initial state $\hat{\rho}^{\,}_{0}$ with $|E^{(0)}_{0}\rangle\langle E^{(0)}_{0}|$ in Eq.~\eqref{eq: QSL mixed states simplified} 
and the target state $\hat{\sigma}^{\,}_{\lambda}$ with $\dyad{E^{\,}_{0}(\lambda)}$ in Eq.~\eqref{eq: define fidelity susceptiblity}.%
~\footnote{
Explicitly,
\begin{subequations}
\label{eq: pure state counterpart}
\begin{align}
\delta V^{(0)}
&:= \sqrt{2}\,\sqrt{\langle E^{(0)}_{0}|\hat{V}^{2}|E^{(0)}_{0}\rangle
   -\langle E^{(0)}_{0}|\hat{V}|E^{(0)}_{0}\rangle^{2}},
\label{eq: pure state counterpart a}
\\ 
\chi^{(0)}_{\mathrm{F}}
&:= -\frac{\partial^{2} \ln\mathcal{C}^{(0)}(\lambda)}{\partial \lambda^{2}}
   \Big|_{\lambda=0},
\label{eq: pure state counterpart b}
\end{align}
\end{subequations}
where \(\mathcal{C}^{(0)}(\lambda)
:=|\langle E^{(0)}_{0}|E^{\,}_{0}(\lambda)\rangle|^{4}\) is the ground-state counterpart of the thermal-state overlap \(\mathcal{C}(\lambda)\) in Eq.~\eqref{eq: thermal state overlap defined}.
In the thermodynamic limit, \(\mathcal{C}^{(0)}(\lambda)\sim \exp(-\chi^{(0)}_{\mathrm{F}}\lambda^{2}/2)\) with \(\chi^{(0)}_{\mathrm{F}}\asymp N\), a behavior known as the \emph{generalized orthogonality catastrophe}~\cite{Lychkovskiy:2016pfb}, in analogy with Anderson’s orthogonality catastrophe~\cite{Anderson1967Infrared}.
}
For typical gapped systems with the driving term $\hat{V}$ being a sum of local operators, one has~\cite{Lychkovskiy:2016pfb} $\delta V^{(0)}\asymp \sqrt{N}$ and $\chi^{(0)}_{\mathrm{F}}\asymp N$, and thus $\Gamma^{\,}_{N}\asymp 1/\sqrt{N}$%
\footnote{
Throughout, we use ``\(\asymp\)'' to indicate the same scaling with \(N\) up to an \(N\)-independent prefactor; i.e., \(A^{\,}_N\asymp B^{\,}_N\) means that \(A^{\,}_N/B^{\,}_N\) remains bounded away from zero and infinity as \(N\to\infty\).
},
i.e., the zero-temperature threshold driving rate decreases with increasing system size \(N\).
This scaling provides a zero-temperature reference for assessing finite-temperature effects.

\section{Universal temperature scaling of the threshold driving rate}
\label{sec:universal_temperature_scaling_of_the_threshold_driving_rate}

The temperature dependence of the threshold driving rate $\Gamma^{\,}_{\mathrm{th}}$
[Eq.~\eqref{eq: define threshold driving rate}] is the main focus of this work.
For a broad class of models (specified below) in the setup defined by
Eqs.~\eqref{eq: define interpolating Hamiltonian}, \eqref{eq: initial Gibbs state}, and
\eqref{eq: quasi-Gibbs state},
we find that, in the thermodynamic limit ($N\to\infty$), the threshold driving rate factorizes as
\begin{align}
\Gamma^{\,}_{\mathrm{th}} \sim \Gamma^{\,}_{N} f(\beta),
\label{eq: factorization of threshold driving rate}
\end{align}
where \(\Gamma^{\,}_{N}\) is the zero-temperature (pure-state) threshold driving rate
defined in Eq.~\eqref{eq: threshold driving rate at zero temperature}, and
\(f(\beta)\) captures the finite-temperature dependence.
In the remainder of this section, we analyze the scaling of \(f(\beta)\) as a function of the inverse temperature \(\beta\).
Our main result is summarized in the following theorem (see also Table~\ref{tab:driving-rate-scaling}).

\begin{theorem}[Temperature scaling of \(\Gamma^{\,}_{\mathrm{th}}\)]
\label{thm:GammaThTempScaling}
Consider a broad class of local Hamiltonians in gapped phases, i.e., Hamiltonians that can be written as sums of local operators and possess a nonzero spectral gap above the ground state.
In the thermodynamic limit \(N\to\infty\), the threshold driving rate $\Gamma^{\,}_{\mathrm{th}}$ factorizes as in Eq.~\eqref{eq: factorization of threshold driving rate}, with the temperature-dependent factor \(f(\beta)\) obeying the asymptotic scaling forms
\begin{subequations}
\label{eq: model-independent scaling}
\begin{align}
&\text{Low-temperature regime:}\quad 
f(\beta)\simeq 1+c^{\,}_{1}e^{-\beta\Delta}, 
\label{eq: model-independent scaling low temp}
\\ 
&\text{High-temperature regime:}\quad 
f(\beta)\simeq c^{\,}_{2}/\beta.
\label{eq: model-independent scaling high temp}
\end{align}
\end{subequations}
Here, \(c^{\,}_{1}\in(0,2]\) is a model-dependent dimensionless constant, whereas \(c^{\,}_{2}>0\) is a model-dependent constant with dimensions of inverse energy.
Moreover, $\Delta$ denotes the smallest excitation energy among excited eigenstates that couple to the ground state via $\hat V$.
\end{theorem}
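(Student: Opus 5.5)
The plan is to make \(\delta V\) and \(\chi^{\,}_{\mathrm{F}}\) explicit in the eigenbasis of \(\hat H^{\,}_0\), take their ratio in Eq.~\eqref{eq: define threshold driving rate}, and then read off the two temperature limits. Write \(p^{\,}_n:=e^{-\beta E^{(0)}_n}/Z^{\,}_0\), \(S:=\sum_n p_n^2=\Tr[\hat\rho_0^2]\), \(V^{\,}_{mn}:=\langle E^{(0)}_m|\hat V|E^{(0)}_n\rangle\) and \(\omega^{\,}_{mn}:=E^{(0)}_m-E^{(0)}_n\).

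\textbf{Step 1: the fluctuation \(\delta V\).} Since \(\hat{\tilde\rho}^{1/2}_0=\hat\rho^{\,}_0/\sqrt{S}\) is diagonal in this basis, the first line of Eq.~\eqref{eq: QSL mixed states b} gives
\begin{align}
\delta V^2=\frac{A}{S},\qquad A:=\sum_{m\neq n}(p^{\,}_m-p^{\,}_n)^2|V^{\,}_{mn}|^2 .
\nonumber
\end{align}

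\textbf{Step 2: the susceptibility \(\chi^{\,}_{\mathrm{F}}\).} Unitarity of the transport gives \(\Tr[\hat\sigma_\lambda^2]=S\). Hence \(\mathcal C(\lambda)=\bigl(\sum_{m,n}p^{\,}_mp^{\,}_n|\langle E^{(0)}_m|E^{\,}_n(\lambda)\rangle|^2\bigr)^2/S^2\). Insert second-order perturbation theory, assuming a nondegenerate \(\hat H^{\,}_0\) or a degenerate scheme that kills the zero-denominator terms. Symmetrizing the \(\lambda^2\) term then gives \(\sum p^{\,}_mp^{\,}_n|\cdot|^2=S-\tfrac{\lambda^2}{2}B+\mathcal O(\lambda^3)\), with \(B:=\sum_{m\neq n}(p^{\,}_m-p^{\,}_n)^2|V^{\,}_{mn}|^2/\omega_{mn}^2\). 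Therefore \(\chi^{\,}_{\mathrm F}=2B/S\) and
\begin{align}
\Gamma^{\,}_{\mathrm{th}}=\frac{\alpha}{2}\,\frac{\sqrt{A\,S}}{B}.
\nonumber
\end{align}
At \(\beta\to\infty\) this reduces to Eq.~\eqref{eq: threshold driving rate at zero temperature}: \(A/S\to 2\sum_{k}|V^{\,}_{k0}|^2\) and \(B/S\to 2\sum_k|V^{\,}_{k0}|^2/\omega_{k0}^2\). So I define \(f(\beta):=\Gamma^{\,}_{\mathrm{th}}/\Gamma^{\,}_N\) and only need its limiting behaviour.

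\textbf{Step 3: high temperature.} For \(\beta\|\hat H^{\,}_0\|\)-type small parameters (per local term), expand \(p^{\,}_m-p^{\,}_n\simeq-\beta\,\omega^{\,}_{mn}\bar p\), with \(\bar p\) the local average weight. This gives \(A\simeq\beta^2\sum\bar p^2\omega_{mn}^2|V^{\,}_{mn}|^2\) and \(B\simeq\beta^2\sum\bar p^2|V^{\,}_{mn}|^2\). The \(S\)-type normalizations then cancel at leading order, so \(\Gamma^{\,}_{\mathrm{th}}\propto\beta\cdot\beta^{-2}=1/\beta\). Concretely, at \(\beta\to0\) we get \(\Gamma^{\,}_{\mathrm{th}}\simeq \frac{\alpha}{2\beta}\sqrt{\Tr([\hat H_0,\hat V]^2)^{\ast}\,d}/\Tr([\hat V,\cdot]\ldots)\), built from infinite-temperature moments. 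Because \(\hat H^{\,}_0\) and \(\hat V\) are sums of local terms, both moments are extensive. Their ratio against \(\Gamma^{\,}_N\asymp N^{-1/2}\) is therefore \(N\)-independent after the \(\sqrt S\) factor (which equals \(d^{-1/2}\) at \(\beta=0\)) is combined with the \(\sqrt{N}\) from extensivity. This defines \(c^{\,}_2>0\).

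\textbf{Step 4: low temperature.} Keep the ground state plus the lowest excitations \(|k\rangle\) connected to it by \(\hat V\), i.e.\ those with \(\omega^{\,}_{k0}\ge\Delta\). Expand \(A\), \(B\), \(S\) to first order in \(e^{-\beta\Delta}\). Write \(x:=e^{-\beta\Delta}\). Terms \((p^{\,}_0-p^{\,}_k)^2\) acquire relative corrections \(-2x\) from the cross term. \(S\) and the excited-excited matrix elements contribute further \(\mathcal O(x)\) corrections. Collecting them gives \(f=1+c^{\,}_1x+o(x)\), with \(c^{\,}_1\) a combination of ratios of weighted sums of \(|V^{\,}_{k0}|^2\) and \(\omega_{k0}^{-2}\). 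Sandwiching these ratios between \(0\) and the single-mode value should give \(c^{\,}_1\in(0,2]\). The upper value \(2\) is saturated when every mode sees the same gap. That case is the free-fermion one, where \(\coth(\beta\Delta/2)=1+2x+\dots\).

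\textbf{Main obstacle: the thermodynamic limit.} The sums \(A\), \(B\), \(S\) are individually exponentially small or large in \(N\), and the naive low-temperature expansion breaks down once \(Nx\gtrsim1\). My plan is to use locality in the form of a quasiparticle (cluster) decomposition. I would treat \(\hat V=\sum_j\hat v^{\,}_j\) as creating local excitations, and let \(\hat\rho^{\,}_0\) approximately factorize over weakly correlated modes. Then \(A/S\) and \(B/S\) become sums of per-mode contributions, each equal to the zero-temperature contribution times a mode-dependent thermal factor. For quadratic models this is exact and yields \(f=\coth(\beta\Delta/2)\).

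For general gapped local models I would first try to prove this factorization from exponential clustering of the Gibbs state of the escort form \(\hat\rho_0^2\) at inverse temperature \(2\beta\). Clustering makes the connected correlators of \(\hat v^{\,}_j\) that build \(A\) and \(B\) (via the spectral representations \(A\propto\) a commutator correlator and \(B\propto\) an imaginary-time integrated correlator) extensive with an intensive \(\beta\)-dependent density. The ratio \(\sqrt{A S}/B\) then becomes \(N^{-1/2}\) times an intensive function of \(\beta\). That function is \(f\), and the two regimes of Steps 3–4 are its limits.
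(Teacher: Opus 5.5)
Your Steps~1--3 follow the paper's route. Your $\delta V^2=A/S$ and $\chi_{\mathrm F}=2B/S$ are equivalent to the representations derived in App.~\ref{sec:proof_of_theorem_ref_thm_gammathtempscaling}, and your degree-zero form $\Gamma_{\mathrm{th}}=\tfrac{\alpha}{2}\sqrt{AS}/B$ is a tidy way to see that the partition functions drop out. Your $\beta\to0$ expansion reproduces the paper's $c_2$, with the same extensivity counting:
\[
\Gamma_{\mathrm{th}}\simeq\frac{\alpha}{2\beta}\sqrt{d}\,\|[\hat H_0,\hat V]\|_{\mathrm{HS}}\Big/\sum_{m\neq n}|V_{mn}|^2 ,
\]
where the denominator is the off-diagonal weight of $\hat V$.

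The gap is in Step~4, which carries the only nontrivial claim, $c_1\in(0,2]$. You assert that a sandwich ``should give'' it, but you supply no inequality, and you name the wrong saturating case. Work with unnormalized weights $e^{-\beta\Delta_n}$, which homogeneity allows; then $S$ and the excited--excited sector drop out of the first-order coefficient. Carried out, your expansion gives $c_1=2r_\chi-r_\delta$ with
\begin{align}
r_\delta:=\frac{\sum_{k\in P}|V_{k0}|^2}{\sum_{k\neq0}|V_{k0}|^2},
\qquad
r_\chi:=\frac{\Delta^{-2}\sum_{k\in P}|V_{k0}|^2}{\sum_{k\neq0}|V_{k0}|^2\,\omega_{k0}^{-2}},
\nonumber
\end{align}
where $P$ is the set of coupled levels at energy $\Delta$. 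The missing idea is the pair of spectral inequalities the paper uses:
\[
\chi^{(0)}_{\mathrm F}\ge\frac{4|V_{10}|^2}{\Delta^2},
\qquad
\chi^{(0)}_{\mathrm F}\le\frac{2(\delta V^{(0)})^2}{\Delta^2},
\]
the second following from $\omega_{k0}\ge\Delta$. Together they say $r_\delta\le r_\chi\le1$, whence $r_\chi\le c_1\le 2r_\chi\le2$. This is the paper's $0<W=4b-a\le1$ with $c_1=2W$. In particular, equal gaps give $r_\delta=r_\chi=1$ and hence $c_1=1$, not $2$; this is what the paper finds for the MFIC. The upper end requires $\chi^{(0)}_{\mathrm F}$ to be dominated by the lowest coupled level while $(\delta V^{(0)})^2$ is dominated by higher ones.

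Your ``main obstacle'' paragraph rightly notes that $\beta\to\infty$ and $N\to\infty$ need not commute, but it contradicts Step~4 rather than repairing it. The TFIC value $c_1=2$ is not the equal-gap value of the coefficient above. At fixed $N$, App.~\ref{sec:tfic_qxyc_transfer_matrix} gives $f_N=1+e^{-4\beta J}+(N-\tfrac12)e^{-8\beta J}+\cdots$, i.e.\ $c_1=1$. The $2$ in $\coth(2\beta J)$ arises only from resumming the $N$-enhanced multi-excitation terms. So the intensive function your clustering programme would produce does not have Steps~3--4 as its limits, and the sandwich does not transfer to its low-temperature coefficient.

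It can in fact fail there. Take two decoupled TFIC chains with a common drive amplitude and couplings $J_1<J_2=RJ_1$. Then $\delta V^2$ and $\chi_{\mathrm F}$ are additive over the product Gibbs state. The closed forms of App.~\ref{sec:tfic_qxyc_transfer_matrix}, rescaled to the common drive amplitude, give after $N\to\infty$
\[
f=\sqrt{(t_1^2+t_2^2)/2}\,\frac{1+R^{-2}}{t_1^2+R^{-2}t_2^2},
\qquad t_i=\tanh(2\beta J_i).
\]
Hence $c_1=(3R^2-1)/(R^2+1)$ with $\Delta=4J_1$, which exceeds $2$ for $R>\sqrt{3}$. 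The fixed-order coefficient is exactly half of this.

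The bound $c_1\le2$ is thus a property of the fixed-order coefficient, and that is what the paper bounds. It expands in $e^{-\beta\Delta}$ first and bounds $W$ with the two inequalities above. It then argues that $|V_{10}|^2/(\delta V^{(0)})^2$ and $|V_{10}|^2/(\chi^{(0)}_{\mathrm F}\Delta^2)$ are $N$-independent, because $|V_{10}|\asymp\sqrt{N}$ for a delocalized lowest excitation. Your clustering argument is only sketched, and in the form proposed it cannot deliver the stated bound.
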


\begin{proof}[Proof (sketch).]
The low-temperature scaling form \eqref{eq: model-independent scaling low temp} is obtained by retaining the ground state together with the lowest excited sector that couples to it through $\hat V$, and by showing that $c^{}_{1} \in (0, 2]$ in the thermodynamic limit $N \to \infty$.
The high-temperature scaling form \eqref{eq: model-independent scaling high temp} follows from an expansion around the infinite-temperature (maximally mixed) state and showing that \(c^{\,}_{2}\) is positive and finite in the thermodynamic limit.
A complete proof is given in App.~\ref{sec:proof_of_theorem_ref_thm_gammathtempscaling}.
\end{proof}

These universal scaling forms [Eq.~\eqref{eq: model-independent scaling}] are consistent with the following physical picture.
In the infinite-temperature limit \(\beta\to 0\), the initial Gibbs state \eqref{eq: initial Gibbs state} approaches the maximally mixed state, which commutes with any Hamiltonian.
Therefore, under unitary evolution the dynamical state remains equal to the initial one, \(\hat{\rho}^{\,}_{\lambda}=\hat{\rho}^{\,}_{0}\) for all \(\lambda\).
At the same time, the quasi-Gibbs state \eqref{eq: quasi-Gibbs state} also tends to the maximally mixed state, so that \(\mathcal{F}(\lambda)=\mathcal{C}(\lambda)=1\) as \(\beta\to 0\).
Thus, in the infinite-temperature limit, both the adiabatic fidelity \(\mathcal{F}(\lambda)\) and the thermal-state overlap \(\mathcal{C}(\lambda)\) saturate their maximal value, independent of the driving rate, and the threshold \(\Gamma^{\,}_{\mathrm{th}}\) can be taken arbitrarily large.
For small but finite \(\beta\), deviations of the Boltzmann weights from their infinite-temperature values are of order \(\beta\), so one expects the maximal admissible driving rate to scale inversely with this parameter, consistent with the high-temperature behavior \(f(\beta)\propto \beta^{-1}\) in Eq.~\eqref{eq: model-independent scaling high temp}.
In other words, at high temperature the quasi-Gibbs target state becomes nearly maximally mixed and is therefore much less sensitive to the details of the driving.

In the opposite, zero-temperature limit \(\beta\to\infty\), the initial Gibbs state \eqref{eq: initial Gibbs state} and the quasi-Gibbs state \eqref{eq: quasi-Gibbs state} reduce to the initial and instantaneous ground states, respectively.
Accordingly, the finite-temperature threshold driving rate must reproduce the zero-temperature value, implying \(\lim_{\beta\to\infty} f(\beta)=1\).
In a gapped phase with excitation gap \(\Delta\), contributions of excited states to thermodynamic quantities at low but nonzero temperature are suppressed by Boltzmann factors \(e^{-\beta\Delta}\), so deviations of \(\delta V\) and \(\chi^{\,}_{\mathrm{F}}\) from their ground-state values are likewise exponentially small in \(\beta\Delta\).
It is therefore natural to expect the leading deviation of \(f(\beta)\) from unity at low but nonzero temperature to be proportional to \(e^{-\beta\Delta}\), consistent with Eq.~\eqref{eq: model-independent scaling low temp}.

\begin{figure}[t]
\centering
\begin{tikzpicture}
\begin{axis}[
  width=9cm,
  height=6.2cm,
  xmin=0, xmax=2,
  ymin=0, ymax=5,
  axis lines=box,
  xlabel={$ \beta J $},
  ylabel={},
  tick align=inside,
  minor tick num=1,
  legend style={
    at={(0.98,0.98)},
    anchor=north east,
    draw=none,
    fill=none,
    font=\small,
    row sep=2pt
  },
  samples=500,
  domain=0.001:2
]

\addplot[
  thick
] {cosh(2*x)/sinh(2*x)};
\addlegendentry{$\coth(2\beta J)$}

\addplot[
  color=exporange,
  line width=1.5pt,
  dash pattern=on 0pt off 3.6pt,
  line cap=round
] {1 + 2*exp(-4*x)};
\addlegendentry{$1 + 2e^{-4\beta J}$}

\addplot[
  color=invblue,
  line width=1.0pt,
  dash pattern=on 0pt off 2.0pt,
  line cap=rect
] {1/(2*x)};
\addlegendentry{$\dfrac{1}{2\beta J}$}

\end{axis}
\end{tikzpicture}
\caption{
Temperature-dependent factor $f(\beta)$ in the threshold driving rate $\Gamma^{\,}_{\mathrm{th}}$ [Eqs.~\eqref{eq: define threshold driving rate}, \eqref{eq: factorization of threshold driving rate}]
for the TFIC and QXYC [Eq.~\eqref{eq: V_models_def}] in the thermodynamic limit.
The solid curve shows the exact result $f(\beta)=\coth(2\beta J)$ [Eq.~\eqref{eq: exact Ising temperature dependence}],
while the dotted curves show the low- and high-temperature asymptotics,
$f(\beta)\simeq 1+2e^{-4\beta J}$ (low-temperature) and $f(\beta)\simeq 1/(2\beta J)$ (high-temperature).
}
\label{fig: Ising exact and approximation}
\end{figure}

\section{Spin-chain models as illustrations}
\label{sec:spin_chain_models_as_illustrations}

To interpolate between the low- and high-temperature scaling forms of $f(\beta)$ in Eq.~\eqref{eq: model-independent scaling} and to test our general predictions, we now turn to concrete spin-chain models: the transverse-field Ising chain (TFIC)~\cite{Lieb61,Pfeuty70,Sachdev11} and the quantum XY chain (QXYC)~\cite{Lieb61,Katsura62,Barouch70,Barouch71,Sachdev11,Franchini17}, both with periodic boundary conditions.
Both models take the form~\eqref{eq: define interpolating Hamiltonian} and share the same initial Ising Hamiltonian,
\(
\hat{H}^{\,}_{0} = -J\sum^{N}_{j=1} Z^{\,}_{j}Z^{\,}_{j+1},
\)
where \(J>0\) and \(Z^{\,}_{j}\) is the Pauli-\(Z\) operator acting on site \(j\).
The driving term \(\hat{V}\) differs between the two models:
\begin{subequations}
\label{eq: V_models_def}
\begin{align}
\hat{V}^{\,}_{\mathrm{TFIC}} &= -J\sum^{N}_{j=1} X^{\,}_{j},
\label{eq: V_TFIC_def}
\\
\hat{V}^{\,}_{\mathrm{QXYC}} &= -J\sum^{N}_{j=1}\!\left(X^{\,}_{j}X^{\,}_{j+1}-Z^{\,}_{j}Z^{\,}_{j+1}\right),
\label{eq: V_QXYC_def}
\end{align}
\end{subequations}
where \(X^{\,}_{j}\) is the Pauli-\(X\) operator acting on site \(j\).
For the TFIC we choose \(\lambda(t)=h(t)/J\) with initial transverse field \(h(0)=0\),
while for the QXYC we choose \(\lambda(t)=(1+\gamma(t))/2\) with initial anisotropy \(\gamma(0)=-1\).

Both models map, via a Jordan--Wigner transformation, to quadratic fermionic Hamiltonians and thus admit an analytic treatment.
In the fermionic formulation, finite-\(N\) expressions are typically most transparent as mode products, whereas compact closed forms often emerge only after taking \(N\to\infty\).
Here, we instead use a transfer-matrix method~\cite{Huang1987,Nishimori2010}, which yields closed-form expressions for the quantum fluctuation \(\delta V\) and the fidelity susceptibility \(\chi^{\,}_{\mathrm F}\) at finite \(N\).
Notably, the TFIC and QXYC give identical contributions to \(\delta V\) and \(\chi^{\,}_{\mathrm{F}}\), reflecting the close analogy between the corresponding site-flip and bond-flip terms.
As a result, the threshold driving rate [Eq.~\eqref{eq: define threshold driving rate}] can be written as (see App.~\ref{sec:tfic_qxyc_transfer_matrix}):
\(
\Gamma^{\,}_{\mathrm{th}}=\Gamma^{\,}_{N}\, f^{\,}_{N}(\beta),
\)
where \(\Gamma^{\,}_{N}:=\Gamma^{\,}_{\mathrm{th}}(\beta\to\infty)=4\sqrt{2}J\alpha/\sqrt{N}\) is the zero-temperature threshold driving rate and
\begin{align}
f^{\,}_{N}(\beta)=
\coth(2\beta J)\,
\left(
\frac{1+\tanh^N(2\beta J)}{1+\tanh^{N-2}(2\beta J)}
\right)^{1/2}
\label{eq: temp scaling finite N TFIC}
\end{align}
is the finite-temperature correction factor.
Since \(\tanh(2\beta J)<1\) for any fixed \(\beta>0\), taking \(N\to\infty\) at fixed \(\beta\) yields
\begin{align}
f(\beta):=\lim_{N\to\infty} f^{\,}_{N}(\beta)=\coth(2\beta J).
\label{eq: exact Ising temperature dependence}
\end{align}
Equation~\eqref{eq: exact Ising temperature dependence} further implies the low- and high-temperature expansions
\(f(\beta)=1+2e^{-4\beta J}+\cdots\) and \(f(\beta)=1/(2\beta J)+\cdots\), respectively.
This confirms the universal scaling forms in Eq.~\eqref{eq: model-independent scaling}, with excitation gap \(\Delta=4J\) (the energy cost of creating a pair of domain walls) and coefficients \(c^{\,}_{1}=2\) and \(c^{\,}_{2}=1/(2J)\).
A comparison between the exact factor \(f(\beta)\) [Eq.~\eqref{eq: exact Ising temperature dependence}] and its low- and high-temperature asymptotics is shown in Fig.~\ref{fig: Ising exact and approximation}.
Notably, the crossover regime not captured by either asymptotic expansion occurs in a relatively narrow temperature window.

While Eq.~\eqref{eq: exact Ising temperature dependence} implies that, for the two exactly solvable models (TFIC and QXYC), the temperature-dependent factor \(f(\beta)\) is monotonic for all \(\beta>0\), a further study (see App.~\ref{sec:mfic_transfer_matrix}) of a non-integrable model---the mixed-field Ising chain (MFIC)~\cite{Fogedby78,Sen00,Ovchinnikov03,Banuls11,Chiba24}---shows that this monotonicity need not hold in general.
The MFIC is defined by
\[
\hat{H}^{\,}_{0}=
\sum^{N}_{j=1}
\left(-JZ^{\,}_{j}Z^{\,}_{j+1}+B Z^{\,}_{j}\right),
\quad
\hat{V}^{\,}_{\mathrm{MFIC}}=-J\sum^{N}_{j=1}X^{\,}_{j},
\]
with \(\lambda(t)=h(t)/J\).
For arbitrary \(|B|/J\), the corresponding \(f(\beta)\) is not guaranteed to remain monotonic at intermediate temperatures.
Taken together, these three models verify the predicted low- and high-temperature scaling forms of \(f(\beta)\) in Eq.~\eqref{eq: model-independent scaling}, with \(f(\beta)\to 1\) as \(\beta\to\infty\) and \(f(\beta)\propto 1/\beta\) as \(\beta\to 0\).

Finally, as an additional application of the fidelity bounds in Eqs.~\eqref{eq: fidelity bound}, we show that they can be used to estimate the adiabatic fidelity $\mathcal{F}(\lambda)$ from the thermal-state overlap $\mathcal{C}(\lambda)$ and the quantum-speed-limit integral $\mathcal{R}(\lambda)$ [Eq.~\eqref{eq: QSL mixed states simplified}], without explicitly solving the unitary dynamics for $\hat{\rho}^{\,}_{\lambda}$, which is typically computationally costly and may be intractable for generic many-body systems.
We illustrate this idea using the TFIC model.
Figure~\ref{fig: fidelity bound} shows results for $N=10^{3}$ and $10^{4}$ under a linear ramp $h(t)=2J^2t$.
Throughout the evolution, the adiabatic fidelity $\mathcal{F}(\lambda)$ (cyan) is essentially coincident with the thermal-state overlap $\mathcal{C}(\lambda)$ (black), 
a phenomenon attributed to ``almost-orthogonality'' in large Hilbert spaces~\cite{Chen:2022qyb}.
The blue (resp., red) shaded region indicates the range of $\mathcal{F}(\lambda)$ allowed by inequality~\eqref{eq: fidelity bound a} (resp., inequality~\eqref{eq: fidelity bound b}).
These finite-temperature bounds are quantitatively similar to their zero-temperature counterparts~\cite{Lychkovskiy:2016pfb,Chen:2021gbb,Chen:2022qyb}, thereby 
showing that constraints on adiabatic fidelity for pure and mixed states can be formulated within a unified framework.

\begin{figure}[t]
    \centering
    \includegraphics[width=0.5\textwidth]{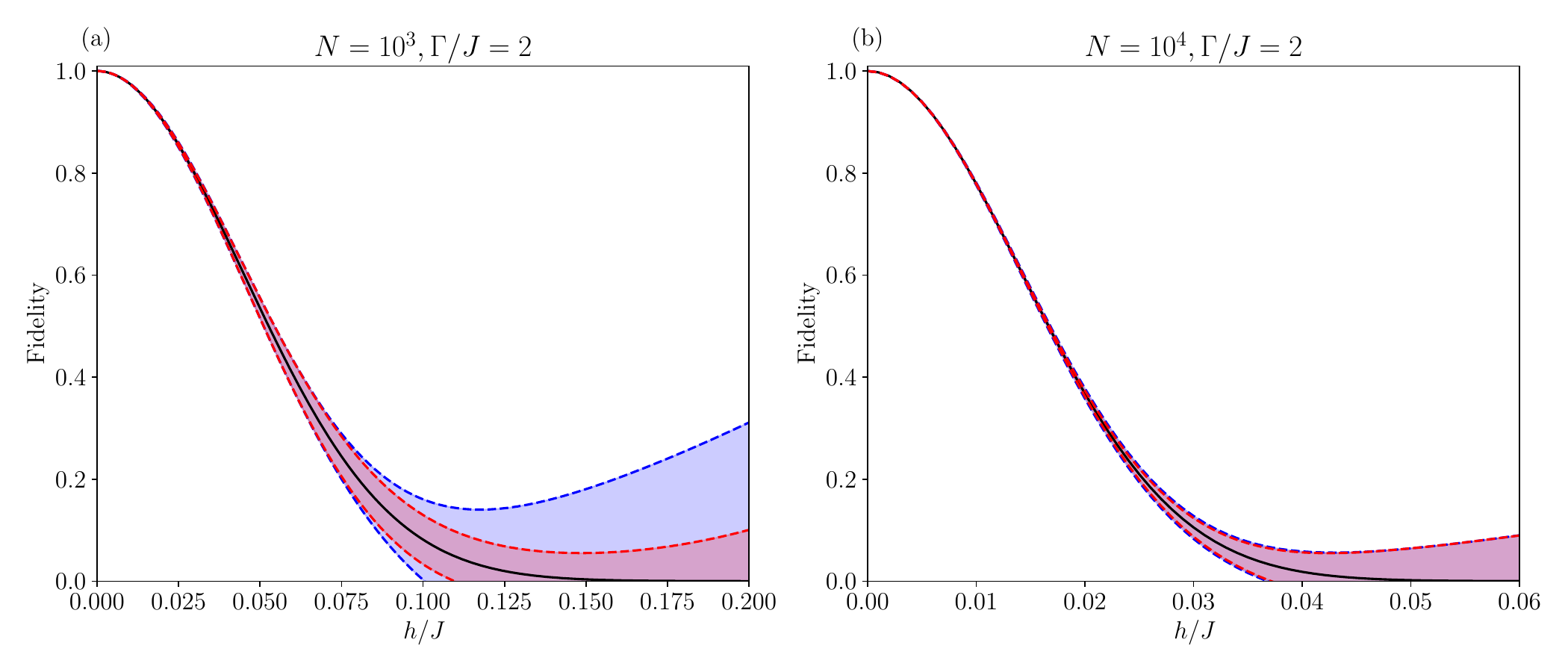}
    \caption{
    Adiabatic fidelity $\mathcal{F}(\lambda)$ [Eq.~\eqref{eq:adiabatic_fidelity_defined}] (cyan curve) and thermal-state overlap $\mathcal{C}(\lambda)$ [Eq.~\eqref{eq: thermal state overlap defined}] (black curve) for the driven transverse-field Ising chain
    $\hat{H}^{\,}_{\lambda}=\hat{H}^{\,}_{0}+\lambda \hat{V}^{\,}_{\mathrm{TFIC}}$ \eqref{eq: V_TFIC_def} at $\beta J=5$ and $\Gamma/J=2$, plotted as a function of $\lambda=h/J$.
    Panels (a) and (b) correspond to $N=10^{3}$ and $N=10^{4}$, respectively.
    Over the range shown, $\mathcal{F}(\lambda)$ and $\mathcal{C}(\lambda)$ are visually indistinguishable.
    The blue (resp., red) shaded region indicates the ranges of $\mathcal{F}(\lambda)$ allowed by inequality~\eqref{eq: fidelity bound a} (resp., \eqref{eq: fidelity bound b}).
    }
    \label{fig: fidelity bound}
\end{figure}

\section{Summary and outlook}
\label{sec:summary_and_outlook}

In summary, we developed a Liouville-space framework that combines a mixed-state quantum speed limit with fidelity susceptibility to derive explicit bounds on the mixed-state adiabatic fidelity. 
For protocols that start from a Gibbs state and drive toward a quasi-Gibbs target, these bounds yield a threshold driving rate \(\Gamma^{\,}_{\mathrm{th}}\) beyond which adiabatic following breaks down. 
For local Hamiltonians in gapped phases, we showed that \(\Gamma^{\,}_{\mathrm{th}}\) factorizes into the familiar zero-temperature system-size dependence and a universal temperature-dependent factor. 
The latter is exponentially close to unity at low temperature and scales linearly with temperature at high temperature. 
We validated this scaling in representative spin-\(1/2\) chains, for which \(\Gamma^{\,}_{\mathrm{th}}\) can be obtained in closed form. 
Our results provide a broadly applicable quantitative criterion for finite-temperature adiabaticity in driven many-body systems under unitary evolution, and offer a useful benchmark for assessing how the adiabatic window changes with temperature in finite-temperature experiments.

A natural next step is to extend the present framework to open systems governed by Lindblad dynamics. Such an extension could provide a rigorous tool for assessing adiabaticity in the presence of decoherence, with applications to adiabatic preparation of both ground and thermal states~\cite{granet2025,shirai2025}  in quantum simulators and quantum annealers.

\acknowledgments{
This work was funded by the National Science and Technology Council (NSTC) of Taiwan under Grant No.\ 113-2112-M-008-037-MY3.
}

\bibliographystyle{quantum-inits}
\bibliography{ref}

@article{Chen:2021gbb,
    author = "Chen, Jyong-Hao and Cheianov, Vadim",
    title = "{Bounds on quantum adiabaticity in driven many-body systems from generalized orthogonality catastrophe and quantum speed limit}",
    doi = "10.1103/PhysRevResearch.4.043055",
    journal = "Phys. Rev. Res.",
    volume = "4",
    number = "4",
    pages = "043055",
    year = "2022"
}

@article{Lychkovskiy:2016pfb,
    author = "Lychkovskiy, Oleg and Gamayun, Oleksandr and Cheianov, Vadim",
    title = "{Time scale for adiabaticity breakdown in driven many-body systems and orthogonality catastrophe}",
    archivePrefix = "arXiv",
    primaryClass = "cond-mat.quant-gas",
    doi = "10.1103/PhysRevLett.119.200401",
    journal = "Phys. Rev. Lett.",
    volume = "119",
    number = "20",
    pages = "200401",
    year = "2017",
    note = "[Erratum: Phys.Rev.Lett. 129, 119902 (2022)]"
}

@article{Chen:2022qyb,
   title={{Quantum adiabaticity in many-body systems and almost-orthogonality in complementary subspace}},
   author={Jyong-Hao Chen and Vadim Cheianov},
   journal={SciPost Phys. Core},
   volume={8},
   pages={084},
   year={2025},
   publisher={SciPost},
   doi={10.21468/SciPostPhysCore.8.4.084},
   url={https://scipost.org/10.21468/SciPostPhysCore.8.4.084},
}

@article{Uzdin16,
   title={Speed limits in Liouville space for open quantum systems},
   volume={115},
   ISSN={1286-4854},
   url={http://dx.doi.org/10.1209/0295-5075/115/40003},
   DOI={10.1209/0295-5075/115/40003},
   number={4},
   journal={EPL (Europhysics Letters)},
   publisher={IOP Publishing},
   author={Uzdin, Raam and Kosloff, Ronnie},
   year={2016},
   month=aug, pages={40003} }

@article{Funo19,
   title={Speed limit for open quantum systems},
   volume={21},
   ISSN={1367-2630},
   url={http://dx.doi.org/10.1088/1367-2630/aaf9f5},
   DOI={10.1088/1367-2630/aaf9f5},
   number={1},
   journal={New Journal of Physics},
   publisher={IOP Publishing},
   author={Funo, Ken and Shiraishi, Naoto and Saito, Keiji},
   year={2019},
   month=jan, pages={013006} }

@article{Ilin21b,
  author    = {Il'in, Nikolai and Lychkovskiy, Oleg},
  title     = {Quantum speed limit for thermal states},
  journal   = {Physical Review A},
  volume    = {103},
  number    = {6},
  pages     = {062204},
  year      = {2021},
  month     = jun,
  doi       = {10.1103/PhysRevA.103.062204},
  url       = {https://doi.org/10.1103/PhysRevA.103.062204},
  publisher = {American Physical Society}
}

@article{Srivastav25,
  title = {Family of exact and inexact quantum speed limits for completely positive and trace-preserving dynamics},
  author = {Srivastav, Abhay and Pandey, Vivek and Mohan, Brij and Pati, Arun Kumar},
  journal = {Phys. Rev. A},
  volume = {112},
  issue = {5},
  pages = {052204},
  numpages = {18},
  year = {2025},
  month = {Nov},
  publisher = {American Physical Society},
  doi = {10.1103/npkr-c4vf},
  url = {https://link.aps.org/doi/10.1103/npkr-c4vf}
}

@article{Ilin_2021,
  author    = {Il'in, Nikolai and Aristova, Anastasia and Lychkovskiy, Oleg},
  title     = {Adiabatic theorem for closed quantum systems initialized at finite temperature},
  journal   = {Physical Review A},
  volume    = {104},
  number    = {3},
  pages     = {L030202},
  year      = {2021},
  month     = sep,
  doi       = {10.1103/PhysRevA.104.L030202},
  url       = {https://doi.org/10.1103/PhysRevA.104.L030202},
  publisher = {American Physical Society}
}

@article{Skelt_2020,
  author    = {Skelt, Amy H. and D'Amico, Irene},
  title     = {Characterizing Adiabaticity in Quantum Many-Body Systems at Finite Temperature},
  journal   = {Advanced Quantum Technologies},
  volume    = {3},
  number    = {7},
  pages     = {1900139},
  year      = {2020},
  month     = may,
  doi       = {10.1002/qute.201900139},
  url       = {https://doi.org/10.1002/qute.201900139},
  publisher = {Wiley}
}

@article{Pfeuty70,
  author  = {Pfeuty, Pierre},
  title   = {The one-dimensional Ising model with a transverse field},
  journal = {Annals of Physics},
  volume  = {57},
  number  = {1},
  pages   = {79},
  year    = {1970},
  doi     = {10.1016/0003-4916(70)90270-8}
}

@article{Lieb61,
  author  = {Lieb, Elliott and Schultz, Theodore and Mattis, Daniel},
  title   = {Two soluble models of an antiferromagnetic chain},
  journal = {Annals of Physics},
  volume  = {16},
  number  = {3},
  pages   = {407},
  year    = {1961},
  doi     = {10.1016/0003-4916(61)90115-4}
}

@book{Sachdev11,
  author    = {Sachdev, Subir},
  title     = {Quantum Phase Transitions},
  edition   = {2},
  publisher = {Cambridge University Press},
  year      = {2011},
  doi       = {10.1017/CBO9780511973765}
}

@book{Ernst87,
  author    = {Ernst, Richard R. and Bodenhausen, Geoffrey and Wokaun, Alexander},
  title     = {Principles of Nuclear Magnetic Resonance in One and Two Dimensions},
  publisher = {Oxford University Press},
  year      = {1987},
  address   = {Oxford},
  isbn      = {978-0198556473},
  nolink  = {}
}

@book{Fick90,
  author    = {Fick, Eugen and Sauermann, G{\"u}nter},
  title     = {The Quantum Statistics of Dynamic Processes},
  series    = {Springer Series in Solid-State Sciences},
  volume    = {86},
  publisher = {Springer-Verlag},
  address   = {Berlin and Heidelberg},
  year      = {1990},
  isbn      = {978-3-642-83717-3},
  note      = {Translated by W. D. Brewer},
  nolink  = {}
}

@article{Gyamfi_2020,
   title={{Fundamentals of quantum mechanics in Liouville space}},
   volume={41},
   ISSN={1361-6404},
   url={http://dx.doi.org/10.1088/1361-6404/ab9fdd},
   DOI={10.1088/1361-6404/ab9fdd},
   number={6},
   journal={European Journal of Physics},
   publisher={IOP Publishing},
   author={Gyamfi, Jerryman A},
   year={2020},
   month=oct, pages={063002} }

@article{Wang_2008,
   title={An alternative quantum fidelity for mixed states of qudits},
   volume={373},
   ISSN={0375-9601},
   url={http://dx.doi.org/10.1016/j.physleta.2008.10.083},
   DOI={10.1016/j.physleta.2008.10.083},
   number={1},
   journal={Physics Letters A},
   publisher={Elsevier BV},
   author={Wang, Xiaoguang and Yu, Chang-Shui and Yi, X.X.},
   year={2008},
   month=dec, pages={58} }

@article{GU_2010,
  author    = {Gu, Shi-Jian},
  title     = {Fidelity approach to quantum phase transitions},
  journal   = {International Journal of Modern Physics B},
  volume    = {24},
  number    = {23},
  pages     = {4371},
  year      = {2010},
  month     = sep,
  doi       = {10.1142/S0217979210056335},
  url       = {https://doi.org/10.1142/S0217979210056335},
  publisher = {World Scientific}
}

@article{Zanardi_2006,
  author    = {Zanardi, Paolo and Paunkovi{\'c}, Nikola},
  title     = {Ground state overlap and quantum phase transitions},
  journal   = {Physical Review E},
  volume    = {74},
  number    = {3},
  pages     = {031123},
  year      = {2006},
  month     = sep,
  doi       = {10.1103/PhysRevE.74.031123},
  url       = {https://doi.org/10.1103/PhysRevE.74.031123},
  publisher = {American Physical Society}
}

@article{Zanardi_2007,
  author    = {Zanardi, Paolo and Quan, H. T. and Wang, Xiaoguang and Sun, C. P.},
  title     = {Mixed-state fidelity and quantum criticality at finite temperature},
  journal   = {Physical Review A},
  volume    = {75},
  number    = {3},
  pages     = {032109},
  year      = {2007},
  month     = mar,
  doi       = {10.1103/PhysRevA.75.032109},
  url       = {https://doi.org/10.1103/PhysRevA.75.032109},
  publisher = {American Physical Society}
}

@article{You_2007,
  author    = {You, Wen-Long and Li, Ying-Wai and Gu, Shi-Jian},
  title     = {Fidelity, dynamic structure factor, and susceptibility in critical phenomena},
  journal   = {Physical Review E},
  volume    = {76},
  number    = {2},
  pages     = {022101},
  year      = {2007},
  month     = aug,
  doi       = {10.1103/PhysRevE.76.022101},
  url       = {https://doi.org/10.1103/PhysRevE.76.022101},
  publisher = {American Physical Society}
}

@article{uhlmann76,
  author    = {Uhlmann, Armin},
  title     = {The ``transition probability'' in the state space of a *-algebra},
  journal   = {Reports on Mathematical Physics},
  volume    = {9},
  number    = {2},
  pages     = {273},
  year      = {1976},
  month     = apr,
  doi       = {10.1016/0034-4877(76)90060-4},
  url       = {https://doi.org/10.1016/0034-4877(76)90060-4},
  publisher = {Elsevier}
}

@article{Holevo73,
  author  = {Holevo, A. S.},
  title   = {Statistical decision theory for quantum systems},
  journal = {Journal of Multivariate Analysis},
  volume  = {3},
  number  = {4},
  pages   = {337},
  year    = {1973},
  doi     = {10.1016/0047-259X(73)90028-6}
}

@book{Helstrom76,
  author    = {Helstrom, Carl W.},
  title     = {Quantum Detection and Estimation Theory},
  publisher = {Academic Press},
  address   = {New York},
  year      = {1976},
  isbn      = {0123400503},
  nolink  = {}
}

@article{FuchsVanDeGraaf99,
  author  = {Fuchs, Christopher A. and van de Graaf, Jeroen},
  title   = {Cryptographic distinguishability measures for quantum-mechanical states},
  journal = {IEEE Transactions on Information Theory},
  volume  = {45},
  number  = {4},
  pages   = {1216},
  year    = {1999},
  doi     = {10.1109/18.761271}
}

@Article{Anderson1967Infrared,
  author    = {Anderson, P. W.},
  journal   = {Phys. Rev. Lett.},
  title     = {{Infrared Catastrophe in Fermi Gases with Local Scattering Potentials}},
  year      = {1967},
  month     = {Jun},
  pages     = {1049},
  volume    = {18},
  doi       = {10.1103/PhysRevLett.18.1049},
  issue     = {24},
  numpages  = {0},
  publisher = {American Physical Society},
  url       = {https://link.aps.org/doi/10.1103/PhysRevLett.18.1049},
}

@article{Greenblatt_2024,
  author    = {Greenblatt, Rafael L. and Lange, Markus and Marcelli, Giovanna and Porta, Marcello},
  title     = {Adiabatic Evolution of Low-Temperature Many-Body Systems},
  journal   = {Communications in Mathematical Physics},
  volume    = {405},
  number    = {3},
  pages     = {75},
  year      = {2024},
  month     = mar,
  doi       = {10.1007/s00220-023-04903-6},
  url       = {https://doi.org/10.1007/s00220-023-04903-6},
  publisher = {Springer Science and Business Media LLC}
}

@article{Jozsa94,
  title={Fidelity for mixed quantum states},
  author={Jozsa, Richard},
  journal={Journal of Modern Optics},
  volume={41},
  number={12},
  pages={2315},
  year={1994},
  publisher={Taylor \& Francis},
  doi={10.1080/09500349414552171},
  url={https://www.tandfonline.com/doi/abs/10.1080/09500349414552171}
}

@book{Nielsen00,
  title={Quantum Computation and Quantum Information},
  author={Nielsen, Michael A. and Chuang, Isaac L.},
  year={2000},
  publisher={Cambridge University Press},
  address={Cambridge},
  nolink  = {}
}

@book{Wilde17,
  title={Quantum Information Theory},
  author={Wilde, Mark M.},
  year={2017},
  publisher={Cambridge University Press},
  address={Cambridge},
  edition={2nd},
  isbn={9781107176164},
  doi={10.1017/9781316809976}
}

@article{Liang19,
doi = {10.1088/1361-6633/ab1ca4},
url = {https://doi.org/10.1088/1361-6633/ab1ca4},
year = {2019},
month = {jun},
publisher = {IOP Publishing},
volume = {82},
number = {7},
pages = {076001},
author = {Liang, Yeong-Cherng and Yeh, Yu-Hao and Mendonça, Paulo E M F and Teh, Run Yan and Reid, Margaret D and Drummond, Peter D},
title = {Quantum fidelity measures for mixed states},
journal = {Reports on Progress in Physics}
}

@article{BolonekLason2021,
  doi = {10.22331/q-2021-06-24-482},
  url = {https://doi.org/10.22331/q-2021-06-24-482},
  title = {Classical and quantum speed limits},
  author = {Bolonek-Laso{\'{n}}, Katarzyna and Gonera, Joanna and Kosi{\'{n}}ski, Piotr},
  journal = {{Quantum}},
  issn = {2521-327X},
  publisher = {{Verein zur F{\"{o}}rderung des Open Access Publizierens in den Quantenwissenschaften}},
  volume = {5},
  pages = {482},
  month = jun,
  year = {2021}
}

@misc{granet2025,
      title={Adiabatic preparation of thermal states and entropy-noise relation on noisy quantum computers}, 
      author={Etienne Granet and Henrik Dreyer},
      year={2025},
      eprint={2509.05206},
      archivePrefix={arXiv},
      primaryClass={quant-ph},
      url={https://arxiv.org/abs/2509.05206}, 
}

@article{Irmejs2026,
   title={Quasi-Adiabatic Processing of Thermal States},
   volume={10},
   ISSN={2521-327X},
   url={http://dx.doi.org/10.22331/q-2026-03-10-2018},
   DOI={10.22331/q-2026-03-10-2018},
   journal={Quantum},
   publisher={Verein zur Forderung des Open Access Publizierens in den Quantenwissenschaften},
   author={Irmejs, Reinis and Bañuls, Mari Carmen and Cirac, J. Ignacio},
   year={2026},
   month=mar, pages={2018} }

@misc{shirai2025,
      title={Quasi-adiabatic thermal ensemble preparation in the thermodynamic limit}, 
      author={Tatsuhiko Shirai},
      year={2025},
      eprint={2510.13555},
      archivePrefix={arXiv},
      primaryClass={cond-mat.stat-mech},
      url={https://arxiv.org/abs/2510.13555}, 
}

@article{Hastings11,
  title = {Topological Order at Nonzero Temperature},
  author = {Hastings, Matthew B.},
  journal = {Phys. Rev. Lett.},
  volume = {107},
  issue = {21},
  pages = {210501},
  numpages = {5},
  year = {2011},
  month = {Nov},
  publisher = {American Physical Society},
  doi = {10.1103/PhysRevLett.107.210501},
  url = {https://link.aps.org/doi/10.1103/PhysRevLett.107.210501}
}

@article{Unanyan20,
  title = {Finite-Temperature Topological Invariant for Interacting Systems},
  author = {Unanyan, Razmik and Kiefer-Emmanouilidis, Maximilian and Fleischhauer, Michael},
  journal = {Phys. Rev. Lett.},
  volume = {125},
  issue = {21},
  pages = {215701},
  numpages = {6},
  year = {2020},
  month = {Nov},
  publisher = {American Physical Society},
  doi = {10.1103/PhysRevLett.125.215701},
  url = {https://link.aps.org/doi/10.1103/PhysRevLett.125.215701}
}

@article{Zhou25,
  title = {Finite-Temperature Quantum Topological Order in Three Dimensions},
  author = {Zhou, Shu-Tong and Cheng, Meng and Rakovszky, Tibor and von Keyserlingk, Curt and Ellison, Tyler D.},
  journal = {Phys. Rev. Lett.},
  volume = {135},
  issue = {4},
  pages = {040402},
  numpages = {7},
  year = {2025},
  month = {Jul},
  publisher = {American Physical Society},
  doi = {10.1103/PhysRevLett.135.040402},
  url = {https://link.aps.org/doi/10.1103/PhysRevLett.135.040402}
}

@article{Katsura62,
  author  = {Katsura, Shigetoshi},
  title   = {Statistical Mechanics of the Anisotropic Linear Heisenberg Model},
  journal = {Phys. Rev.},
  volume  = {127},
  pages   = {1508},
  year    = {1962},
  doi     = {10.1103/PhysRev.127.1508}
}

@article{Barouch70,
  author  = {Barouch, Eytan and McCoy, Barry M. and Dresden, Max},
  title   = {Statistical Mechanics of the {X Y} Model. {I}},
  journal = {Phys. Rev. A},
  volume  = {2},
  pages   = {1075},
  year    = {1970},
  doi     = {10.1103/PhysRevA.2.1075}
}

@article{Barouch71,
  author  = {Barouch, Eytan and McCoy, Barry M.},
  title   = {Statistical Mechanics of the {X Y} Model. {II}. Spin-Correlation Functions},
  journal = {Phys. Rev. A},
  volume  = {3},
  pages   = {786},
  year    = {1971},
  doi     = {10.1103/PhysRevA.3.786}
}

@book{Franchini17,
  author    = {Franchini, Fabio},
  title     = {An Introduction to Integrable Techniques for One-Dimensional Quantum Systems},
  publisher = {Springer},
  year      = {2017},
  doi       = {10.1007/978-3-319-48487-7}
}

@article{Fogedby78,
  author  = {Fogedby, Hans C.},
  title   = {The Ising chain in a skew magnetic field},
  journal = {Journal of Physics C: Solid State Physics},
  volume  = {11},
  number  = {13},
  pages   = {2801},
  year    = {1978},
  doi     = {10.1088/0022-3719/11/13/025}
}

@article{Sen00,
  title = {Quantum phase transitions in the Ising model in a spatially modulated field},
  author = {Sen, Parongama},
  journal = {Phys. Rev. E},
  volume = {63},
  issue = {1},
  pages = {016112},
  numpages = {4},
  year = {2000},
  month = {Dec},
  publisher = {American Physical Society},
  doi = {10.1103/PhysRevE.63.016112},
  url = {https://link.aps.org/doi/10.1103/PhysRevE.63.016112}
}

@article{Ovchinnikov03,
  author  = {Ovchinnikov, A. A. and Dmitriev, D. V. and Krivnov, V. Ya. and Cheranovskii, V. O.},
  title   = {Antiferromagnetic Ising chain in a mixed transverse and longitudinal magnetic field},
  journal = {Phys. Rev. B},
  volume  = {68},
  pages   = {214406},
  year    = {2003},
  doi     = {10.1103/PhysRevB.68.214406}
}

@article{Banuls11,
  author  = {Ba{\~n}uls, Mari-Carmen and Cirac, J. Ignacio and Hastings, M. B.},
  title   = {Strong and weak thermalization of infinite nonintegrable quantum systems},
  journal = {Phys. Rev. Lett.},
  volume  = {106},
  pages   = {050405},
  year    = {2011},
  doi     = {10.1103/PhysRevLett.106.050405}
}

@article{Chiba24,
  title = {Proof of absence of local conserved quantities in the mixed-field Ising chain},
  author = {Chiba, Yuuya},
  journal = {Phys. Rev. B},
  volume = {109},
  issue = {3},
  pages = {035123},
  numpages = {15},
  year = {2024},
  month = {Jan},
  publisher = {American Physical Society},
  doi = {10.1103/PhysRevB.109.035123},
  url = {https://link.aps.org/doi/10.1103/PhysRevB.109.035123}
}

@article{Lieb1972,
  title        = {The finite group velocity of quantum spin systems},
  author       = {Lieb, Elliott H. and Robinson, Derek W.},
  journal      = {Communications in Mathematical Physics},
  volume       = {28},
  number       = {3},
  pages        = {251},
  year         = {1972},
  doi          = {10.1007/BF01645779}
}

@article{Hastings2006,
  title        = {Spectral Gap and Exponential Decay of Correlations},
  author       = {Hastings, Matthew B. and Koma, Tohru},
  journal      = {Communications in Mathematical Physics},
  volume       = {265},
  number       = {3},
  pages        = {781},
  year         = {2006},
  doi          = {10.1007/s00220-006-0030-4},
  url          = {https://doi.org/10.1007/s00220-006-0030-4}
}

@article{Nachtergaele2006,
  title        = {Propagation of Correlations in Quantum Lattice Systems},
  author       = {Nachtergaele, Bruno and Ogata, Yoshiko and Sims, Robert},
  journal      = {Journal of Statistical Physics},
  volume       = {124},
  number       = {1},
  pages        = {1},
  year         = {2006},
  doi          = {10.1007/s10955-006-9143-6},
  url          = {https://doi.org/10.1007/s10955-006-9143-6}
}

@article{Nachtergaele2006_CMP,
  title        = {{Lieb--Robinson} Bounds and the Exponential Clustering Theorem},
  author       = {Nachtergaele, Bruno and Sims, Robert},
  journal      = {Communications in Mathematical Physics},
  volume       = {265},
  number       = {1},
  pages        = {119},
  year         = {2006},
  doi          = {10.1007/s00220-006-1556-1},
  url          = {https://doi.org/10.1007/s00220-006-1556-1}
}

@article{Naudts2005,
  author       = {Naudts, Jan},
  title        = {Escort Density Operators and Generalized Quantum Information Measures},
  journal      = {Open Systems \& Information Dynamics},
  volume       = {12},
  pages        = {13},
  year         = {2005},
  doi          = {10.1007/s11080-005-0483-5},
  url          = {https://doi.org/10.1007/s11080-005-0483-5}
}

@article{Wigner1963,
  author       = {Wigner, E. P. and Yanase, M. M.},
  title        = {Information Contents of Distributions},
  journal      = {Proceedings of the National Academy of Sciences of the United States of America},
  volume       = {49},
  number       = {6},
  pages        = {910},
  year         = {1963},
  doi          = {10.1073/pnas.49.6.910},
  url          = {https://doi.org/10.1073/pnas.49.6.910}
}

@book{Ashcroft1976,
  title        = {Solid State Physics},
  author       = {Ashcroft, Neil W. and Mermin, N. David},
  publisher    = {Holt, Rinehart and Winston},
  address      = {New York},
  year         = {1976},
  isbn         = {9780030839931},
  nolink  = {}
}

@book{Nishimori2010,
  author    = {Nishimori, Hidetoshi and Ortiz, Gerardo},
  title     = {Elements of Phase Transitions and Critical Phenomena},
  publisher = {Oxford University Press},
  address   = {Oxford},
  year      = {2010},
  isbn      = {9780199577224},
  doi       = {10.1093/acprof:oso/9780199577224.001.0001},
  url       = {https://doi.org/10.1093/acprof:oso/9780199577224.001.0001}
}

@book{Huang1987,
  author    = {Huang, Kerson},
  title     = {Statistical Mechanics},
  edition   = {2nd ed.},
  year      = {1987},
  publisher = {Wiley},
  address   = {New York},
  isbn      = {9780471815181},
nolink  = {}
}

@article{Born27,
       author = {{Born}, Max},
        title = "{Das Adiabatenprinzip in der Quantenmechanik}",
      journal = {Zeitschrift fur Physik},
         year = 1927,
        month = mar,
       volume = {40},
       number = {3-4},
        pages = {167},
          doi = {10.1007/BF01400360},
       adsurl = {https://ui.adsabs.harvard.edu/abs/1927ZPhy...40..167B}
}

@article{Born28,
       author = {{Born}, M. and {Fock}, V.},
        title = "{Beweis des Adiabatensatzes}",
      journal = {Zeitschrift fur Physik},
         year = 1928,
        month = mar,
       volume = {51},
       number = {3-4},
        pages = {165},
          doi = {10.1007/BF01343193},
       adsurl = {https://ui.adsabs.harvard.edu/abs/1928ZPhy...51..165B}
}

@article{Kato50,
author = {Kato ,Tosio},
title = {On the Adiabatic Theorem of Quantum Mechanics},
journal = {Journal of the Physical Society of Japan},
volume = {5},
number = {6},
pages = {435},
year = {1950},
doi = {10.1143/JPSJ.5.435},
URL = {https://doi.org/10.1143/JPSJ.5.435},
}

@book{Messiah14,
  author    = {Messiah, Albert},
  title     = {Quantum Mechanics},
  series    = {Dover Books on Physics},
  publisher = {Dover Publications},
  year      = {2014},
  isbn      = {9780486784557},
  nolink = {}
}

@article{Nenciu93,
  title     = {Linear adiabatic theory. Exponential estimates},
  author    = {Nenciu, Gheorghe},
  journal   = {Communications in Mathematical Physics},
  volume    = {152},
  number    = {3},
  pages     = {479},
  year      = {1993},
  publisher = {Springer},
  doi       = {10.1007/BF02096616}
}

@ARTICLE{Avron99,
       author = {{Avron}, Joseph E. and {Elgart}, Alexander},
        title = "{Adiabatic Theorem without a Gap Condition}",
      journal = {Communications in Mathematical Physics},
         year = 1999,
        month = jan,
       volume = {203},
       number = {2},
        pages = {445},
          doi = {10.1007/s002200050620},
archivePrefix = {arXiv},
 primaryClass = {math-ph},
       adsurl = {https://ui.adsabs.harvard.edu/abs/1999CMaPh.203..445A}
}

@article{Hagedorn02,
title = {Elementary Exponential Error Estimates for the Adiabatic Approximation},
journal = {Journal of Mathematical Analysis and Applications},
volume = {267},
number = {1},
pages = {235},
year = {2002},
issn = {0022-247X},
doi = {https://doi.org/10.1006/jmaa.2001.7765},
url = {https://www.sciencedirect.com/science/article/pii/S0022247X01977650},
author = {George A. Hagedorn and Alain Joye}
}

@book{Teufel03,
  title={Adiabatic Perturbation Theory in Quantum Dynamics},
  author={Teufel, Stefan},
  year={2003},
  publisher={Springer-Verlag},
  series={Lecture Notes in Mathematics},
  volume={1821},
  address={Berlin, Heidelberg},
  doi={10.1007/b13735}
}

@misc{Ambainis06,
      title={An Elementary Proof of the Quantum Adiabatic Theorem}, 
      author={Andris Ambainis and Oded Regev},
      year={2006},
      eprint={quant-ph/0411152},
      archivePrefix={arXiv},
      primaryClass={quant-ph},
      url={https://arxiv.org/abs/quant-ph/0411152}, 
}

@article{Jansen07,
author = {Jansen,Sabine  and Ruskai,Mary-Beth  and Seiler,Ruedi },
title = {Bounds for the adiabatic approximation with applications to quantum computation},
journal = {Journal of Mathematical Physics},
volume = {48},
number = {10},
pages = {102111},
year = {2007},
doi = {10.1063/1.2798382},
URL = {https://doi.org/10.1063/1.2798382}   
}

@article{Amin09,
  title = {Consistency of the Adiabatic Theorem},
  author = {Amin, M. H. S.},
  journal = {Phys. Rev. Lett.},
  volume = {102},
  issue = {22},
  pages = {220401},
  numpages = {4},
  year = {2009},
  month = {Jun},
  publisher = {American Physical Society},
  doi = {10.1103/PhysRevLett.102.220401},
  url = {https://link.aps.org/doi/10.1103/PhysRevLett.102.220401}
}

@article{Lidar09,
  author    = {Lidar, Daniel A. and Rezakhani, Ali T. and Hamma, Alioscia},
  title     = {Adiabatic approximation with exponential accuracy for many-body systems and quantum computation},
  journal   = {Journal of Mathematical Physics},
  volume    = {50},
  number    = {10},
  pages     = {102106},
  year      = {2009},
  month     = oct,
  doi       = {10.1063/1.3236685},
  url       = {https://doi.org/10.1063/1.3236685},
  publisher = {AIP Publishing}
}

@article{Cheung11,
   title={Improved error bounds for the adiabatic approximation},
   volume={44},
   ISSN={1751-8121},
   url={http://dx.doi.org/10.1088/1751-8113/44/41/415302},
   DOI={10.1088/1751-8113/44/41/415302},
   number={41},
   journal={Journal of Physics A: Mathematical and Theoretical},
   publisher={IOP Publishing},
   author={Cheung, Donny and Høyer, Peter and Wiebe, Nathan},
   year={2011},
   month=sep, pages={415302} }

@article{Elgart12,
  author    = {Elgart, Alexander and Hagedorn, George A.},
  title     = {A note on the switching adiabatic theorem},
  journal   = {Journal of Mathematical Physics},
  volume    = {53},
  number    = {10},
  pages     = {102202},
  year      = {2012},
  month     = sep,
  doi       = {10.1063/1.4748968},
  url       = {https://doi.org/10.1063/1.4748968},
  publisher = {AIP Publishing}
}

@article{Ge16,
  title = {Rapid Adiabatic Preparation of Injective Projected Entangled Pair States and Gibbs States},
  author = {Ge, Yimin and Moln\'ar, Andr\'as and Cirac, J. Ignacio},
  journal = {Phys. Rev. Lett.},
  volume = {116},
  issue = {8},
  pages = {080503},
  numpages = {5},
  year = {2016},
  month = {Feb},
  publisher = {American Physical Society},
  doi = {10.1103/PhysRevLett.116.080503},
  url = {https://link.aps.org/doi/10.1103/PhysRevLett.116.080503}
}

@article{Bachmann17,
  title = {Adiabatic Theorem for Quantum Spin Systems},
  author = {Bachmann, S. and De Roeck, W. and Fraas, M.},
  journal = {Phys. Rev. Lett.},
  volume = {119},
  issue = {6},
  pages = {060201},
  numpages = {6},
  year = {2017},
  month = {Aug},
  publisher = {American Physical Society},
  doi = {10.1103/PhysRevLett.119.060201},
  url = {https://link.aps.org/doi/10.1103/PhysRevLett.119.060201}
}

@article{Albash18,
  title = {Adiabatic quantum computation},
  author = {Albash, Tameem and Lidar, Daniel A.},
  journal = {Rev. Mod. Phys.},
  volume = {90},
  issue = {1},
  pages = {015002},
  numpages = {64},
  year = {2018},
  month = {Jan},
  publisher = {American Physical Society},
  doi = {10.1103/RevModPhys.90.015002},
  url = {https://link.aps.org/doi/10.1103/RevModPhys.90.015002}
}

@article{Bachmann18,
   title={The Adiabatic Theorem and Linear Response Theory for Extended Quantum Systems},
   volume={361},
   ISSN={1432-0916},
   url={http://dx.doi.org/10.1007/s00220-018-3117-9},
   DOI={10.1007/s00220-018-3117-9},
   number={3},
   journal={Communications in Mathematical Physics},
   publisher={Springer Science and Business Media LLC},
   author={Bachmann, Sven and De Roeck, Wojciech and Fraas, Martin},
   year={2018},
   month=mar, pages={997} }

@incollection{Bachmann20,
  author    = {Bachmann, Sven and De Roeck, Wojciech and Fraas, Martin},
  title     = {The adiabatic theorem in a quantum many-body setting},
  booktitle = {Analytic Trends in Mathematical Physics},
  series    = {Contemporary Mathematics},
  volume    = {741},
  pages     = {43},
  year      = {2020},
  publisher = {American Mathematical Society},
  doi       = {10.1090/conm/741/14919}
}

@article{Anandan90,
  title = {Geometry of quantum evolution},
  author = {Anandan, J. and Aharonov, Y.},
  journal = {Phys. Rev. Lett.},
  volume = {65},
  issue = {14},
  pages = {1697},
  numpages = {0},
  year = {1990},
  month = {Oct},
  publisher = {American Physical Society},
  doi = {10.1103/PhysRevLett.65.1697},
  url = {https://link.aps.org/doi/10.1103/PhysRevLett.65.1697}
}

@article{Pfeifer93a,
  title = {How fast can a quantum state change with time?},
  author = {Pfeifer, Peter},
  journal = {Phys. Rev. Lett.},
  volume = {70},
  issue = {22},
  pages = {3365},
  numpages = {0},
  year = {1993},
  month = {May},
  publisher = {American Physical Society},
  doi = {10.1103/PhysRevLett.70.3365},
  url = {https://link.aps.org/doi/10.1103/PhysRevLett.70.3365}
}

@article{Pfeifer93b,
  title = {How Fast Can a Quantum State Change with Time?},
  author = {Pfeifer, Peter},
  journal = {Phys. Rev. Lett.},
  volume = {71},
  issue = {2},
  pages = {306},
  numpages = {0},
  year = {1993},
  month = {Jul},
  publisher = {American Physical Society},
  doi = {10.1103/PhysRevLett.71.306.2},
  url = {https://link.aps.org/doi/10.1103/PhysRevLett.71.306.2}
}

@article{Pfeifer95,
  title = {Generalized time-energy uncertainty relations and bounds on lifetimes of resonances},
  author = {Pfeifer, Peter and Fr\"ohlich, J\"urg},
  journal = {Rev. Mod. Phys.},
  volume = {67},
  issue = {4},
  pages = {759},
  numpages = {0},
  year = {1995},
  month = {Oct},
  publisher = {American Physical Society},
  doi = {10.1103/RevModPhys.67.759},
  url = {https://link.aps.org/doi/10.1103/RevModPhys.67.759}
}

@article{Vaidman92,
author = {Vaidman, Lev},
title = {Minimum time for the evolution to an orthogonal quantum state},
journal = {American Journal of Physics},
volume = {60},
number = {2},
pages = {182},
year = {1992},
doi = {10.1119/1.16940},
URL = {https://doi.org/10.1119/1.16940}
}

@article{Mandelstam45,
author = {Mandelstam, L. and Tamm, Ig.},
title = {The uncertainty relation between energy and time in non-relativistic quantum mechanics},
journal = {J. Phys. USSR},
volume = {9},
pages = {249},
year = {1945},
nolink  = {}
}

@article{Deffner17,
  Author = {Sebastian Deffner and Steve Campbell},
  Doi = {10.1088/1751-8121/aa86c6},
  Journal = {Journal of Physics A: Mathematical and Theoretical},
  Month = {oct},
  Number = {45},
  Pages = {453001},
  Publisher = {{IOP} Publishing},
  Title = {Quantum speed limits: from Heisenberg's uncertainty principle to optimal quantum control},
  Url = {https://doi.org/10.1088/1751-8121/aa86c6},
  Volume = {50},
  Year = {2017}
}

@article{Taddei13,
  author  = {Taddei, M. M. and Escher, B. M. and Davidovich, L. and de Matos Filho, R. L.},
  title   = {Quantum Speed Limit for Physical Processes},
  journal = {Phys. Rev. Lett.},
  volume  = {110},
  pages   = {050402},
  year    = {2013},
  doi     = {10.1103/PhysRevLett.110.050402}
}

@article{Campo13,
  author  = {del Campo, A. and Egusquiza, I. L. and Plenio, M. B. and Huelga, S. F.},
  title   = {Quantum Speed Limits in Open System Dynamics},
  journal = {Phys. Rev. Lett.},
  volume  = {110},
  pages   = {050403},
  year    = {2013},
  doi     = {10.1103/PhysRevLett.110.050403}
}

@article{Deffner13,
  author  = {Deffner, Sebastian and Lutz, Eric},
  title   = {Quantum Speed Limit for Non-Markovian Dynamics},
  journal = {Phys. Rev. Lett.},
  volume  = {111},
  pages   = {010402},
  year    = {2013},
  doi     = {10.1103/PhysRevLett.111.010402}
}

@article{Pires16,
  author  = {Pires, Diego Paiva and Cianciaruso, Marco and C{\'e}leri, Lucas C. and Adesso, Gerardo and Soares-Pinto, Diogo O.},
  title   = {Generalized Geometric Quantum Speed Limits},
  journal = {Phys. Rev. X},
  volume  = {6},
  pages   = {021031},
  year    = {2016},
  doi     = {10.1103/PhysRevX.6.021031}
}

@article{Pires18,
  author  = {Campaioli, Francesco and Pollock, Felix A. and Binder, Felix C. and Modi, Kavan},
  title   = {Tightening Quantum Speed Limits for Almost All States},
  journal = {Phys. Rev. Lett.},
  volume  = {120},
  pages   = {060409},
  year    = {2018},
  doi     = {10.1103/PhysRevLett.120.060409}
}

\clearpage
\onecolumn

\appendix
\inappendixtrue

\renewcommand{\theequation}{A\arabic{equation}}
\setcounter{equation}{0}
\renewcommand{\thefigure}{A\arabic{figure}}
\setcounter{figure}{0}

\begin{center}
\textbf{\LARGE Appendix}
\end{center}

This Appendix provides technical details that support the results in the main text.

\printappendixtoc

\section{Derivation of the mixed-state quantum speed limit [Eq.\ \eqref{eq: QSL mixed states}]}
\label{sec: derivation_mixed_QSL}

In this section, we derive the mixed-state quantum speed limit inequality [Eq.~\eqref{eq: QSL mixed states}] for closed, unitary dynamics.
We parametrize the evolution by a driving coordinate \(\lambda=\lambda(t)\) and assume that the Hamiltonian depends on time only through \(\lambda(t)\), i.e., \(\hat H(t)=\hat H^{\,}_{\lambda(t)}\).

\subsection{Liouville-space setup}

We consider unitary dynamics generated by a Hamiltonian $\hat H_\lambda$,
\begin{align}
  \mathrm{i}\hbar\,\Gamma\,\frac{\partial}{\partial\lambda}\hat\rho^{\,}_\lambda
  =[\hat H^{\,}_\lambda,\hat\rho^{\,}_\lambda],
  \qquad
  \Gamma:=\partial^{\,}_t\lambda .
  \label{eq: Liouville von Neumann eq}
\end{align}
We vectorize an operator $\hat A=\sum_{n,m}A^{\,}_{nm}\,|n\rangle\langle m|$ as
$\HSket{A}=\sum_{n,m}A^{\,}_{nm}\,|n\rangle\otimes|m\rangle$,
equipped with the Hilbert--Schmidt inner product
$\HSinner{A}{B}:=\Tr(\hat A^\dagger \hat B)$.
Then Eq.~\eqref{eq: Liouville von Neumann eq} becomes
\[
  \frac{\partial}{\partial\lambda}\HSket{\rho^{\,}_\lambda}
  =-\frac{\mathrm{i}}{\hbar}\frac{1}{\Gamma}\,\hat{\hat L}^{\,}_\lambda\,\HSket{\rho^{\,}_\lambda},
  \qquad
  \hat{\hat L}^{\,}_\lambda\HSket{\bullet}
  =\HSket{[\hat H^{\,}_\lambda,\bullet]}
  =\left(\hat H^{\,}_\lambda\otimes\mathbb{I}-\mathbb{I}\otimes\hat H_\lambda^{\mathsf T}\right)\HSket{\bullet},
\]
where the superoperator $\hat{\hat L}_\lambda$ is Hermitian,
$\hat{\hat L}_\lambda^\dagger=\hat{\hat L}_\lambda$.

For later convenience we introduce normalized Liouville-space vectors
\[
  \HSket{\Psi^{\,}_\lambda}:=\frac{\HSket{\rho^{\,}_\lambda}}{\HSnorm{\rho^{\,}_\lambda}^{\,}},
  \qquad
  \HSket{\Psi^{\,}_0}:=\frac{\HSket{\rho^{\,}_0}}{\HSnorm{\rho^{\,}_0}^{\,}},
\]
where \(\HSnorm{\rho^{\,}_0}^2=\Tr(\hat\rho_0^2)\).
Note that for closed systems the purity is conserved,
\(\Tr(\hat{\rho}^{2}_\lambda)=\Tr(\hat{\rho}^{2}_0)\), hence \(\HSnorm{\rho^{\,}_\lambda}^{\,}=\HSnorm{\rho^{\,}_0}^{\,}\).

\subsection{Hilbert--Schmidt angle and its rate of change}

Define the Hilbert--Schmidt fidelity amplitude \(M(\lambda)\) and Hilbert--Schmidt angle \(\Theta^{\,}_\lambda\) by
\[
  M(\lambda)
  :=\HSinner{\Psi_\lambda}{\Psi_0}
  =\frac{\Tr(\hat\rho^{\,}_\lambda\hat\rho_0)}{\Tr(\hat\rho_0^2)}\in[0,1],
  \qquad
  \Theta^{\,}_\lambda:=\arccos  M(\lambda)\in[0,\pi/2],
\]
so that \(M(\lambda)=\cos\Theta_\lambda\).
Differentiating \(M(\lambda)\) with respect to \(\lambda\) yields
\begin{align}
  \partial^{\,}_{\lambda}{M}
  =\HSinner{\partial_\lambda\Psi_\lambda}{\Psi_0}
  =\frac{\mathrm{i}}{\hbar}\frac{1}{\Gamma}\,\frac{1}{\Tr(\hat\rho_0^2)}\,
    \HSbra{\rho^{\,}_\lambda}\hat{\hat{L}}^{\,}_\lambda\HSket{\rho^{\,}_0}.
  \label{eq: gdot}
\end{align}

Introduce the rank-one superprojector onto the initial Hilbert--Schmidt state and its complement,
\begin{align}
  \hat{\hat{\mathcal{P}}}:=\HSket{\Psi_0}\HSbra{\Psi_0},
  \qquad
  \hat{\hat{\mathcal{Q}}}:=\openone-\hat{\hat{\mathcal{P}}},
\label{eq: def rank one superprojector}
\end{align}
so that \(\hat{\hat{\mathcal{P}}}^2=\hat{\hat{\mathcal{P}}}\), \(\hat{\hat{\mathcal{Q}}}^2=\hat{\hat{\mathcal{Q}}}\), and
\(\hat{\hat{\mathcal{P}}}\hat{\hat{\mathcal{Q}}}=0\).
Decompose \(\hat{\hat{L}}^{\,}_\lambda\HSket{\rho^{\,}_0}\) in Eq.~\eqref{eq: gdot} as
\[
  \hat{\hat{L}}^{\,}_\lambda\HSket{\rho^{\,}_0}
  =\hat{\hat{\mathcal{P}}}\hat{\hat{L}}^{\,}_\lambda\HSket{\rho^{\,}_0}
   +\hat{\hat{\mathcal{Q}}}\hat{\hat{L}}^{\,}_\lambda\HSket{\rho^{\,}_0}.
\]
The \(\hat{\hat{\mathcal{P}}}\)-term vanishes because
\[
  \HSbra{\Psi_0}\hat{\hat{L}}^{\,}_\lambda\HSket{\rho^{\,}_0}
  =\frac{1}{\HSnorm{\rho^{\,}_0}^{\,}}\HSbra{\rho^{\,}_0}\hat{\hat{L}}^{\,}_\lambda\HSket{\rho^{\,}_0}
  =\frac{1}{\HSnorm{\rho^{\,}_0}^{\,}}\Tr\!\left(\hat\rho^{\,}_0[\hat H^{\,}_\lambda,\hat\rho^{\,}_0]\right)=0,
\]
by cyclicity of the trace.
Hence, Eq.~\eqref{eq: gdot} implies
\[
  |\partial^{\,}_{\lambda}{M}|
  =\frac{1}{\hbar}\frac{1}{|\Gamma|}\frac{1}{\Tr(\hat\rho_0^2)}
   \left|\HSbra{\rho^{\,}_\lambda}\hat{\hat{\mathcal{Q}}}\hat{\hat{L}}^{\,}_\lambda\HSket{\rho^{\,}_0}\right|.
\]
Applying the Cauchy--Schwarz inequality gives
\begin{align}
  |\partial^{\,}_{\lambda}M|
  &\le
  \frac{1}{\hbar}\frac{1}{|\Gamma|}\frac{1}{\Tr(\hat\rho_0^2)}
  \HSnorm{\hat{\hat{\mathcal{Q}}}^{\,}\HSket{\rho^{\,}_\lambda}}\;
  \HSnorm{\hat{\hat{L}}^{\,}_\lambda\HSket{\rho^{\,}_0}}.
  \label{eq: bound after CS}
\end{align}
The two norms evaluate to
\[
  \HSnorm{\hat{\hat{\mathcal{Q}}}^{\,}\HSket{\rho^{\,}_\lambda}}
  =\HSnorm{\rho^{\,}_0}^{\,}\,\sin\Theta_\lambda,\qquad
  \HSnorm{\hat{\hat{L}}^{\,}_\lambda\HSket{\rho^{\,}_0}}
  =\HSnorm{[\hat H^{\,}_\lambda,\hat\rho^{\,}_0]}^{\,}.
\]
Substituting into Eq.~\eqref{eq: bound after CS} yields
\[
  |\partial^{\,}_{\lambda} M|
  \le \frac{1}{\hbar}\frac{1}{|\Gamma|}\,\frac{1}{\HSnorm{\rho^{\,}_0}^{\,}}\,
  \sin\Theta^{\,}_\lambda\;\HSnorm{[\hat H^{\,}_\lambda,\hat\rho^{\,}_0]}^{\,}.
\]
Since \(M(\lambda)=\cos\Theta^{\,}_\lambda\), we have
\(\partial^{\,}_{\lambda}{M}=-\sin\Theta^{\,}_\lambda\,\partial^{\,}_{\lambda}\Theta^{\,}_\lambda\).
Canceling \(\sin\Theta^{\,}_\lambda\) gives
\[
  \left|\partial^{\,}_{\lambda}\Theta^{\,}_\lambda\right|
  \le \frac{1}{\hbar}\frac{1}{|\Gamma|}\frac{1}{\HSnorm{\rho^{\,}_0}^{\,}}\,
  \HSnorm{[\hat H^{\,}_\lambda,\hat\rho^{\,}_0]}^{\,}.
\]
Integrating in \(\lambda\) and using the triangle inequality yields
\begin{align}
  \Theta^{\,}_\lambda  -\underbrace{\Theta^{\,}_{0}}^{\,}_{=0}
  =
  \int^{\lambda}_{0}\mathrm{d}\lambda' \,\partial^{\,}_{\lambda'}\Theta^{\,}_{\lambda'}
  &\leq
  \int^{\lambda}_{0}\mathrm{d}\lambda'\,\bigl| \partial^{\,}_{\lambda'}\Theta^{\,}_{\lambda'}\bigr|
  \nonumber\\
  &\leq
  \frac{1}{\hbar}\,
  \int_0^\lambda \frac{\mathrm{d}\lambda'}{|\partial^{\,}_{t}\lambda'|}\;
  \frac{1}{\HSnorm{\rho^{\,}_0}^{\,}}\,
  \HSnorm{[\hat H^{\,}_{\lambda'},\hat\rho^{\,}_0]}^{\,}.
  \label{eq: QSL_lambda_HS_norm}
\end{align}
Finally, the commutator norm can be written explicitly as
\[
  \HSnorm{[\hat H^{\,}_{\lambda},\hat\rho_0]}^2
  =2\left[
    \Tr(\hat\rho_0^2\hat H_\lambda^2)
    -\Tr(\hat\rho_0\hat H^{\,}_\lambda\hat\rho_0\hat H_\lambda)
  \right],
\]
so Eq.~\eqref{eq: QSL_lambda_HS_norm} is equivalent to Eq.~\eqref{eq: QSL mixed states} in the main text (with $\hbar\equiv 1$).

\section{Derivation of mixed-state fidelity bounds [Eq.\ \eqref{eq: fidelity bound}]}
\label{sec:derivation_of_mixed_state_fidelity_bounds_ref_eq_fidelity_bound}

This section derives Eq.~\eqref{eq: fidelity bound} of the main text.
The derivation is inspired by Ref.~\cite{Chen:2022qyb}.
Let $\HSket{\Psi^{\,}_{0}}$, $\HSket{\Phi_\lambda}$, and $\HSket{\Psi_\lambda}$ be \emph{normalized} vectors
in the Liouville (Hilbert-Schmidt) space introduced in Sec.~\ref{sec: derivation_mixed_QSL}.
We define the Hilbert--Schmidt fidelity
\[
F[\Phi_2,\Phi_1]:=\bigl|\HSinner{\Phi_2}{\Phi_1}\bigr|^2\in[0,1].
\]
Our goal is to bound $F[\Phi_\lambda,\Psi_\lambda]$ in terms of
$F[\Phi_\lambda,\Psi^{\,}_{0}]$ and the Hilbert--Schmidt angle $\Theta_\lambda$ defined by
\[
\cos\Theta_\lambda:=\bigl|\HSinner{\Psi^{\,}_{0}}{\Psi_\lambda}\bigr|
=\sqrt{F[\Psi^{\,}_{0},\Psi_\lambda]},
\qquad
\Theta_\lambda\in[0,\pi/2].
\]

Introduce the rank-one superprojector onto $\HSket{\Psi^{\,}_{0}}$ and its orthogonal complement as in
Eq.~\eqref{eq: def rank one superprojector}.
By orthogonal decomposition,
\[
\HSket{\Psi_\lambda} = \hat{\hat{\mathcal{P}}}\HSket{\Psi_\lambda} + \hat{\hat{\mathcal{Q}}}\HSket{\Psi_\lambda},
\qquad
\HSket{\Phi_\lambda} = \hat{\hat{\mathcal{P}}}\HSket{\Phi_\lambda} + \hat{\hat{\mathcal{Q}}}\HSket{\Phi_\lambda}.
\]
It follows that
\[
F[\Phi_\lambda,\Psi^{\,}_{\lambda}]
= \Bigl|\HSbra{\Phi_\lambda}\bigl(\hat{\hat{\mathcal{P}}}+\hat{\hat{\mathcal{Q}}}\bigr)\HSket{\Psi^{\,}_{\lambda}}\Bigr|^2
=
\abs{\HSbra{\Phi_\lambda} \hat{\hat{\mathcal{P}}} \HSket{\Psi_{\lambda}} }^2
+ \abs{\HSbra{\Phi_\lambda} \hat{\hat{\mathcal{Q}}}\HSket{\Psi_{\lambda}} }^2
+ 2\,\mathrm{Re}
\Big[\HSbra{\Phi_\lambda} \hat{\hat{\mathcal{Q}}}\HSket{\Psi_{\lambda}}\,
      \HSbra{\Psi_{\lambda}} \hat{\hat{\mathcal{P}}} \HSket{\Phi_\lambda}\Big].
\]
Therefore,
\begin{align}
& \abs{F[\Phi_\lambda,\Psi_{\lambda}] - F[\Phi_\lambda,\Psi^{\,}_{0}]}\nonumber\\
&= \Bigg|
\abs{\HSbra{\Phi_\lambda} \hat{\hat{\mathcal{P}}} \HSket{\Psi_{\lambda}} }^2
+ \abs{\HSbra{\Phi_\lambda} \hat{\hat{\mathcal{Q}}} \HSket{\Psi_{\lambda}} }^2
+ 2\,\mathrm{Re}\!\left[\HSbra{\Phi_\lambda} \hat{\hat{\mathcal{Q}}}\HSket{\Psi_{\lambda}}\,
\HSbra{\Psi_{\lambda}} \hat{\hat{\mathcal{P}}} \HSket{\Phi_\lambda}\right]
- F[\Phi_\lambda,\Psi^{\,}_{0}]
\Bigg|\nonumber\\
&\le
\Bigg|
\abs{\HSbra{\Phi_\lambda} \hat{\hat{\mathcal{P}}} \HSket{\Psi_{\lambda}} }^2
+ \abs{\HSbra{\Phi_\lambda} \hat{\hat{\mathcal{Q}}} \HSket{\Psi_{\lambda}} }^2
- F[\Phi_\lambda,\Psi^{\,}_{0}]
\Bigg|
+ 2\Bigg|
\mathrm{Re}\!\left[\HSbra{\Phi_\lambda} \hat{\hat{\mathcal{Q}}}\HSket{\Psi_{\lambda}}\,
\HSbra{\Psi_{\lambda}} \hat{\hat{\mathcal{P}}} \HSket{\Phi_\lambda}\right]
\Bigg|\nonumber\\
&\le
\Bigg|
\abs{\HSbra{\Phi_\lambda} \hat{\hat{\mathcal{P}}} \HSket{\Psi_{\lambda}} }^2
+ \abs{\HSbra{\Phi_\lambda} \hat{\hat{\mathcal{Q}}} \HSket{\Psi_{\lambda}} }^2
- F[\Phi_\lambda,\Psi^{\,}_{0}]
\Bigg|
+ 2\abs{\HSbra{\Phi_\lambda} \hat{\hat{\mathcal{Q}}}\HSket{\Psi_{\lambda}}}\,
      \abs{\HSbra{\Psi_{\lambda}} \hat{\hat{\mathcal{P}}} \HSket{\Phi_\lambda}}\nonumber\\
&=
\abs{-\sin^2\Theta_\lambda\,F[\Phi_\lambda,\Psi^{\,}_{0}] + \mathcal{D}_{\rm un}(\lambda)}
+ 2\cos\Theta_\lambda \sqrt{F[\Phi_\lambda,\Psi^{\,}_{0}]}\sqrt{\mathcal{D}_{\rm un}(\lambda)}.
\label{eq: derive the main inequalities}
\end{align}
Here, we used the triangle inequality to obtain the first inequality, the bound
$\mathrm{Re}(z)\le |z|$ for $z\in\mathbb{C}$ to obtain the second inequality, and the identities
\[
\|\hat{\hat{\mathcal{P}}}\HSket{\Psi_{\lambda}} \|  = \cos\Theta_\lambda,\qquad
\|\hat{\hat{\mathcal{P}}}\HSket{\Phi_\lambda} \| = \sqrt{F[\Phi_\lambda,\Psi^{\,}_{0}]},\qquad
\mathcal{D}_{\rm un}(\lambda) := \abs{\HSbra{\Phi_\lambda} \hat{\hat{\mathcal{Q}}} \HSket{\Psi_{\lambda}} }^2.
\]

By the Cauchy--Schwarz inequality,
\[
\sqrt{\mathcal{D}_{\rm un}(\lambda)}
\le
\|\hat{\hat{\mathcal{Q}}}\HSket{\Psi_{\lambda}}\|\,
\|\hat{\hat{\mathcal{Q}}}\HSket{\Phi_\lambda}\|
=
\sin\Theta_\lambda\,
\sqrt{1-F[\Phi_\lambda,\Psi^{\,}_{0}]},
\]
and hence Eq.~\eqref{eq: derive the main inequalities} implies
\begin{align}
\abs{F[\Phi_\lambda,\Psi_{\lambda}] - F[\Phi_\lambda,\Psi^{\,}_{0}]}
\leq
\sin^2\Theta^{\,}_\lambda \,\abs{1-2F[\Phi_\lambda,\Psi^{\,}_{0}]}
+ \sin\!\left(2\Theta^{\,}_\lambda\right)
\sqrt{F[\Phi_\lambda,\Psi^{\,}_{0}]}\,\sqrt{1-F[\Phi_\lambda,\Psi^{\,}_{0}] }.
\label{appeq: bound on fidelity}
\end{align}
Maximizing the right-hand side of Eq.~\eqref{appeq: bound on fidelity} with respect to $F[\Phi_\lambda,\Psi^{\,}_{0}]$
yields a simpler, though looser bound,
\begin{align}
\abs{F[\Phi_\lambda,\Psi_{\lambda}] - F[\Phi_\lambda,\Psi^{\,}_{0}]}
\leq
\sin\Theta_\lambda .
\label{appeq: bound on fidelity simpler}
\end{align}

Finally, Eq.~\eqref{eq: fidelity bound} of the main text follows by making the identifications
\[
\HSket{\Psi^{\,}_{\lambda}}\equiv \frac{\HSket{\rho^{\,}_{\lambda}}}{\|\rho^{\,}_{\lambda}\|},
\qquad
\HSket{\Phi^{\,}_{\lambda}}\equiv \frac{\HSket{\sigma^{\,}_{\lambda}}}{\|\sigma^{\,}_{\lambda}\|},
\qquad
\HSket{\Psi^{\,}_{0}}\equiv \frac{\HSket{\rho^{\,}_{0}}}{\|\rho^{\,}_{0}\|}
\]
in Eqs.~\eqref{appeq: bound on fidelity} and \eqref{appeq: bound on fidelity simpler}, and then applying the
mixed-state quantum speed limit inequality~\eqref{eq: QSL mixed states} to bound $\Theta^{\,}_{\lambda}$.

\section{Proof of Theorem~\ref{thm:GammaThTempScaling}}
\label{sec:proof_of_theorem_ref_thm_gammathtempscaling}

In this section, we provide a proof of Theorem~\ref{thm:GammaThTempScaling} in the main text.
To this end, we evaluate both $\delta V$ [Eq.~\eqref{eq: QSL mixed states simplified}]
and $\chi^{\,}_{\mathrm{F}}$ [Eq.~\eqref{eq: define fidelity susceptiblity}]
in the low- and high-temperature regimes.

\subsection{Useful representations for $\delta V$ and $\chi^{\,}_{\mathrm{F}}$}

\subsubsection{Useful representation for $\delta V$}

We begin by deriving a convenient expression for $\delta V$ \eqref{eq: QSL mixed states simplified}:
\begin{align}
\delta V
&:= \sqrt{2\,I^{\,}_{\mathrm{WY}}\!\left(\hat{\tilde{\rho}}^{\,}_{0},\hat{V}\right)},
\qquad
I^{\,}_{\mathrm{WY}}\!\left(\hat{\tilde{\rho}}^{\,}_{0},\hat{V}\right)
:=
\Tr\!\left[\hat{\tilde{\rho}}^{\,}_{0}\,\hat{V}^2\right]
-\Tr\!\left[\left(\hat{\tilde{\rho}}^{1/2}_{0}\hat{V}\right)^2\right].
\label{eq: WYSI in app}
\end{align}
When $\hat{\rho}^{\,}_{0}(\beta)$ is a Gibbs state at inverse temperature $\beta$
\eqref{eq: initial Gibbs state}, the order-$2$ escort state $\hat{\tilde\rho}^{\,}_0$ reads
\[
\hat{\tilde\rho}^{\,}_0
:=\frac{\hat\rho_0^{\,2}}{\Tr[\hat\rho_0^{\,2}]}
=
\frac{e^{-2\beta \hat H_0}}{Z^{\,}_{0}(2\beta)}
=\hat\rho^{\,}_{0}(2\beta).
\]
Therefore, Eq.~\eqref{eq: WYSI in app} becomes
\[
I^{\,}_{\mathrm{WY}}\!\left(\hat{\tilde{\rho}}^{\,}_{0},\hat{V}\right)
=
\Tr\!\left[\hat\rho^{\,}_{0}(2\beta)\,\hat{V}^2\right]
-\Tr\!\left[\hat\rho^{\,}_{0}(2\beta)\,\hat{V}(\beta)\hat{V}\right]
=: I^{\,}_{\mathrm{WY}}(\beta),
\]
where
$
\hat{V}(\beta):=e^{\beta \hat{H}^{\,}_{0}}\hat{V} e^{-\beta \hat{H}^{\,}_{0}}
$
is the imaginary-time Heisenberg operator at time $\beta$.
Introducing the shorthand for a thermal average at inverse temperature $2\beta$,
$
\langle\cdots\rangle^{\,}_{2\beta}:=\Tr[\hat{\rho}^{\,}_{0}(2\beta)\cdots],
$
we can write $\delta V$ \eqref{eq: WYSI in app} compactly as
\begin{align}
\delta V = \sqrt{2\,I^{\,}_{\mathrm{WY}}(\beta)},
\qquad
I^{\,}_{\mathrm{WY}}(\beta)=\langle\hat{V}^2\rangle^{\,}_{2\beta} - \langle\hat{V}(\beta)\hat{V}\rangle^{\,}_{2\beta}.
\label{eq: QSL mixed states simplified app}
\end{align}

\subsubsection{Useful representation for $\chi^{\,}_{\mathrm{F}}$}

We now turn to the fidelity susceptibility $\chi^{\,}_{\mathrm{F}}$ \eqref{eq: define fidelity susceptiblity},
\[
\chi^{\,}_{\mathrm{F}}
:=-\frac{\partial^{2} \ln\mathcal{C}(\lambda)}{\partial \lambda^{2}}\Big|^{\,}_{\lambda=0},
\]
where the thermal-state overlap $\mathcal{C}(\lambda)$ is defined in
Eq.~\eqref{eq: thermal state overlap defined}.
Since we consider a unitary evolution, the purity is conserved,
$\Tr[\hat{\rho}^{\,}_{0}(\beta)^2]=\Tr[\hat{\sigma}^{\,}_{\lambda}(\beta)^2]$,
and hence, the denominators in Eq.~\eqref{eq: thermal state overlap defined} are independent of $\lambda$.
Therefore, in computing $\chi^{\,}_{\mathrm{F}}$ we only need
\[
\chi^{\,}_{\mathrm F}
=
-2\,\frac{\partial^2 \ln S(\lambda)}{\partial\lambda^2}\Big|_{\lambda=0},
\qquad
S(\lambda):=\Tr[\hat{\rho}_0(\beta)\hat{\sigma}_\lambda(\beta)].
\]
Working out the derivatives gives
\[
\chi^{\,}_{\mathrm F}
=
-2
\left[
\frac{S''(0)}{S(0)}
-
\left(\frac{S'(0)}{S(0)}\right)^2
\right].
\]

Write the Boltzmann weights $p^{\,}_n:=e^{-\beta E_n^{(0)}}$ and $Z^{\,}_{0}(\beta)=\sum_n p^{\,}_n$.
Upon using the spectral decomposition
\[
\hat{\rho}_0(\beta)=\frac{1}{Z^{\,}_{0}(\beta)}\sum_m p^{\,}_m\,|E_m^{(0)}\rangle\langle E^{(0)}_{m}|,
\qquad
\hat{\sigma}_\lambda(\beta)=\frac{1}{Z^{\,}_{0}(\beta)}\sum_n p^{\,}_n\,\dyad{E_n(\lambda)},
\]
we obtain
\begin{align}
S(\lambda)
&=
\frac{1}{Z^{\,}_{0}(\beta)^2}\sum_{n} p_n^{2}\,|\langle E_n^{(0)}|E_n(\lambda)\rangle|^2
+
\frac{1}{Z^{\,}_{0}(\beta)^2}\sum_{\substack{m,n\\ m\neq n}} p^{\,}_m p^{\,}_n\,|\langle E_m^{(0)}|E_n(\lambda)\rangle|^2.
\label{eq:S_overlap_sum}
\end{align}

For $\hat H_\lambda=\hat H_0+\lambda \hat V$, standard (non-degenerate) perturbation theory yields
\[
|E_n(\lambda)\rangle
=
|E_n^{(0)}\rangle
+\lambda\sum_{m\neq n}
\frac{\langle E_m^{(0)}|\hat V|E_n^{(0)}\rangle}{E_n^{(0)}-E_m^{(0)}}\,|E_m^{(0)}\rangle
+\mathcal{O}(\lambda^2).
\]
For $m\neq n$, the overlap expands as
\[
|\langle E_m^{(0)}|E_n(\lambda)\rangle|^2
=
\lambda^2
\frac{|\langle E_m^{(0)}|\hat{V}|E_n^{(0)}\rangle|^2}{(E_n^{(0)}-E_m^{(0)})^2}
+\mathcal{O}(\lambda^3)
\qquad (m\neq n),
\]
while normalization implies
\[
|\langle E_n^{(0)}|E_n(\lambda)\rangle|^2
=
1-\lambda^2\sum_{m\neq n}
\frac{|\langle E_m^{(0)}|\hat{V}|E_n^{(0)}\rangle|^2}{(E_n^{(0)}-E_m^{(0)})^2}
+\mathcal{O}(\lambda^3).
\]
Substituting into Eq.~\eqref{eq:S_overlap_sum} and reorganizing the sums gives
\[
S(\lambda)
=
S(0)
-\frac{\lambda^2}{2Z^{\,}_{0}(\beta)^2}\sum_{\substack{m,n\\ m\neq n}}
(p^{\,}_m-p^{\,}_n)^2
\frac{|\langle E_m^{(0)}|\hat{V}|E_n^{(0)}\rangle|^2}{(E_m^{(0)}-E_n^{(0)})^2}
+\mathcal{O}(\lambda^3),
\]
with
\begin{equation}
S(0)=\Tr(\hat{\rho}_0^2)=\frac{Z^{\,}_{0}(2\beta)}{Z^{\,}_{0}(\beta)^2}.
\label{eq: purity using pf}
\end{equation}
It follows that $S'(0)=0$ and
\[
S''(0)=
-\frac{1}{Z^{\,}_{0}(\beta)^2}\sum_{\substack{m,n\\ m\neq n}}
(p^{\,}_m-p^{\,}_n)^2
\frac{|\langle E_m^{(0)}|\hat{V}|E_n^{(0)}\rangle|^2}{(E_m^{(0)}-E_n^{(0)})^2}.
\]
We thus obtain
\begin{align}
\chi^{\,}_{\mathrm F}
=
\frac{2}{Z^{\,}_{0}(2\beta)}
\sum_{\substack{m,n\\ m\neq n}}
\left(e^{-\beta E_m^{(0)}}-e^{-\beta E_n^{(0)}}\right)^2
\frac{|\langle E_m^{(0)}|\hat{V}|E_n^{(0)}\rangle|^2}{(E_m^{(0)}-E_n^{(0)})^2},
\label{eq: define fidelity susceptiblity simple form app}
\end{align}
where $Z^{\,}_0(2\beta)=\sum_m e^{-2\beta E_m^{(0)}}$.

\subsection{Low-temperature regime}

\subsubsection{$\delta V$ in the low-temperature regime}

We begin by evaluating the standard deviation $\delta V$ \eqref{eq: QSL mixed states simplified app}
in the low-temperature regime.
For $\beta\Delta\gg 1$, it is sufficient to retain the ground state (n=0) and the lowest excited state that
couples to it under $\hat V$.
Throughout, we choose $|E^{(0)}_{1}\rangle$ such that
$V^{\,}_{10}:=\langle E^{(0)}_{1}|\hat V|E^{(0)}_{0}\rangle\neq 0$ and
\begin{equation}
\Delta:=E^{(0)}_{1}-E^{(0)}_{0}
=
\min_{n>0:\, \langle E^{(0)}_{n}|\hat V|E^{(0)}_{0}\rangle\neq 0}
\bigl(E^{(0)}_{n}-E^{(0)}_{0}\bigr)
>0.
\end{equation}
In this approximation, the Gibbs state at inverse temperature $2\beta$ reads
\[
\hat{\rho}^{\,}_{0}(2\beta)
\simeq
\frac{e^{-2\beta E^{(0)}_0}}{Z^{\,}_{0}(2\beta)}
\left( |E^{(0)}_{0}\rangle\langle E^{(0)}_{0}|+e^{-2\beta \Delta}\,|E^{(0)}_{1}\rangle\langle E^{(0)}_{1}| \right),
\qquad
Z^{\,}_{0}(2\beta)
\simeq e^{-2\beta E^{(0)}_0}\left(1+e^{-2\beta \Delta}\right).
\]

Expanding to first order in $e^{-2\beta\Delta}$ yields
\begin{equation}
\hat{\rho}^{\,}_{0}(2\beta)
\approx \left(1-e^{-2\beta \Delta}\right) |E^{(0)}_{0}\rangle\langle E^{(0)}_{0}| +e^{-2\beta \Delta} |E^{(0)}_{1}\rangle\langle E^{(0)}_{1}|,
\label{eq: initial Gibbs state low T}
\end{equation}
which reduces to the ground-state projector as $\beta\to\infty$.

\paragraph*{Term 1: $\langle \hat V^2\rangle_{2\beta}$.}
Using Eq.~\eqref{eq: initial Gibbs state low T},
\begin{equation}
\langle\hat{V}^2\rangle^{\,}_{2\beta}
=\Tr[\hat{\rho}^{\,}_{0}(2\beta)\hat{V}^2]
\approx (1-e^{-2\beta\Delta})(V^{2})^{\,}_{00} + e^{-2\beta\Delta}(V^2)^{\,}_{11},
\label{eq:V2_lowT}
\end{equation}
where $(V^2)_{mn}:=\langle E^{(0)}_{m}|\hat{V}^2|E^{(0)}_{n}\rangle$.

\paragraph*{Imaginary-time Heisenberg evolution.}
The imaginary-time Heisenberg operator is
\[
\hat{V}(\beta):=e^{\beta \hat{H}^{\,}_{0}}\hat{V}e^{-\beta \hat{H}^{\,}_{0}}
=\sum_{m,n}e^{\beta(E_m^{(0)}-E_n^{(0)})}V_{mn}\,|E_m^{(0)}\rangle\langle E_n^{(0)}|.
\]
Retaining the two lowest levels gives
\begin{equation}
\hat V(\beta)\approx
V^{\,}_{00}|E^{(0)}_{0}\rangle\langle E^{(0)}_{0}|
+e^{-\beta\Delta}V^{\,}_{01}|E^{(0)}_{0}\rangle\langle E^{(0)}_{1}|
+e^{\beta\Delta}V^{\,}_{10}|E^{(0)}_{1}\rangle\langle E^{(0)}_{0}|
+V^{\,}_{11}|E^{(0)}_{1}\rangle\langle E^{(0)}_{1}|,
\label{eq:Vbeta_two_level}
\end{equation}
where $V_{mn}:=\langle E_m^{(0)}|\hat V|E_n^{(0)}\rangle$.

\paragraph*{Term 2: $\langle \hat V(\beta)\hat V\rangle_{2\beta}$.}
Using Eqs.~\eqref{eq: initial Gibbs state low T} and \eqref{eq:Vbeta_two_level}, and keeping terms up to
order $e^{-2\beta\Delta}$, we obtain
\begin{equation}
\langle\hat{V}(\beta)\hat{V}\rangle^{\,}_{2\beta}
=\Tr[\hat{\rho}^{\,}_{0}(2\beta)\hat{V}(\beta)\hat{V}]
\approx
(V^{\,}_{00})^2+2e^{-\beta\Delta}|V^{\,}_{10}|^2
+ e^{-2\beta\Delta}
\Bigl[(V_{11})^2-(V_{00})^2\Bigr].
\label{eq:VVbeta_lowT}
\end{equation}

\paragraph*{Low-$T$ expansion of $I_{\mathrm{WY}}(\beta)$ and $\delta V$.}
Recalling $I^{\,}_{\mathrm{WY}}(\beta)=\langle \hat{V}^2\rangle_{2\beta}-\langle \hat{V}(\beta)\hat{V}\rangle_{2\beta}$ [Eq.~\eqref{eq: QSL mixed states simplified app}],
Eqs.~\eqref{eq:V2_lowT} and \eqref{eq:VVbeta_lowT} give
\[
I^{\,}_{\mathrm{WY}}(\beta)
\approx
\Bigl[(V^{2})^{\,}_{00} - (V^{\,}_{00})^{2} \Bigr]
-2e^{-\beta\Delta}|V^{\,}_{10}|^2
+\mathcal O(e^{-2\beta\Delta}).
\]
Therefore,
\begin{align}
\delta V=\sqrt{2I^{\,}_{\mathrm{WY}}(\beta)}
&=
\delta V^{(0)}
\left(
1-2 e^{-\beta\Delta}\frac{|V^{\,}_{10}|^2}{(\delta V^{(0)})^2}
\right)
+\mathcal O(e^{-2\beta\Delta}),
\label{eq: QSL mixed states low T final app}
\end{align}
where
$
\delta V^{(0)}:=\sqrt{2}\,
\sqrt{(V^{2})^{\,}_{00} - (V^{\,}_{00})^{2} }.
$

Finally, we show in Sec.~\ref{app:Ns_deltav_LT} that
$|V^{\,}_{10}|=|\langle E^{(0)}_{1}|\hat{V}|E^{(0)}_{0}\rangle|$ scales as $\sqrt{N}$.
Since $\delta V^{(0)}$ also scales as $\sqrt{N}$, the ratio
$|V^{\,}_{10}|/\delta V^{(0)}$ in Eq.~\eqref{eq: QSL mixed states low T final app}
is independent of $N$ in the thermodynamic limit ($N\to\infty$).

\subsubsection{$\chi^{\,}_{\mathrm F}$ in the low-temperature regime}

We now evaluate the fidelity susceptibility $\chi^{\,}_{\mathrm F}$
\eqref{eq: define fidelity susceptiblity simple form app} in the low-temperature regime.
To simplify notation, define
\[
\Delta_m:=E_m^{(0)}-E_0^{(0)}\ge 0,
\qquad
A_{mn}:=\frac{\bigl|\bra{E_m^{(0)}}\hat V\ket{E_n^{(0)}}\bigr|^2}{\bigl(E_m^{(0)}-E_n^{(0)}\bigr)^2}.
\]
We again fix the label $m=1$ such that $V_{10}\neq 0$ and
\begin{equation}
\Delta:=\Delta_1
=
\min_{n>0:\,V_{n0}\neq 0}\Delta_n
>0.
\end{equation}
Factoring out $E_0^{(0)}$ from $Z^{\,}_0(2\beta)$ gives
\[
Z^{\,}_0(2\beta)
=e^{-2\beta E_0^{(0)}}\Bigl(1+\sum_{k>0}e^{-2\beta\Delta_k}\Bigr)
=:e^{-2\beta E_0^{(0)}}\,\tilde Z(2\beta),
\qquad
\tilde Z(2\beta)=1+\sum_{k>0}e^{-2\beta\Delta_k}.
\]
Likewise,
\[
\bigl(e^{-\beta E_m^{(0)}}-e^{-\beta E_n^{(0)}}\bigr)^2
=e^{-2\beta E_0^{(0)}}\bigl(e^{-\beta\Delta_m}-e^{-\beta\Delta_n}\bigr)^2.
\]
Therefore, Eq.~\eqref{eq: define fidelity susceptiblity simple form app} becomes
\begin{equation}
\chi^{\,}_{\mathrm F}(\beta)
=\frac{2}{\tilde Z(2\beta)}
\sum_{\substack{m,n\\ m\neq n}}
\bigl(e^{-\beta\Delta_m}-e^{-\beta\Delta_n}\bigr)^2\,A_{mn}.
\label{eq:chiF_reduced}
\end{equation}

\paragraph*{Step 1: Split the sum into ground--excited and excited--excited sectors.}
Write
\[
\sum_{\substack{m,n\\ m\neq n}}
=
\sum_{n>0}\Bigl[(m,n)=(0,n)+(n,0)\Bigr]
\;+\;
\sum_{\substack{m,n>0\\ m\neq n}}.
\]
Since $\Delta_0=0$, the ground--excited sector yields
\[
\sum_{n>0}\Bigl[(1-e^{-\beta\Delta_n})^2 A_{0n}+(e^{-\beta\Delta_n}-1)^2A_{n0}\Bigr]
=
2\sum_{n>0}(1-e^{-\beta\Delta_n})^2 A_{0n},
\]
where $A_{n0}=A_{0n}$.
Expanding,
\(
(1-e^{-\beta\Delta_n})^2 = 1-2e^{-\beta\Delta_n}+e^{-2\beta\Delta_n}.
\)
The excited--excited sector is
\begin{equation}
S_{\rm ex}(\beta):=\sum_{\substack{m,n>0\\ m\neq n}}
\bigl(e^{-\beta\Delta_m}-e^{-\beta\Delta_n}\bigr)^2 A_{mn}.
\label{eq:Sex_def}
\end{equation}
Since
\(
(e^{-\beta\Delta_m}-e^{-\beta\Delta_n})^2
\le e^{-2\beta\Delta_m}+e^{-2\beta\Delta_n}+2e^{-\beta(\Delta_m+\Delta_n)},
\)
one has $S_{\rm ex}(\beta)=\mathcal O(e^{-2\beta\Delta})$.

Putting these into Eq.~\eqref{eq:chiF_reduced} gives
\begin{equation}
\chi^{\,}_{\mathrm F}(\beta)
=\frac{2}{\tilde Z(2\beta)}
\left[
2\sum_{n>0}\Bigl(1-2e^{-\beta\Delta_n}+e^{-2\beta\Delta_n}\Bigr)A_{0n}
+S_{\rm ex}(\beta)
\right].
\label{eq:chiF_split}
\end{equation}

\paragraph*{Step 2: Expand the normalization $1/\tilde Z(2\beta)$.}
Since $\tilde Z(2\beta)=1+\sum_{k>0}e^{-2\beta\Delta_k}$,
\begin{equation}
\frac{1}{\tilde Z(2\beta)}
=
1-\sum_{k>0}e^{-2\beta\Delta_k}+\mathcal O(e^{-4\beta\Delta}).
\label{eq:Zinv_expand}
\end{equation}

\paragraph*{Step 3: Collect terms order by order.}
Define the zero-temperature fidelity susceptibility
\begin{equation}
\chi^{(0)}_{\mathrm F}
:=
\chi^{\,}_{\mathrm F}(\beta\to\infty)
=4\sum_{n>0}A_{0n}
=
4\sum_{n>0}\frac{\bigl|\bra{E_n^{(0)}}\hat V\ket{E_0^{(0)}}\bigr|^2}{\Delta_n^2}.
\label{eq:chiF0_def}
\end{equation}
Using Eqs.~\eqref{eq:chiF_split} and \eqref{eq:Zinv_expand}, we obtain
\begin{align}
\chi^{\,}_{\mathrm F}(\beta)
=
\chi^{(0)}_{\mathrm F}
-8\sum_{n>0}e^{-\beta\Delta_n}A_{0n}
+\mathcal O(e^{-2\beta\Delta}).
\label{eq:chiF_lowT_leading}
\end{align}
(At order $\mathcal O(e^{-2\beta\Delta})$, the contributions come from the $e^{-2\beta\Delta_n}$ term in $(1-e^{-\beta\Delta_n})^2$,
the correction from $1/\tilde Z(2\beta)$, and the excited--excited sector $S_{\rm ex}(\beta)$.)

\paragraph*{Factorizing $\chi^{(0)}_{\mathrm F}$.}
Keeping only the leading Boltzmann correction in Eq.~\eqref{eq:chiF_lowT_leading} and factorizing $\chi^{(0)}_{\mathrm F}$ yields
\begin{equation}
\chi^{\,}_{\mathrm F}(\beta)
=
\chi^{(0)}_{\mathrm F}
\left[
1
-8e^{-\beta\Delta}\,
\frac{1}{\chi^{(0)}_{\mathrm F}}\frac{|V^{\,}_{10}|^2}{\Delta^2}
\right]
+\mathcal O\!\left(e^{-\beta\Delta_2},\,e^{-2\beta\Delta}\right),
\label{eq: fidelity susceptiblity low T app}
\end{equation}
where $\Delta_2$ denotes the next smallest excitation gap among states with nonzero ground-state coupling (namely, those with $V_{n0}\neq 0$).

We show in Sec.~\ref{app:Ns_chiF_LT} that \(|V^{\,}_{10}|^{2}/\Delta^{2}\asymp N\).
Since \(\chi^{(0)}_{\mathrm{F}}\asymp N\) in a gapped phase, the ratio
\((1/\chi^{(0)}_{\mathrm{F}})\,(|V^{\,}_{10}|^{2}/\Delta^{2})\)
is asymptotically independent of \(N\) as \(N\to\infty\).

\subsubsection{Threshold driving rate $\Gamma^{\,}_{\mathrm{th}}$ in the low-temperature regime}

We now compute the threshold driving rate $\Gamma^{\,}_{\mathrm{th}}$ \eqref{eq: define threshold driving rate}
by substituting the low-temperature expansions of $\delta V$ \eqref{eq: QSL mixed states low T final app}
and $\chi^{\,}_{\mathrm{F}}$ \eqref{eq: fidelity susceptiblity low T app}:
\begin{align}
\Gamma^{\,}_{\mathrm{th}}
=
\Gamma^{\,}_{N}
\frac{
1
-
2e^{-\beta \Delta}\left(\frac{|V^{\,}_{10}|}{\delta V^{(0)}}\right)^{2}
}
{
1
-8
e^{-\beta \Delta}
\frac{1}{\chi^{(0)}_{\mathrm F}}
\frac{|V^{\,}_{10}|^2}{\Delta^2}
}
\equiv
\Gamma^{\,}_{N}\, f^{\,}_{\mathrm{low}\text{-}T}(\beta)
+\mathcal{O}(e^{-2\beta \Delta}),
\nonumber
\end{align}
where expanding the ratio to first order in $e^{-\beta\Delta}$ gives
\begin{align}
f^{\,}_{\mathrm{low}\text{-}T}(\beta)
=
1
+
2e^{-\beta \Delta}
\underbrace{\left[
-
\left(\frac{|V^{\,}_{10}|}{\delta V^{(0)}}\right)^{2}
+
\frac{4}{\chi^{(0)}_{\mathrm F}}
\frac{|V^{\,}_{10}|^2}{\Delta^2}
\right]}_{=:W}.
\label{eq: temperature scaling low T app}
\end{align}
In the thermodynamic limit, $f^{\,}_{\mathrm{low}\text{-}T}(\beta)$ is independent of $N$
because both ratios $|V_{10}|/\delta V^{(0)}$ and $|V_{10}|^2/(\chi_F^{(0)}\Delta^2)$ are $N$-independent.

\paragraph*{Bounds on $W$.}
We next show that $W\ge 0$ and $W\le 1$.
From Eq.~\eqref{eq: temperature scaling low T app},
\[
W
=
|V_{10}|^2
\left(
-
\frac{1}{(\delta V^{(0)})^2}
+\frac{4}{\chi_F^{(0)}\,\Delta^2}
\right),
\qquad
V_{n0}:=\bra{E_n^{(0)}}\hat V\ket{E_0^{(0)}}.
\]
Introduce the dimensionless ratios
\(
a:=\frac{|V_{10}|^2}{(\delta V^{(0)})^2}
\)
and
\(
b:=\frac{|V_{10}|^2}{\chi_F^{(0)}\,\Delta^2},
\)
so that
\begin{equation}
W=-a+4b.
\label{eq:W-ab}
\end{equation}
Now note that $\chi^{(0)}_{\mathrm{F}}$ defined in Eq.~\eqref{eq:chiF0_def} implies two simple spectral inequalities.
First,
\[
\chi_F^{(0)}
=4\sum_{n\neq 0}\frac{|V_{n0}|^2}{\Delta_n^2}
\ge 4\frac{|V_{10}|^2}{\Delta^2},
\]
which implies
\begin{equation}
b\le\frac{1}{4}.
\label{eq: bound on a and b first}
\end{equation}
Second, since $\Delta\le \Delta_n$ for every $n>0$ with $V_{n0}\neq 0$,
\[
\chi_F^{(0)}
=4\sum_{n\neq 0}\frac{|V_{n0}|^2}{\Delta_n^2}
\le 4\frac{1}{\Delta^2}\sum_{n\neq 0}|V_{n0}|^2
=2\frac{(\delta V^{(0)})^2}{\Delta^2},
\]
where we used $\sum_{n\neq 0}|V_{n0}|^2=(\delta V^{(0)})^2/2$.
This implies
\begin{equation}
a\le 2b.
\label{eq: bound on a and b second}
\end{equation}
Combining Eqs.~\eqref{eq: bound on a and b first} and \eqref{eq: bound on a and b second}, and using $a\ge 0$, we obtain
\(
0\le a\le 2b \le \frac{1}{2}.
\)
Together with Eq.~\eqref{eq:W-ab}, this yields the universal bound
\begin{equation}
0\le W\le 1.
\label{eq:W-bound}
\end{equation}
In our convention, $V_{10}\neq 0$ by construction, and hence, $W>0$.
The upper end $W\simeq 1$ is approached when $b\simeq 1/4$ (so that $\chi_F^{(0)}$ is dominated by the $n=1$ contribution)
while $a\ll 1$ (so that $(\delta V^{(0)})^2$ is dominated by matrix elements to higher excited states).

We emphasize that $W$ is generally model dependent, even though it always obeys the universal bound~\eqref{eq:W-bound}.
Indeed, its value is controlled by how the spectral weight of $\hat V$ is distributed across excitations:
\begin{equation}
a=\frac{|V_{10}|^2}{2\sum_{n\neq 0}|V_{n0}|^2},
\qquad
b=\frac{|V_{10}|^2/\Delta^2}{4\sum_{n\neq 0}|V_{n0}|^2/\Delta_n^2}.
\label{eq:a-b-spectral}
\end{equation}

Finally, applying Eq.~\eqref{eq:W-bound} to Eq.~\eqref{eq: temperature scaling low T app} gives the low-temperature form stated in
Theorem~\ref{thm:GammaThTempScaling}, with
\[
c^{\,}_{1}=2W
=2|V_{10}|^2
\left(
-
\frac{1}{(\delta V^{(0)})^2}
+\frac{4}{\chi_F^{(0)}\,\Delta^2}
\right).
\]

\subsection{High-temperature regime}

\subsubsection{$\delta V$ in the high-temperature regime}

Without loss of generality, we assume $\Tr\big[\hat H^{\,}_{0}\big]=0$
(this can always be achieved by shifting $\hat H_0\mapsto \hat H_0-\Tr\big[\hat H_0\big]\mathbb I/d$, which does not affect the dynamics)
and that the Hilbert-space dimension scales as $d\asymp 2^{N}$.

At infinite temperature, the state is maximally mixed ($\hat\rho=\mathbb I/d$), which is invariant under unitary time evolution;
hence, all normalized overlaps are trivially unity.
In the high-temperature regime, the appropriate small parameter is $\beta$ compared to microscopic local energy scales
(for example, $\beta\|\hat h_i\|\ll 1$ for the local terms in $\hat H_0=\sum_i \hat h_i$).
We therefore expand the Boltzmann weight to second order in $\beta$. The initial Gibbs state
\eqref{eq: initial Gibbs state} is
\[
\hat{\rho}^{\,}_{0}(\beta)
\approx
\frac{\mathbb{I}-\beta \hat{H}^{\,}_{0}+\frac{\beta^2}{2}\hat{H}^{2}_{0}}{Z^{\,}_{0}(\beta)}.
\]
The partition function $Z^{\,}_{0}(\beta)$ is fixed by $\Tr[\hat\rho_0(\beta)]=1$, hence
\begin{align}
Z^{\,}_{0}(\beta)
&=\Tr\!\left[\mathbb{I}-\beta \hat{H}^{\,}_{0}+\frac{\beta^2}{2}\hat{H}^{2}_{0}\right]
= d+ \frac{\beta^2}{2}\Tr\big[\hat{H}^{2}_{0}\big].
\label{eq: initial pf high T}
\end{align}
Therefore,
\[
\hat{\rho}^{\,}_{0}(\beta)
\approx
\frac{1}{d}
\left[
\mathbb{I}-\beta \hat{H}^{\,}_{0}
+\frac{\beta^2}{2}\left(\hat{H}^{2}_{0}-\frac{1}{d}\Tr\big[\hat{H}^{2}_{0}\big]\,\mathbb I\right)
\right].
\]
Likewise, the initial state at inverse temperature $2\beta$ is
\begin{align}
\hat{\rho}^{\,}_{0}(2\beta)
&\approx
\frac{1}{d}
\left[
\mathbb{I}-2\beta \hat{H}^{\,}_{0}
+2\beta^2\left(\hat{H}^{2}_{0}-\frac{1}{d}\Tr\big[\hat{H}^{2}_{0}\big]\,\mathbb I\right)
\right].
\label{eq: initial Gibbs state high T 2 beta}
\end{align}

\paragraph*{Term 1: $\langle \hat V^2\rangle_{2\beta}$.}
Using Eq.~\eqref{eq: initial Gibbs state high T 2 beta},
\begin{align}
\langle\hat{V}^2\rangle^{\,}_{2\beta}
=\Tr[\hat{\rho}^{\,}_{0}(2\beta)\hat{V}^2]
\approx
\frac{1}{d}
\Biggl[
\Tr\big[\hat{V}^2\big]
-2\beta\, \Tr[\hat{H}^{\,}_{0}\hat{V}^2]
+2\beta^2\left(\Tr[\hat{H}^{2}_{0}\hat{V}^2]-\frac{1}{d}\Tr\big[\hat{H}^{2}_{0}\big]\Tr[\hat{V}^{2}]\right)
\Biggr].
\label{eq:V2_highT}
\end{align}

\paragraph*{Imaginary-time Heisenberg evolution.}
Using
\(
e^{\pm \beta \hat H_0}
=\mathbb I\pm \beta \hat H_0+\frac{\beta^2}{2}\hat H_0^2+\mathcal O(\beta^3),
\)
we obtain, up to $\mathcal O(\beta^2)$,
\begin{align}
\hat V(\beta)
&=
e^{\beta \hat H_0}\hat V e^{-\beta \hat H_0}
=
\hat V
+\beta\bigl[\hat H_0,\hat V\bigr]
+\frac{\beta^2}{2}\bigl[\hat H_0,\bigl[\hat H_0,\hat V\bigr]\bigr]
+\mathcal O(\beta^3),
\label{eq:Vbeta_BCH}
\end{align}
where
\(
[\hat H_0,[\hat H_0,\hat V]]
=\hat H_0^2\hat V-2\hat H_0\hat V\hat H_0+\hat V\hat H_0^2.
\)

\paragraph*{Term 2: $\langle \hat V(\beta)\hat V\rangle_{2\beta}$.}
Using Eq.~\eqref{eq:Vbeta_BCH}, we expand
\[
\hat V(\beta)\hat V
=
\hat V^2
+\beta\,[\hat H_0,\hat V]\,\hat V
+\frac{\beta^2}{2}\,[\hat H_0,[\hat H_0,\hat V]]\,\hat V
+\mathcal O(\beta^3).
\]
Therefore,
\begin{align}
\langle \hat V(\beta)\hat V\rangle_{2\beta}
&:=\Tr[\hat\rho^{\,}_0(2\beta)\hat V(\beta)\hat V]
\nonumber\\
&\approx
\frac{1}{d}\Tr\big[\hat{V}^2\big]
-\frac{2\beta}{d}\Tr[\hat H_0\hat V^2]
+\frac{\beta}{d}\Tr\!\left([\hat H_0,\hat V]\hat V\right)
\nonumber\\
&\quad
+\frac{2\beta^2}{d}\left(\Tr[\hat H_0^2\hat V^2]-\frac{\Tr[\hat H_0^2]\Tr\big[\hat{V}^2\big]}{d}\right)
-\frac{2\beta^2}{d}\Tr\!\left(\hat H_0[\hat H_0,\hat V]\hat V\right)
+\frac{\beta^2}{2d}\Tr\!\left([\hat H_0,[\hat H_0,\hat V]]\hat V\right)
+\mathcal O(\beta^3).
\label{eq:VVbeta_highT_expand_raw}
\end{align}
Now simplify the trace terms using cyclicity:
\[
\Tr\!\left([\hat H_0,\hat V]\hat V\right)
=\Tr(\hat H_0\hat V^2)-\Tr(\hat V\hat H_0\hat V)=0,
\quad
\Tr\!\left(\hat H_0[\hat H_0,\hat V]\hat V\right)
=\Tr(\hat H_0^2\hat V^2)-\Tr(\hat H_0\hat V\hat H_0\hat V),
\]
\[
\Tr\!\left([\hat H_0,[\hat H_0,\hat V]]\hat V\right)
=2\Tr(\hat H_0^2\hat V^2)-2\Tr(\hat H_0\hat V\hat H_0\hat V).
\]
Substituting these identities into Eq.~\eqref{eq:VVbeta_highT_expand_raw} yields
\[
\langle \hat V(\beta)\hat V\rangle_{2\beta}
\approx
\frac{1}{d}\Tr\big[\hat{V}^2\big]
-\frac{2\beta}{d}\Tr[\hat H_0\hat V^2]
+\frac{\beta^2}{d}
\left(
\Tr[\hat H_0^2\hat V^2]
+\Tr[\hat H_0\hat V\hat H_0\hat V]
-\frac{2\,\Tr[\hat H_0^2]\Tr\big[\hat{V}^2\big]}{d}
\right)
+\mathcal O(\beta^3).
\]

\paragraph*{High-temperature expansion of $I_{\mathrm{WY}}(\beta)$ and $\delta V$.}
Taking the difference,
\[
I^{\,}_{\mathrm{WY}}(\beta)
=\langle \hat V^2\rangle_{2\beta}-\langle \hat V(\beta)\hat V\rangle_{2\beta}
\approx
\frac{\beta^2}{d}\left(
\Tr[\hat H_0^2\hat V^2]-\Tr[\hat H_0\hat V\hat H_0\hat V]
\right)
+\mathcal O(\beta^3).
\]
Upon using the identity (for Hermitian $\hat H^{\,}_0$ and $\hat V$)
\[
\|[\hat H_0,\hat V]\|^{2}_{\mathrm{HS}}
=
\Tr\!\left([\hat H_0,\hat V]^\dagger[\hat H_0,\hat V]\right)
=
2\left(
\Tr[\hat H_0^2\hat V^2]-\Tr[\hat H_0\hat V\hat H_0\hat V]
\right),
\]
we obtain the compact form
\begin{align}
I^{\,}_{\mathrm{WY}}(\beta)
&\approx
\frac{\beta^2}{2d}\|[\hat{H}^{\,}_{0},\hat{V}]\|^{2}_{\mathrm{HS}}
+\mathcal O(\beta^3),
\qquad
\delta V=\sqrt{2I^{\,}_{\mathrm{WY}}(\beta)}
\approx
\beta\frac{1}{\sqrt{d}}
\|[\hat{H}^{\,}_{0},\hat{V}]\|^{\,}_{\mathrm{HS}}
+\mathcal O(\beta^2).
\label{eq: QSL mixed states high T final app}
\end{align}
In Sec.~\ref{app:Ns_deltav_HT}, we show that
$\|[\hat{H}^{\,}_{0},\hat{V}]\|^{\,}_{\mathrm{HS}}\asymp \sqrt{Nd}$,
which implies $\delta V\asymp \beta\sqrt N$ as $\beta\to 0$ in the thermodynamic limit.

\subsubsection{$\chi^{\,}_{\mathrm{F}}$ in the high-temperature regime}

We now evaluate the fidelity susceptibility $\chi^{\,}_{\mathrm{F}}$
\eqref{eq: define fidelity susceptiblity simple form app} in the high-temperature regime.
Expanding the Boltzmann weights to second order in $\beta$ gives
\[
e^{-\beta E^{(0)}_{n}}
=
1-\beta E^{(0)}_{n} +\frac{\beta^2}{2} \left(E^{(0)}_{n}\right)^2+\mathcal{O}(\beta^3),
\]
hence
\begin{align}
\bigl(e^{-\beta E_m^{(0)}}-e^{-\beta E_n^{(0)}}\bigr)^2
&=
\beta^2\left(E^{(0)}_{m}-E^{(0)}_{n}\right)^2+\mathcal{O}(\beta^3).
\label{eq:Boltz_diff_sq_highT}
\end{align}

Moreover,
\begin{align}
Z^{\,}_{0}(2\beta)
=\Tr[e^{-2\beta \hat H_0}]
=
d+2\beta^2\Tr[\hat H_0^2]+\mathcal O(\beta^3),
\qquad (\Tr\big[\hat H_0\big]=0).
\label{eq:Z0_2beta_highT}
\end{align}
Substituting Eqs.~\eqref{eq:Boltz_diff_sq_highT} and \eqref{eq:Z0_2beta_highT} into
Eq.~\eqref{eq: define fidelity susceptiblity simple form app} yields
\begin{align}
\chi^{\,}_{\mathrm F}(\beta)
&=
\frac{2\beta^2}{Z^{\,}_{0}(2\beta)}
\sum_{\substack{m,n\\ m\neq n}}
|V^{\,}_{mn}|^2
+\mathcal{O}(\beta^3)
=
\frac{2\beta^2}{d}
\sum_{\substack{m,n\\ m\neq n}}
|V^{\,}_{mn}|^2
+\mathcal{O}(\beta^3).
\label{eq: fidelity susceptiblity high T app}
\end{align}
In Sec.~\ref{app:Ns_chiF_HT}, we show that
$\sum_{\substack{m,n\\ m\neq n}}|V^{\,}_{mn}|^2\asymp Nd$ for an extensive local $\hat V$,
which implies $\chi^{\,}_{\mathrm{F}}(\beta)\asymp \beta^2 N$ as $\beta\to 0$.

\subsubsection{Threshold driving rate $\Gamma^{\,}_{\mathrm{th}}$ in the high-temperature regime}

We finally evaluate the threshold driving rate $\Gamma^{\,}_{\mathrm{th}}$
\eqref{eq: define threshold driving rate}
using $\delta V$ \eqref{eq: QSL mixed states high T final app}
and $\chi^{\,}_{\mathrm{F}}$ \eqref{eq: fidelity susceptiblity high T app}:
\begin{align}
\Gamma^{\,}_{\mathrm{th}}
&=
\frac{
\beta\,d^{-1/2}\,\|[\hat{H}^{\,}_{0},\hat{V}]\|^{\,}_{2}
}{
(2\beta^2/d)\,
\sum_{\substack{m,n\\ m\neq n}}
|V^{\,}_{mn}|^2
}\,\alpha
=
\Gamma^{\,}_{N}\,
\frac{1}{\beta}\,
\frac{
\left(d^{-1/2}\,\|[\hat{H}^{\,}_{0},\hat{V}]\|^{\,}_{2}\right)/\delta V^{(0)}
}{
\left( (2/d)\sum_{\substack{m,n\\ m\neq n}}|V^{\,}_{mn}|^2 \right)/\chi^{(0)}_{\mathrm{F}}
}
\equiv
\Gamma^{\,}_{N}\,f^{\,}_{\mathrm{hi}\text{-}T}(\beta),
\nonumber
\end{align}
where
\[
f^{\,}_{\mathrm{hi}\text{-}T}(\beta)
=
\frac{c^{\,}_{2}}{\beta},
\qquad
c^{\,}_{2}:=
\frac{
\left(d^{-1/2}\,\|[\hat{H}^{\,}_{0},\hat{V}]\|^{\,}_{2}\right)/\delta V^{(0)}
}{
\left( (2/d)\sum_{\substack{m,n\\ m\neq n}}|V^{\,}_{mn}|^2 \right)/\chi^{(0)}_{\mathrm{F}}
}.
\]
This establishes the high-temperature scaling form in Theorem~\ref{thm:GammaThTempScaling}, with a model-dependent coefficient $c^{\,}_{2}>0$.

\subsection{System-size scaling of key quantities}
\label{app:NsScaling}

In this subsection, we derive the $N$-scaling of several quantities that appear above.
Though the derivation is quite general, one may imagine a system of $N$ spin-$1/2$ degrees of freedom, so that the Hilbert-space dimension is $d=2^N$.
We focus on the interpolating Hamiltonian
$
  \hat{H}^{\,}_{\lambda}=\hat{H}^{\,}_{0}+\lambda\hat{V}.
$
Unless stated otherwise, we assume a standard many-body setting in which
\begin{align}
\hat{H}^{\,}_0=\sum_{i=1}^N \hat{h}^{\,}_i,
\qquad
\hat{V}=\sum_{i=1}^N \hat{v}^{\,}_i,
\label{eq:Ns_local_sums}
\end{align}
with each $\hat{h}^{\,}_i$ and $\hat{v}^{\,}_i$ acting on $\mathcal{O}(1)$ sites (finite range) and having an $\mathcal{O}(1)$ operator norm.
We further assume that we are away from criticality so that low-energy gaps remain $\mathcal{O}(1)$,
and that no symmetry/selection-rule obstruction forces the relevant matrix elements to vanish.
Without loss of generality, we may take $\mathrm{Tr}[\hat{h}^{\,}_i]=\Tr[\hat{v}^{\,}_i]=0$ for all $i$ (by shifting each local term by a multiple of the identity).

Note that, for translation-invariant systems away from criticality,
the extensivity of \((\delta V)^{2}\) [Eq.~\eqref{eq: QSL mixed states simplified app}] and
\(\chi^{\,}_{\mathrm{F}}\) [Eq.~\eqref{eq: define fidelity susceptiblity simple form app}] can be established rigorously 
via exponential clustering of correlations%
~\cite{Hastings2006,Nachtergaele2006,Nachtergaele2006_CMP}, which in turn follows from the Lieb--Robinson bound%
~\cite{Lieb1972}.
Here, we instead give a more elementary, intuitive argument.

\subsubsection{Scaling of $|V^{\,}_{10}|$ in the low-temperature regime}
\label{app:Ns_deltav_LT}

We consider $|V^{\,}_{10}|=|\langle E_1^{(0)}|\hat{V}|E_0^{(0)}\rangle|$,
which enters the low-temperature expansion of $\delta V$ [Eq.~\eqref{eq: QSL mixed states low T final app}].
At low temperature, a convenient physical picture is that the lowest excited state that couples to the ground state
under $\hat V$ is a delocalized single-quasiparticle superposition of local excitations:
\[
|E_1^{(0)}\rangle\simeq \ket{k}
=
\frac{1}{\sqrt N}\sum_{j=1}^N e^{\mathrm{i}kj}\ket{j},
\]
where $\ket{j}$ is localized near site $j$ and $k$ is a crystal momentum.
Assuming that $\hat{v}^{\,}_j$ creates $\ket{j}$ from the ground state with an $\mathcal O(1)$ amplitude $a$
(namely, $\hat{v}^{\,}_j|E_0^{(0)}\rangle \simeq a\ket{j}$), 
we obtain
\begin{align}
\langle E_1^{(0)}|\hat{V}|E_0^{(0)}\rangle
&\simeq
\bra{k}\hat{V}|E_0^{(0)}\rangle
=
\sum^{N}_{j=1} \bra{k}\hat{v}_j|E_0^{(0)}\rangle
\simeq
a\sum^{N}_{j=1} \braket{k}{j}
=
a\sum^{N}_{j=1}
\left(
\frac{1}{\sqrt N}\sum^{N}_{\ell=1}e^{-\mathrm{i}k\ell}\braket{\ell}{j}
\right)
=
a\frac{1}{\sqrt N}\sum^{N}_{j=1} e^{-\mathrm{i}k j}.
\nonumber
\end{align}
The remaining factor is a standard geometric lattice sum. For a periodic chain, the crystal momentum is quantized, and one has the exact identity (see, e.g., Ref.~\cite{Ashcroft1976}):
\(
\sum_{j=1}^{N}e^{-\mathrm{i}kj}
=
N\,\delta^{\,}_{k,0}.
\)
Since we have chosen $|E_1^{(0)}\rangle$ such that $V^{\,}_{10}\neq 0$, the corresponding lattice sum does not cancel and therefore contributes a factor proportional to $N$. Hence,
\begin{align}
\left|\langle E_1^{(0)}|\hat{V}|E_0^{(0)}\rangle\right|
\asymp
\sqrt N.
\label{eq:Ns_M_scaling}
\end{align}

\paragraph*{Remark.}
A crude upper bound is $|V^{\,}_{10}|\le \|\hat V\|\asymp N$.
The estimate~\eqref{eq:Ns_M_scaling} is the anticipated scaling in generic gapped phases when the lowest relevant
excitation is a delocalized quasiparticle and the local creation amplitude $a$ is $\mathcal O(1)$.

\subsubsection{Scaling of $|V^{\,}_{10}|^{2}/\Delta^2$ in the low-temperature regime}
\label{app:Ns_chiF_LT}

The ratio $|V^{\,}_{10}|^2/\Delta^2$, which appears in the low-temperature expansion of
$\chi^{\,}_{\mathrm{F}}$ [Eq.~\eqref{eq: fidelity susceptiblity low T app}],
can be estimated using Eq.~\eqref{eq:Ns_M_scaling} together with $\Delta=\mathcal O(1)$ in a gapped phase:
\(
|V^{\,}_{10}|^2/\Delta^2\asymp N.
\)

\subsubsection{Scaling of $\|[\hat{H}^{\,}_{0},\hat{V}]\|^{\,}_{\mathrm{HS}}$ in the high-temperature regime}
\label{app:Ns_deltav_HT}

The Hilbert--Schmidt norm $\|[\hat{H}^{\,}_{0},\hat{V}^{\,}]\|_{\mathrm{HS}}$ appears in the high-temperature expansion of $\delta V$ 
[Eq.~\eqref{eq: QSL mixed states high T final app}].
To estimate its $N$-scaling, we use the locality assumptions in Eq.~\eqref{eq:Ns_local_sums}.
Write the commutator as
$
[\hat{H}^{\,}_{0},\hat{V}^{\,}]
=\sum_{i,j}[\hat{h}^{\,}_{i},\hat{v}^{\,}_{j}].
$
If $\hat{h}^{\,}_{i}$ and $\hat{v}^{\,}_{j}$ have disjoint supports, then they commute and $[\hat{h}^{\,}_{i},\hat{v}^{\,}_{j}]=0$.
Because the terms are finite range, for each $i$ only $\mathcal O(1)$ values of $j$ overlap, so there are only $\mathcal O(N)$ nonzero local
commutators in the sum. Denote these nonzero terms by $\{\hat{c}^{\,}_{a}\}_{a=1}^{\mathcal O(N)}$, where each $\hat{c}^{\,}_{a}$ is supported on
$\mathcal O(1)$ sites and has $\mathcal O(1)$ norm.
Then,
$
\|[\hat{H}^{\,}_{0},\hat{V}^{\,}]\|_{\mathrm{HS}}^{2}
=\sum_{a,b}\Tr\!\left((\hat{c}^{\,}_{a})^{\dagger}\hat{c}^{\,}_{b}\right).
$
By locality and tracelessness of the building blocks, $\Tr((\hat{c}^{\,}_{a})^{\dagger}\hat{c}^{\,}_{b})$ vanishes unless the supports
of $\hat{c}^{\,}_{a}$ and $\hat{c}^{\,}_{b}$ overlap to form an $\mathcal O(1)$-sized connected cluster; hence only $\mathcal O(N)$ pairs $(a,b)$ contribute.
For each such nonzero pair, the operator $(\hat{c}^{\,}_{a})^{\dagger}\hat{c}^{\,}_{b}$ acts nontrivially on only $\mathcal O(1)$ sites, so tracing over the
remaining $N-\mathcal O(1)$ inactive sites produces an overall multiplicative factor
\(
\Tr(\mathbb{I})=2^{\,N-\mathcal O(1)}\propto d.
\)
Therefore,
$
\|[\hat{H}^{\,}_{0},\hat{V}^{\,}]\|_{\mathrm{HS}}^{2}\asymp N d,$ so that 
$
\|[\hat{H}^{\,}_{0},\hat{V}^{\,}]\|_{\mathrm{HS}}\asymp \sqrt{N d}.
$

\subsubsection{Scaling of $\sum_{m\neq n}|V^{\,}_{mn}|^2$ in the high-temperature regime}
\label{app:Ns_chiF_HT}

Consider
\(
\sum_{\substack{m,n\\ m\neq n}} |V_{mn}|^{2}
=\mathrm{Tr}[\hat V^{\,2}]-\sum_{n}|V_{nn}|^{2},
\)
which enters the high-temperature expansion of $\chi^{\,}_{\mathrm F}$ [Eq.~\eqref{eq: fidelity susceptiblity high T app}].
We first estimate the basis-independent quantity $\mathrm{Tr}[\hat V^{\,2}]=\|\hat V\|_{\mathrm{HS}}^{2}$.
Write $\hat V=\sum_{i=1}^{N}\hat v^{\,}_{i}$ as a sum of local terms with $\mathcal O(1)$ norm, and shift each local term by the identity so that $\mathrm{Tr}[\hat v^{\,}_{i}]=0$.
Expanding $\mathrm{Tr}[\hat V^{\,2}]$ then yields a sum over pairs $(i,j)$.
If $\hat v^{\,}_{i}$ and $\hat v^{\,}_{j}$ have disjoint supports, the trace factorizes and vanishes because the local terms are traceless.
Hence, only $\mathcal O(N)$ pairs $(i,j)$ with overlapping supports contribute.
Moreover, each nonzero contribution acts nontrivially on only $\mathcal O(1)$ sites, so tracing over the remaining $N-\mathcal O(1)$ inactive sites produces an overall factor proportional to $d=2^{N}$.
Therefore, generically,
\(
\mathrm{Tr}[\hat V^{\,2}]\asymp N d.
\)

Next, the diagonal contribution satisfies the general bound
\(
0\le \sum_{n}|V^{\,}_{nn}|^{2}\le \mathrm{Tr}[\hat V^{\,2}],
\)
and therefore the off-diagonal sum is also bounded above by $\mathrm{Tr}[\hat V^{\,2}]$.
For a generic local drive (i.e., excluding symmetry-enforced selection rules or fine-tuned cases where $\hat V$ is (block-)diagonal in the $\hat H^{\,}_0$ eigenbasis), we do not expect the diagonal part $\sum_{n}|V_{nn}|^{2}$ alone to account for essentially all of $\mathrm{Tr}[\hat V^{\,2}]=\sum_{m,n}|V_{mn}|^{2}$.
Under this genericity assumption, the off-diagonal contribution remains extensive, so
\(
\sum_{\substack{m,n\\ m\neq n}} |V_{mn}|^{2}\asymp Nd.
\)


\section{Transverse-field Ising chain and quantum XY chain: exact results via the transfer-matrix method}
\label{sec:tfic_qxyc_transfer_matrix}

We consider driven spin chains with periodic boundary conditions, whose Hamiltonian takes the form
\(
\hat{H}^{\,}_{\lambda}=\hat{H}^{\,}_{0}+\lambda(t)\,\hat{V},
\)
where the unperturbed Hamiltonian is of Ising type,
\(
\hat{H}^{\,}_{0}=-J\sum^{N}_{j=1}Z^{\,}_{j}Z^{\,}_{j+1},
\)
with \(J>0\). The control parameter \(\lambda(t)\) is time-dependent and monotonic, with \(\lambda(0)=0\).

We study two representative choices of the driving term \(\hat V\).
For the transverse-field Ising chain (TFIC), we take
\(
\hat{V}^{\,}_{\mathrm{TFIC}}= -J\sum^{N}_{j=1}X^{\,}_{j}
\)
with
\(
\lambda(t)= \frac{h(t)}{J},
\)
where \(h(t)\) is the time-dependent transverse field, and we assume \(h(0)=0\).
For the quantum XY chain (QXYC), we take
\(
\hat{V}^{\,}_{\mathrm{QXYC}}= -J\sum^{N}_{j=1}
\left(
X^{\,}_{j}X^{\,}_{j+1}
-
Z^{\,}_{j}Z^{\,}_{j+1}
\right)
\)
with
\(
\lambda(t)= \frac{1+\gamma(t)}{2},
\)
where \(\gamma(t)\) is the time-dependent anisotropy parameter, and we assume \(\gamma(0)=-1\).

We assume that the system is initially in a thermal state at inverse temperature \(\beta\),
$
\hat{\rho}^{\,}_{0}(\beta)=\frac{e^{-\beta\hat{H}^{\,}_{0}}}{Z^{\,}_{0}(\beta)}
$
with
$
Z^{\,}_{0}(\beta)\equiv \Tr\!e^{-\beta\hat{H}^{\,}_{0}}.
$
The Boltzmann weight can be factorized as
$
e^{-\beta\hat{H}^{\,}_{0}}
= e^{\beta J\sum^{N}_{j=1}Z^{\,}_{j}Z^{\,}_{j+1}}
=
\prod^{N}_{j=1}
e^{\beta JZ^{\,}_{j}Z^{\,}_{j+1}}.
$

For later convenience, we evaluate \(Z^{\,}_{0}(\beta)\).
Since all \(Z^{\,}_{j}Z^{\,}_{j+1}\) commute, \(\hat H_0\) is diagonal in the \(Z\)-product basis
\(|\mathbf{s}\rangle=|s_1,\ldots,s_N\rangle\), where \(s^{\,}_{j}=\pm1\) are the eigenvalues of \(Z^{\,}_{j}\).
Hence,
\begin{align}
Z^{\,}_{0}(\beta)
=\Tr\!\left(e^{-\beta \hat H_0}\right)
=\sum_{\{s^{\,}_{j}=\pm1\}}
\exp\!\Big(\beta J\sum_{j=1}^N s^{\,}_{j} s_{j+1}\Big).
\nonumber
\end{align}
Introduce the transfer matrix \(T\in\mathbb{C}^{2\times2}\) with entries
\begin{align}
T^{\,}_{s,s'}=\exp(\beta J\, s s')
=
\begin{pmatrix}
e^{\beta J} & e^{-\beta J}\\
e^{-\beta J} & e^{\beta J}
\end{pmatrix},
\nonumber
\end{align}
so that \(Z^{\,}_{0}(\beta)=\Tr(T^N)\).
The eigenvalues of \(T\) are
$
\lambda_{\pm}=e^{\beta J}\pm e^{-\beta J},
$
i.e.
$
\lambda_{+}=2\cosh(\beta J)
$
and
$
\lambda_{-}=2\sinh(\beta J).
$
Therefore,
\begin{align}
Z^{\,}_{0}(\beta)=\lambda_{+}^{N}+\lambda_{-}^{N}
=(2\cosh(\beta J))^{N}+(2\sinh(\beta J))^{N}
=2^{N}\Big(\cosh^{N}(\beta J)+\sinh^{N}(\beta J)\Big).
\label{eq:Z0_transfer_matrix}
\end{align}

With these preliminaries, we now evaluate both \(\delta V\) [Eq.~\eqref{eq: QSL mixed states simplified}]
and \(\chi^{\,}_{\mathrm{F}}\) [Eq.~\eqref{eq: define fidelity susceptiblity}] for the two models at arbitrary temperature.

\subsection{Evaluation of $\delta V$}

Our first goal is to compute \(\delta V\) [Eq.~\eqref{eq: QSL mixed states simplified}] for both models:
\begin{align}
\delta V
= \frac{\|[\hat{\rho}^{\,}_{0},\hat{V}]\|^{\,}_{2}}{\|\hat{\rho}^{\,}_{0}\|^{\,}_{2}}
= \sqrt{2}\,
\sqrt{\frac{\Tr\!\left(\hat{\rho}^{2}_{0}\hat{V}^2\right)
-\Tr\!\left[\left(\hat{\rho}^{\,}_{0}\hat{V}\right)^2\right]}
{\Tr\!\left(\hat{\rho}^{2}_{0}\right)}}\, .
\label{eq: recall deltaV}
\end{align}
It is convenient to rewrite \(\delta V\) in terms of unnormalized Boltzmann weights.
Define \(K\equiv \beta J\) and use
\(\hat{\rho}_0=\frac{1}{Z^{\,}_{0}(\beta)}e^{K\sum_j Z^{\,}_{j}Z^{\,}_{j+1}}\). One finds
\begin{align}
\Tr(\hat{\rho}_0^2)&=\frac{Z^{\,}_{0}(2\beta)}{Z^{\,}_{0}(\beta)^2},\\
\Tr(\hat{\rho}_0^2\hat V^2)&=\frac{1}{Z^{\,}_{0}(\beta)^2}\Tr\!\left(e^{2K\sum_j Z^{\,}_{j}Z^{\,}_{j+1}}\hat V^2\right),\\
\Tr\!\left[(\hat{\rho}_0\hat V)^2\right]
&=\frac{1}{Z^{\,}_{0}(\beta)^2}\Tr\!\left(e^{K\sum_i Z^{\,}_{i}Z^{\,}_{i+1}}\hat V\,e^{K\sum_j Z^{\,}_{j}Z^{\,}_{j+1}}\hat V\right).
\nonumber
\end{align}
Therefore,
\begin{align}
\left(\delta V\right)^2
=
\frac{2}{Z^{\,}_{0}(2\beta)}
\Bigg[
\Tr\!\left(e^{2K\sum_i Z^{\,}_{i}Z^{\,}_{i+1}}\hat V^2\right)
-
\Tr\!\left(e^{K\sum_i Z^{\,}_{i}Z^{\,}_{i+1}}\hat V\,e^{K\sum_j Z^{\,}_{j}Z^{\,}_{j+1}}\hat V\right)
\Bigg].
\label{eq:deltaV2_unnormalized}
\end{align}

\subsubsection{Evaluation of $\delta V$ in TFIC}

For the TFIC drive,
$
\hat V^{\,}_{\rm TFIC}=-J\sum_{j=1}^N X^{\,}_{j}.
$

\paragraph*{Step 1: Compute } \(\Tr(e^{2K\sum^{\,}_{i} Z^{\,}_{i}Z^{\,}_{i+1}}\hat V^2)\).
Expanding \(\hat V^2_{\rm TFIC}\) gives
$
\hat V^{2}_{\rm TFIC}
=J^2\sum_{j,k=1}^N X^{\,}_{j}X^{\,}_{k}.
$
Since \(e^{2K\sum Z^{\,}_{i}Z^{\,}_{i+1}}\) is diagonal in the \(Z\)-basis,
only ``on-site'' terms of \(X^{\,}_{j}X^{\,}_{k}\) contribute to the trace.
For \(j\neq k\), \(X^{\,}_{j}X^{\,}_{k}\) flips two distinct spins and has vanishing diagonal elements,
while for \(j=k\), \((X^{\,}_{j})^2=\mathbb{I}\).
Hence,
\begin{align}
\Tr\!\left(e^{2K\sum_{i} Z^{\,}_{i}Z^{\,}_{i+1}}\hat V_{\rm TFIC}^2\right)
=J^2\sum_{j=1}^N \Tr\!\left(e^{2K\sum_{i} Z^{\,}_{i}Z^{\,}_{i+1}}\mathbb{I}\right)
=NJ^2 Z^{\,}_{0}(2\beta).
\label{eq:TFIC_term1}
\end{align}

\paragraph*{Step 2: Compute } \(\Tr(e^{K\sum_{i} Z^{\,}_{i}Z^{\,}_{i+1}}\hat V\,e^{K\sum_{j} Z_{j}Z^{\,}_{j+1}}\hat V)\).
Similarly,
\begin{align}
A:= \Tr\!\left(e^{K\sum_{i} Z^{\,}_{i}Z^{\,}_{i+1}}\hat V_{\rm TFIC}\,e^{K\sum_{j} Z_{j}Z^{\,}_{j+1}}\hat V_{\rm TFIC}\right)
&=
J^2\sum_{k,\ell=1}^N
\Tr\!\left(e^{K\sum_{i} Z^{\,}_{i}Z^{\,}_{i+1}}X^{\,}_{k}\,e^{K\sum_{j} Z_{j}Z^{\,}_{j+1}}X^{\,}_{\ell}\right)
\nonumber\\
&=J^2\sum_{k=1}^N
\Tr\!\left(e^{K\sum_{i} Z^{\,}_{i}Z^{\,}_{i+1}}X^{\,}_{k}\,e^{K\sum_{j} Z_{j}Z^{\,}_{j+1}}X^{\,}_{k}\right),
\nonumber
\end{align}
where we have used the fact that, in the \(Z\)-basis, the trace is nonzero only for \(k=\ell\).
By translational invariance, each term has the same value, thus
\begin{align}
A
=&\, NJ^2
\Tr\!\left(e^{K\sum_{i} Z^{\,}_{i}Z^{\,}_{i+1}}X^{\,}_{1}\,e^{K\sum_{j} Z_{j}Z^{\,}_{j+1}}X^{\,}_{1}\right)
\nonumber\\
=&\,
NJ^2 \Tr\!\left[
\left(e^{KZ_{1}\left(Z_{2}+Z^{\,}_{N}\right)}X^{\,}_{1}\,e^{KZ_{1}\left(Z_{2}+Z^{\,}_{N}\right)}X^{\,}_{1}\right)
\left(e^{2K\sum^{N-1}_{i=2} Z^{\,}_{i}Z^{\,}_{i+1}}\right)
\right]
\nonumber.
\end{align}
To proceed, recall that $|s^{\,}_{1}\rangle$ (with $s^{\,}_{1}=\pm 1$) is the eigenstate of $Z^{\,}_{1}$ with eigenvalue $s^{\,}_{1}$,
and $X^{\,}_{1}|s^{\,}_{1}\rangle=|-s^{\,}_{1}\rangle$.
We use this basis to perform the trace over the first spin:
\begin{align}
\Tr^{\,}_{1}
\left(e^{KZ_{1}\left(Z_{2}+Z^{\,}_{N}\right)}X^{\,}_{1}\,e^{KZ_{1}\left(Z_{2}+Z^{\,}_{N}\right)}X^{\,}_{1}\right)
=&\, \sum^{\,}_{s^{\,}_{1}=\pm 1}
\langle s^{\,}_{1}|
\left(e^{KZ_{1}\left(Z_{2}+Z^{\,}_{N}\right)}X^{\,}_{1}\,e^{KZ_{1}\left(Z_{2}+Z^{\,}_{N}\right)}X^{\,}_{1}\right)
|s^{\,}_{1}\rangle
\nonumber\\
=&\, \sum^{\,}_{s^{\,}_{1}=\pm 1}
e^{Ks^{\,}_{1}\left(Z_{2}+Z^{\,}_{N}\right)}
e^{-Ks^{\,}_{1}\left(Z_{2}+Z^{\,}_{N}\right)}
=2.
\nonumber
\end{align}
Therefore, we obtain
$
A
=
2NJ^2
\Tr^{\,}_{2,3,\cdots,N}\left(e^{2K\sum^{N-1}_{i=2} Z^{\,}_{i}Z^{\,}_{i+1}}\right)
=
2NJ^{2} Q^{\,}_{N-1}(2\beta),
$
where
\begin{align}
Q^{\,}_{N-1}(2\beta)\equiv
\sum^{\,}_{s^{\,}_{2},\cdots, s^{\,}_{N}=\pm1}
\exp\!\Big(2K\sum^{N-1}_{i=2} s^{\,}_{i}s^{\,}_{i+1}\Big).
\label{eq: define Q TFIC}
\end{align}
One recognizes that $Q^{\,}_{N-1}(2\beta)$ is the partition function of the open Ising chain with $N-1$ spins at inverse temperature $2\beta$.
Its value can be obtained using a recursive method (see, e.g., Ref.~\cite{Nishimori2010}):
$
Q^{\,}_{N-1}(2\beta)
=
2\left[
2\cosh(2K)
\right]^{N-2}.
$
Therefore,
\begin{align}
A=\Tr\!\left(e^{K\sum_{i} Z^{\,}_{i}Z^{\,}_{i+1}}\hat V_{\rm TFIC}\,e^{K\sum_{j} Z_{j}Z^{\,}_{j+1}}\hat V_{\rm TFIC}\right)
=
2NJ^2
Q^{\,}_{N-1}(2\beta)
=
4NJ^{2}\left[
2\cosh(2K)
\right]^{N-2}.
\label{eq:Qx_closed}
\end{align}

\paragraph*{Step 3: Assemble \((\delta V)^2\).}
Plugging \eqref{eq:TFIC_term1} and \eqref{eq:Qx_closed} into \eqref{eq:deltaV2_unnormalized} yields
\begin{align}
(\delta V^{\,}_{\rm TFIC})^2
=
2NJ^2\left(1-2\frac{Q^{\,}_{N-1}(2\beta)}{Z^{\,}_{0}(2\beta)}\right).
\nonumber
\end{align}
Using \(Z^{\,}_{0}(2\beta)=(2\cosh(2\beta J))^N+(2\sinh(2\beta J))^N\) [Eq.~\eqref{eq:Z0_transfer_matrix} with $\beta\to 2\beta$], we obtain the closed form
\begin{align}
\delta V^{\,}_{\rm TFIC}
=
\sqrt{2N}J
\tanh(2\beta J)\,
\left(
\frac{1+\tanh^{N-2}(2\beta J)}{1+\tanh^N(2\beta J)}
\right)^{1/2}.
\label{eq:deltaV_TFIC_closed}
\end{align}
In particular, for fixed finite \(\beta\) and large \(N\) (so that \(\tanh^N(2\beta J)\to 0\)),
\begin{align}
\delta V_{\rm TFIC}\xrightarrow[N\to\infty]{}
\sqrt{2N}\,J\,\tanh(2\beta J).
\label{eq:deltaV_TFIC_closed large N}
\end{align}

\subsubsection{Evaluation of $\delta V$ in QXYC}

We now consider the quantum XY drive,
$
\hat V^{\,}_{\rm QXYC}
= \hat V^{\,}_x+\hat V^{\,}_z,
$
with
$
\hat V_x=-J\sum_{j=1}^N X^{\,}_{j}X^{\,}_{j+1},
$
and 
$
\hat V_z=+J\sum_{j=1}^N Z^{\,}_{j}Z^{\,}_{j+1}.
$
Since \(\hat V^{\,}_z\) is diagonal in the \(Z\)-basis, it commutes with \(\hat{\rho}^{\,}_0\).
Therefore,
\begin{align}
\delta V^{\,}_{\mathrm{QXYC}}
=
\frac{\|[\hat{\rho}^{\,}_{0},\hat{V}^{\,}_{\mathrm{QXYC}}]\|^{\,}_{2}}{\|\hat{\rho}^{\,}_{0}\|^{\,}_{2}}
=
\frac{\|[\hat{\rho}^{\,}_{0},\hat{V}^{\,}_{x}]\|^{\,}_{2}}{\|\hat{\rho}^{\,}_{0}\|^{\,}_{2}}
\equiv \delta V^{\,}_{x}.
\nonumber
\end{align}
Now repeat the TFIC steps with the bond-flip operators
\(\hat B^{\,}_j\equiv X^{\,}_{j}X^{\,}_{j+1}\), so that \(\hat V^{\,}_x=-J\sum_j \hat B^{\,}_{j}\).
Exactly as before, only identical bonds contribute to both traces in Eq.~\eqref{eq:deltaV2_unnormalized}.
The analog of \(Q^{\,}_{N-1}(2\beta)\) becomes
\begin{align}
\tilde{Q}(2\beta)
\equiv
\sum_{\{s^{\,}_{j}=\pm1\}}
\exp\!\Big(K\sum_{j=1}^N s^{\,}_{j}s_{j+1}\Big)\,
\exp\!\Big(K\sum_{j=1}^N \tilde s^{\,}_{j} \tilde s_{j+1}\Big),
\label{eq: define Qtilde QXYC}
\end{align}
where \(\{\tilde s^{\,}_{j}\}\) differs from \(\{s^{\,}_{j}\}\) by flipping two adjacent spins, e.g.,
\(\tilde s^{\,}_{1}=-s^{\,}_{1}\), \(\tilde s^{\,}_{2}=-s^{\,}_{2}\), and \(\tilde s^{\,}_{j}=s^{\,}_{j}\) for \(j\ge 3\).
A direct computation shows that
\begin{align}
\sum_{j=1}^N s^{\,}_{j}s_{j+1}+\sum_{j=1}^N \tilde s^{\,}_{j}\tilde s_{j+1}
=
2s^{\,}_{1}s^{\,}_{2}
+
2\sum_{j=3}^{N-1}s^{\,}_{j}s_{j+1},
\nonumber
\end{align}
i.e., the two bonds connecting the flipped dimer to the rest cancel, while all remaining bonds are doubled to \(2K\).
Hence, the sum factorizes, giving
\begin{align}
\tilde{Q}(2\beta)
=
\sum_{s^{\,}_{1},s^{\,}_{2}=\pm1}e^{2K s^{\,}_{1}s^{\,}_{2}}
\sum_{s^{\,}_{3},\ldots,s^{\,}_{N}=\pm1}
\exp\!\Big(2K\sum_{j=3}^{N-1}s^{\,}_{j}s_{j+1}\Big)
=
\big(4\cosh(2K)\big)\times
2\big(2\cosh(2K)\big)^{N-3}
=
4\big(2\cosh(2K)\big)^{N-2}.
\nonumber
\end{align}
Consequently, one obtains the same \(\delta V\) as in the TFIC case [Eq.~\eqref{eq:deltaV_TFIC_closed}]:
$
\delta V^{\,}_{\rm QXYC}
=
\delta V^{\,}_{\rm TFIC}.
$

\subsection{Evaluation of $\chi^{\,}_{\mathrm{F}}$}

Next, we compute the fidelity susceptibility $\chi^{\,}_{\mathrm{F}}$ using Eq.~\eqref{eq: define fidelity susceptiblity simple form app} for both
the TFIC and the QXYC models:
\begin{align}
\chi^{\,}_{\mathrm F}
=
\frac{2}{Z^{\,}_{0}(2\beta)}
\sum_{\substack{m,n\\ m\neq n}}
(p^{\,}_m-p^{\,}_n)^2
\frac{|\langle E_m^{(0)}|\hat V|E_n^{(0)}\rangle|^2}{(E_m^{(0)}-E_n^{(0)})^2},
\qquad p^{\,}_n:=e^{-\beta E_n^{(0)}}.
\label{eq: define fidelity susceptiblity simple form app2}
\end{align}
Here and throughout this section, we exclude degenerate pairs with $E_m^{(0)}=E_n^{(0)}$ (i.e., the $\Delta E=0$ channels),
consistent with using the non-degenerate perturbative form of $\chi^{\,}_{\mathrm F}$.

\subsubsection{Evaluation of $\chi^{\,}_{\mathrm{F}}$ in TFIC}

For the TFIC drive given by
$
\hat H_0=-J\sum_{j=1}^N Z^{\,}_{j}Z^{\,}_{j+1}
$ and
$
\hat V_{\rm TFIC}=-J\sum_{j=1}^N X^{\,}_{j}
$ 
with
$(J>0),
$
an eigenbasis of \(\hat H_0\) is the \(Z\)-product basis
\(\ket{\mathbf s}:=\ket{s_1,\dots,s_N}\) with \(s^{\,}_{j}=\pm 1\),
and eigenenergies
$
E(\mathbf s):=-J\sum_{j=1}^N s^{\,}_{j}s_{j+1}.
$
The operator \(X^{\,}_{j}\) flips spin \(j\). Denote by \(\mathbf s^{(j)}\) the configuration obtained from \(\mathbf s\)
by flipping \(s^{\,}_{j}\to -s^{\,}_{j}\).
Then, the operator $\hat V_{\rm TFIC}$ connects $|\mathbf{s}\rangle$ only to the configuration \(|\mathbf s^{(j)}\rangle\) obtained by flipping $j$, with matrix element
\begin{align}
\langle \mathbf s|\hat V_{\rm TFIC}|\mathbf s^{(j)}\rangle=-J,
\qquad
\langle \mathbf s|\hat V_{\rm TFIC}|\mathbf s'\rangle=0
\quad \text{if } \mathbf s'\neq \mathbf s^{(j)}\ \forall j.
\nonumber
\end{align}
Hence, the sum in \eqref{eq: define fidelity susceptiblity simple form app2} reduces to single-spin-flip pairs.

The energy difference associated with flipping spin \(j\) is
$
\Delta E_j(\mathbf s)
\equiv E(\mathbf s)-E(\mathbf s^{(j)})
=-2J\,s^{\,}_{j}(s_{j-1}+s_{j+1}).
$
Therefore, \(\Delta E_j=-4Js^{\,}_{j}s^{\,}_{j+1}\) when \(s_{j-1}=s_{j+1}\), while \(\Delta E_j=0\) when \(s_{j-1}=-s_{j+1}\).
The latter contribution is excluded in the sum in Eq.~\eqref{eq: define fidelity susceptiblity simple form app2}.
Hence, we may restrict to the case \(s_{j-1}=s_{j+1}\).

Using \(|\langle \mathbf s|\hat V_{\rm TFIC}|\mathbf s^{(j)}\rangle|^2=J^2\) and \((\Delta E_j)^2=(4J)^2\) on the contributing configurations,
and writing the sum over \((m,n)\) as a sum over \((\mathbf{s},j)\),
Eq.~\eqref{eq: define fidelity susceptiblity simple form app2} becomes
\begin{align}
\chi^{\rm TFIC}_{\mathrm F}
&=
\frac{2}{Z^{\,}_{0}(2\beta)}
\sum_{\mathbf{s}}\sum_{j=1}^N
I^{\,}_{j}(\mathbf{s})
\big(p(\mathbf{s})-p(\mathbf{s}^{(j)})\big)^2
\frac{J^2}{16J^2},
\nonumber
\end{align}
where \(I^{\,}_{j}(\mathbf{s})\) denotes the indicator function:
\begin{align}
I^{\,}_{j}(\mathbf{s})
:=
\frac{1+s_{j-1}s_{j+1}}{2}
=
\begin{cases}
1,& s_{j-1}=s_{j+1},\\
0,& s_{j-1}=-s_{j+1}.
\end{cases}
\nonumber
\end{align}

Now use
$
\big(p(\mathbf{s})-p(\mathbf{s}^{(j)})\big)^2
=
p(\mathbf{s})^2\Big(1-e^{-\beta(E(\mathbf{s}^{(j)})-E(\mathbf{s}))}\Big)^2
=
e^{-2\beta E(\mathbf{s})}\Big(1-e^{+\beta\Delta E_j(\mathbf{s})}\Big)^2,
$
so that the entire sum becomes an expectation value in the \(2\beta\) thermal average:
\begin{align}
\chi^{\rm TFIC}_{\mathrm F}
&=
\frac{1}{8Z^{\,}_{0}(2\beta)}
\sum_{\mathbf{s}}\sum_{j=1}^N
I^{\,}_{j}(\mathbf{s})
e^{-2\beta E(\mathbf{s})}\Big(1-e^{+\beta\Delta E_j(\mathbf{s})}\Big)^2
\nonumber\\
&=
\frac{1}{8}
\sum_{j=1}^N
\Big\langle
I^{\,}_{j}(\mathbf{s})
\big(1-e^{+\beta\Delta E_j(\mathbf{s})}\big)^2
\Big\rangle_{2\beta},
\qquad
\langle \cdots\rangle_{2\beta}:=
\frac{1}{Z^{\,}_{0}(2\beta)}\sum_{\mathbf{s}} e^{-2\beta E(\mathbf{s})}(\cdots).
\nonumber
\end{align}
Due to translation invariance, we may replace \(\sum^{N}_{j=1}\) by \(N\) times a representative site:
\begin{align}
\chi^{\rm TFIC}_{\mathrm F}
=
\frac{N}{8}
\Big\langle
I^{\,}_{1}(\mathbf{s})\Big(1-e^{-4\beta J\,s^{\,}_{1}s_{2}}\Big)^2
\Big\rangle_{2\beta}.
\label{eq:chiF_TFIC_to_classical_sum}
\end{align}

We evaluate Eq.~\eqref{eq:chiF_TFIC_to_classical_sum} using the transfer-matrix method.
Let \(K^{\,}_{2}\equiv 2\beta J\). Then
\begin{align}
Z^{\,}_{0}(2\beta)=\sum_{\mathbf s} e^{K^{\,}_{2}\sum_{i=1}^N s^{\,}_is^{\,}_{i+1}}
=\Tr(T^N)=\lambda^{N}_{+}+\lambda^{N}_{-},
\qquad
T^{\,}_{s,s'}:=e^{K^{\,}_{2}ss'}.
\nonumber
\end{align}
The eigenvalues of the transfer matrix \(T\) are
$
\lambda^{\,}_+=2\cosh K^{\,}_{2}
$
and
$
\lambda^{\,}_-=2\sinh K^{\,}_{2}.
$
Define
\begin{align}
S^{\,}_1:=
\sum_{\mathbf s}
e^{K^{\,}_{2}\sum_i s_is_{i+1}}
I^{\,}_{1}(\mathbf{s})\Big(1-e^{-2K^{\,}_{2}\,s^{\,}_{1}s_{2}}\Big)^2,
\nonumber
\end{align}
so that Eq.~\eqref{eq:chiF_TFIC_to_classical_sum} can be written as
\begin{align}
\chi^{\rm TFIC}_{\mathrm F}
=
\frac{NS^{\,}_{1}}{8\,Z^{\,}_{0}(2\beta)}.
\label{eq:chiF_TFIC_to_classical_sum N times}
\end{align}
We are left with evaluating $S^{\,}_{1}$ using the transfer-matrix method.

Writing \(e^{K^{\,}_{2}\sum_i s_is_{i+1}}=\prod_{i=1}^N T_{s_i,s_{i+1}}\) and noting that
\(I^{\,}_{1}(\mathbf s)=1\) enforces \(s_N=s_2\), we obtain
\begin{align}
S_1
=
\sum_{s_1,s_2=\pm 1}
T_{s_1,s_2}\,(T^{N-2})_{s_2,s_2}\,T_{s_2,s_1}\,
\Big(1-e^{-2K^{\,}_{2} s_1s_2}\Big)^2.
\nonumber
\end{align}
The prefactor simplifies as
\begin{align}
T_{s_1,s_2}T_{s_2,s_1}\Big(1-e^{-2K^{\,}_{2} s_1s_2}\Big)^2
&=
e^{2K^{\,}_{2} s_1s_2}\Big(1-e^{-2K^{\,}_{2} s_1s_2}\Big)^2
=
\Big(e^{K^{\,}_{2} s_1s_2}-e^{-K^{\,}_{2} s_1s_2}\Big)^2
=
4\sinh^2 K^{\,}_{2},
\nonumber
\end{align}
which is independent of \(s_1,s_2\). Therefore,
\begin{align}
S^{\,}_1
&=
4\sinh^2 K^{\,}_{2}\sum_{s_1,s_2}(T^{N-2})_{s_2,s_2}
=
8\sinh^2 K^{\,}_{2}\;\Tr(T^{N-2})
=
8\sinh^2 K^{\,}_{2}\;\big(\lambda_+^{N-2}+\lambda_-^{N-2}\big).
\label{eq:S1_closed}
\end{align}
Substituting \eqref{eq:S1_closed} into \eqref{eq:chiF_TFIC_to_classical_sum N times} gives
$
\chi^{\rm TFIC}_{\mathrm F}
=
N\sinh^2 K^{\,}_{2}\,
\frac{\lambda_+^{N-2}+\lambda_-^{N-2}}{\lambda_+^{N}+\lambda_-^{N}}.
$
Using \(\lambda_+=2\cosh K^{\,}_{2}\), \(\lambda_-=2\sinh K^{\,}_{2}\), and \(\lambda_-/\lambda_+=\tanh K^{\,}_{2}\), we obtain
\begin{align}
\chi^{\rm TFIC}_{\mathrm F}
=
\frac{N}{4}\,\tanh^2(2\beta J)\,
\frac{1+\tanh^{N-2}(2\beta J)}{1+\tanh^{N}(2\beta J)}.
\label{eq:chiF_TFIC_final}
\end{align}

\subsubsection{Evaluation of $\chi^{\,}_{\mathrm{F}}$ in QXYC}

For the QXYC drive,
$
\hat V^{\,}_{\rm QXYC}
=
-J\sum_{j=1}^N\Big(X^{\,}_{j}X^{\,}_{j+1}-Z^{\,}_{j}Z^{\,}_{j+1}\Big)
\equiv \hat V^{\,}_x+\hat V^{\,}_z,
$
with
$
\hat V^{\,}_x=-J\sum_{j=1}^NX^{\,}_{j}X^{\,}_{j+1}
$
and
$
\hat V^{\,}_z=+J\sum_{j=1}^NZ^{\,}_{j}Z^{\,}_{j+1}.
$
In Eq.~\eqref{eq: define fidelity susceptiblity simple form app2}, only off-diagonal matrix elements \(\langle m|\hat V|n\rangle\) with \(m\neq n\) contribute.
Since \(\hat V_z\) is diagonal in the \(Z\)-basis, it does not contribute. Hence, \(\chi^{\rm QXYC}_{\mathrm F}=\chi^{\,}_{\mathrm F}[\hat V_x]\).

Now \(X^{\,}_{j}X^{\,}_{j+1}\) flips the adjacent spins \(j\) and \(j+1\), mapping \(\mathbf{s}\mapsto \mathbf{s}^{(j,j+1)}\),
with matrix element \(\langle \mathbf{s}|\hat V_x|\mathbf{s}^{(j,j+1)}\rangle=-J\).
The corresponding energy difference under \(\hat H_0\) is
$
\Delta E_{j,j+1}(\mathbf{s})
:=
E(\mathbf{s})-E(\mathbf{s}^{(j,j+1)})
=
-2J\Big(s_{j-1}s^{\,}_{j}+s_{j+1}s_{j+2}\Big),
$
so \(\Delta E_{j,j+1}\in\{0,\pm4J\}\).
As above, the $\Delta E_{j,j+1}(\mathbf s)=0$ sector is excluded in the sum in Eq.~\eqref{eq: define fidelity susceptiblity simple form app2}, so we restrict to the $\Delta E=\pm 4J$ channels.
The computation proceeds exactly as in the TFIC case, with the only change that the relevant indicator is now expressed in terms of the bond variables
\(b_j:=s^{\,}_{j}s_{j+1}\) (note that \(E(\mathbf{s})=-J\sum_j b_j\)).
In these variables, \(X^{\,}_{j}X^{\,}_{j+1}\) flips precisely two bonds, \(b_{j-1}\to -b_{j-1}\) and \(b_{j+1}\to -b_{j+1}\),
while leaving the rest unchanged. This leads to the same transfer-matrix evaluation as above, and yields the same closed form:
$
\chi^{\rm QXYC}_{\mathrm F} = \chi^{\rm TFIC}_{\mathrm F}.
$

\subsection{Threshold driving rate $\Gamma^{\,}_{\mathrm{th}}$}

Substituting Eqs.~\eqref{eq:deltaV_TFIC_closed} and \eqref{eq:chiF_TFIC_final} into Eq.~\eqref{eq: define threshold driving rate}, we obtain for both the TFIC and QXYC
\begin{align}
\Gamma^{\,}_{\mathrm{th}}
&=
\frac{4\sqrt{2}J}{\sqrt{N}}\,
\coth(2\beta J)
\left(
\frac{1+\tanh^N(2\beta J)}{1+\tanh^{N-2}(2\beta J)}
\right)^{1/2}\alpha
\equiv
\Gamma^{\,}_{N} f^{\,}_{N}(\beta),
\nonumber
\end{align}
with
\[
\Gamma^{\,}_{N}\equiv\frac{4\sqrt{2}J}{\sqrt{N}}\alpha,
\qquad
f^{\,}_{N}(\beta)\equiv
\coth(2\beta J)
\left(
\frac{1+\tanh^N(2\beta J)}{1+\tanh^{N-2}(2\beta J)}
\right)^{1/2}.
\]

For any fixed finite $\beta>0$, $\tanh(2\beta J)<1$ and thus
\[
f(\beta):=\lim_{N\to\infty} f^{\,}_{N}(\beta)=\coth(2\beta J),
\qquad
f(\beta)=
\begin{cases}
1+2e^{-4\beta J}+\cdots, & \beta\to\infty,\\[2pt]
(2\beta J)^{-1}+\cdots, & \beta\to 0.
\end{cases}
\]

It is instructive to reverse the order of the thermodynamic limit and the low-temperature limit.
Taking the extreme-temperature limits at fixed $N$ gives instead
\begin{align}
\lim_{\beta\to\infty} f^{\,}_{N}(\beta)
&=
1+e^{-4\beta J}
+\left(N-\frac{1}{2}\right)e^{-8\beta J}
+\left(N-\frac{1}{2}\right)e^{-12\beta J}
-\left(\frac{1}{3}N^3-\frac{3}{2}N^2+\frac{7}{6}N-\frac{3}{8}\right)e^{-16\beta J}
+\mathcal{O}\!\big(e^{-20\beta J}\big),
\nonumber\\
\lim_{\beta\to0} f^{\,}_N(\beta)
&=
\frac{1}{2\beta J}+\frac{2\beta J}{3}-\frac{(2\beta J)^{3}}{45}
+\mathcal{O}\!\big((\beta J)^{5}\big),
\nonumber
\end{align}
showing that the thermodynamic and low-temperature limits do not commute.

\section{Mixed-field Ising chain: exact results via the transfer-matrix method}
\label{sec:mfic_transfer_matrix}

We consider the mixed-field Ising chain (MFIC) with periodic boundary conditions.
The Hamiltonian is written as $\hat H(t)=\hat H_{0,\mathrm{MFIC}}+\lambda(t)\hat V_{\mathrm{MFIC}}$, where the
unperturbed part contains a longitudinal field in addition to the Ising coupling:
\begin{align}
\hat{H}^{\,}_{0,\mathrm{MFIC}}
&=
\sum^{N}_{j=1}\Big(-JZ^{\,}_{j}Z^{\,}_{j+1}+B Z^{\,}_{j}\Big),
\qquad 
\hat{V}^{\,}_{\mathrm{MFIC}}=-J\sum^{N}_{j=1}X^{\,}_{j},
\qquad
\lambda(t)=\frac{h(t)}{J},
\nonumber
\end{align}
with $h(0)=0$.

We assume an initial thermal state at inverse temperature $\beta$,
$
\hat{\rho}^{\,}_{0}(\beta)=\frac{1}{Z^{\,}_{0}(\beta)}e^{-\beta\hat{H}^{\,}_{0,\mathrm{MFIC}}}
$
with
$
Z^{\,}_{0}(\beta)= \Tr\!e^{-\beta\hat{H}^{\,}_{0,\mathrm{MFIC}}}.
$
Since $\hat H_{0,\mathrm{MFIC}}$ is diagonal in the $Z$-product basis, all traces reduce to classical Ising sums
and can be evaluated via the transfer-matrix method. Throughout this section, it is convenient to work at inverse temperature $2\beta$.
Define
\begin{align}
K:=2\beta J,\qquad H:=2\beta B.
\nonumber
\end{align}
Introduce the symmetric $2\times2$ transfer matrix
\begin{align}
T^{(B)}_{s,s'}(2\beta)
=
\exp\!\left(
K\,ss' -\frac{H}{2}(s+s')
\right),
\qquad s,s'=\pm1.
\nonumber
\end{align}
Equivalently,
\begin{align}
T^{(B)}(2\beta)
=
\begin{pmatrix}
e^{K-H} & e^{-K}\\
e^{-K} & e^{K+H}
\end{pmatrix}
=
\begin{pmatrix}
e^{2\beta(J-B)} & e^{-2\beta J}\\
e^{-2\beta J} & e^{2\beta(J+B)}
\end{pmatrix}.
\nonumber
\end{align}
The partition function at inverse temperature $2\beta$ is
$
Z_0(2\beta)=\Tr\!\big[(T^{(B)})^N\big]=\Lambda_+^N+\Lambda_-^N,
$
with eigenvalues
\begin{align}
\Lambda_{\pm}
=
e^{K}\cosh H
\pm
\sqrt{e^{2K}\sinh^2 H+e^{-2K}}.
\label{eq:MFIC_lambdapm}
\end{align}

We will use the following two-eigenvalue trace identity later:
Let $T$ be a $2\times2$ matrix with eigenvalues $\Lambda_\pm$ with $\Lambda_+\neq\Lambda_-$. Then, for any $2\times2$ matrix $M$ and any integer $n\ge0$,
\begin{align}
\Tr(T^n M)=a^{\,}_+(M)\Lambda_+^n+a^{\,}_-(M)\Lambda_-^n,
\qquad
a^{\,}_+(M):=\frac{\Tr(TM)-\Lambda_-\,\Tr(M)}{\Lambda_+-\Lambda_-},
\qquad
a^{\,}_-(M):=\frac{\Lambda_+\,\Tr(M)-\Tr(TM)}{\Lambda_+-\Lambda_-}.
\label{eq:twoeig_trace_identity}
\end{align}

\subsection{Evaluation of $\delta V$ in MFIC}
\label{subsec:deltaV_MFIC_transfer}

We compute $\delta V$ [Eq.~\eqref{eq: QSL mixed states simplified}] in the form
\begin{align}
\delta V
= \frac{\|[\hat{\rho}^{\,}_{0},\hat{V}]\|^{\,}_{2}}{\|\hat{\rho}^{\,}_{0}\|^{\,}_{2}}
= \sqrt{2}\,
\sqrt{\frac{\Tr\!\left(\hat{\rho}^{2}_{0}\hat{V}^2\right)
-\Tr\!\left[\left(\hat{\rho}^{\,}_{0}\hat{V}\right)^2\right]}
{\Tr\!\left(\hat{\rho}^{2}_{0}\right)}}\, .
\nonumber
\end{align}
As in the TFIC computation [recall Eq.~\eqref{eq:deltaV2_unnormalized}], it is convenient to rewrite $\delta V$
in terms of unnormalized weights:
\begin{align}
(\delta V)^2
=
\frac{2}{Z^{\,}_{0}(2\beta)}
\Bigg[
\Tr\!\left(e^{-2\beta\hat H_{0,\mathrm{MFIC}}}\hat V^2\right)
-
\Tr\!\left(e^{-\beta\hat H_{0,\mathrm{MFIC}}}\hat V\,e^{-\beta\hat H_{0,\mathrm{MFIC}}}\hat V\right)
\Bigg],
\label{eq:deltaV2_unnormalized_MFIC}
\end{align}
where $Z_0(2\beta)=\Tr(e^{-2\beta\hat H_{0,\mathrm{MFIC}}})$.

\paragraph*{Step 1: Compute $\Tr(e^{-2\beta\hat H_{0,\mathrm{MFIC}}}\hat V^2)$.}
With $\hat V_{\mathrm{MFIC}}=-J\sum_{j=1}^N X_j$ and $(X^{\,}_{j})^{2}=\mathbb{I}$, we have
\begin{align}
\hat V_{\mathrm{MFIC}}^2
=J^2\sum_{j,k=1}^N X_j X_k
=
NJ^{2} \mathbb{I} +J^2\sum_{\substack{j,k\\ j\neq k}}^N X_j X_k.
\nonumber
\end{align}
Since $e^{-2\beta\hat H_{0,\mathrm{MFIC}}}$ is diagonal in the $Z$-basis, only $j=k$ contributes to the trace,
giving
\begin{align}
\Tr\!\left(e^{-2\beta\hat H_{0,\mathrm{MFIC}}}\hat V_{\mathrm{MFIC}}^2\right)
=NJ^2\,Z_0(2\beta).
\label{eq:MFIC_term1}
\end{align}

\paragraph*{Step 2: Compute $\Tr(e^{-\beta\hat H_{0,\mathrm{MFIC}}}\hat V\,e^{-\beta\hat H_{0,\mathrm{MFIC}}}\hat V)$.}
Using the same $Z$-basis argument (net spin flips must return to the original configuration),
only identical sites contribute:
\begin{align}
A^{(B)}:=\Tr\!\left(e^{-\beta\hat H_{0,\mathrm{MFIC}}}\hat V\,e^{-\beta\hat H_{0,\mathrm{MFIC}}}\hat V\right)
=
J^2\sum_{k=1}^N
\Tr\!\left(e^{-\beta\hat H_{0,\mathrm{MFIC}}}X_k\,e^{-\beta\hat H_{0,\mathrm{MFIC}}}X_k\right).
\nonumber
\end{align}
In the \(Z\)-product basis \(\ket{\mathbf s}=\ket{s_1,\dots,s_N}\), \(X_k\) flips \(s_k\), and one finds
\[
\langle \mathbf s|e^{-\beta \hat H_{0,\mathrm{MFIC}}}X_k e^{-\beta \hat H_{0,\mathrm{MFIC}}}X_k|\mathbf s\rangle
=
e^{-\beta E(\mathbf s)}e^{-\beta E(\mathbf s^{(k)})},
\]
where \(E(\mathbf s)=-J\sum_{j=1}^N s_js_{j+1}+B\sum_{j=1}^N s_j\).
By translation invariance it suffices to take \(k=1\).
A direct cancellation shows that the bonds $(N,1)$ and $(1,2)$ and the field term $Bs_1$ drop out in
$E(\mathbf s)+E(\mathbf s^{(1)})$, yielding
\begin{align}
E(\mathbf s)+E(\mathbf s^{(1)})
=
-2J\sum_{j=2}^{N-1}s_js_{j+1}
+2B\sum_{j=2}^{N}s_j.
\nonumber
\end{align}
Hence, $s_1$ completely decouples and produces an overall factor $2$, so that
\begin{align}
A^{(B)}
=
2NJ^2\,Q^{(B)}_{N-1}(2\beta),
\label{eq:MFIC_deltaV_term2}
\end{align}
where \(Q^{(B)}_{N-1}(2\beta)\) is the partition function of an \emph{open} Ising chain of length \(N-1\)
at inverse temperature \(2\beta\) with longitudinal field \(B\):
\begin{align}
Q^{(B)}_{N-1}(2\beta)
:=
\sum_{s_2,\dots,s_N=\pm1}
\exp\!\left(
K\sum_{j=2}^{N-1}s_js_{j+1}
-H\sum_{j=2}^{N}s_j
\right), 
\qquad (K\equiv 2\beta J,\ H\equiv 2\beta B).
\nonumber
\end{align}
Introduce the boundary vector
$
\mathbf u:=
\begin{pmatrix}
e^{-H/2} ,e^{+H/2}
\end{pmatrix}^{\mathsf T},
$
and use the same transfer matrix at inverse temperature \(2\beta\), i.e. \(T^{(B)}\equiv T^{(B)}(2\beta)\).
Then
\begin{align}
Q^{(B)}_{N-1}(2\beta)
=
{\mathbf u}^{\mathsf T}\,(T^{(B)})^{N-2}\,{\mathbf u}
=
\Tr\!\big[(T^{(B)})^{N-2}U^{(B)}\big],
\qquad
U^{(B)}:={\mathbf u}{\mathbf u}^{\mathsf T}.
\nonumber
\end{align}

Applying Eq.~\eqref{eq:twoeig_trace_identity} with $T=T^{(B)}$, $M=U^{(B)}$, and $n=N-2$ yields
\begin{align}
Q^{(B)}_{N-1}(2\beta)
&=
c_+^{(B)}\Lambda_+^{N-2}+c_-^{(B)}\Lambda_-^{N-2},
\qquad
c_\pm^{(B)}:=a_\pm(U^{(B)}),
\nonumber
\end{align}
where \(\Lambda^{\,}_{\pm}\) are given by Eq.~\eqref{eq:MFIC_lambdapm}, and
\begin{align}
\Tr(U^{(B)})&={\mathbf u}^{\mathsf T}{\mathbf u}=2\cosh H,
\qquad
\Tr(T^{(B)}U^{(B)})={\mathbf u}^{\mathsf T}T^{(B)}{\mathbf u}=2e^{K}\cosh(2H)+2e^{-K},
\nonumber
\\
\Lambda_{+}-\Lambda_{-}
&=2\sqrt{e^{2K}\sinh^2 H+e^{-2K}}.
\nonumber
\end{align}

\paragraph*{Step 3: Assemble $\delta V$.}
Substituting Eqs.~\eqref{eq:MFIC_term1} and \eqref{eq:MFIC_deltaV_term2} into
Eq.~\eqref{eq:deltaV2_unnormalized_MFIC} gives
\begin{align}
\delta V_{\mathrm{MFIC}}
=
\sqrt{2N}J
\left(
1-\frac{2Q^{(B)}_{N-1}(2\beta)}{Z_0(2\beta)}
\right)^{1/2}.
\label{eq:deltaV_MFIC_final_transfer}
\end{align}

In particular, for fixed finite \(\beta\) and large \(N\) [so that $(\Lambda^{\,}_{-}/\Lambda^{\,}_{+})^{N}\to 0$],
Eq.~\eqref{eq:deltaV_MFIC_final_transfer} simplifies to
\begin{align}
\delta V_{\rm MFIC}\xrightarrow[N\to\infty]{}
\sqrt{2N}\,J\,
\left(
1-\frac{2c^{(B)}_{+}}{\Lambda^{2}_{+}}
\right)^{1/2},
\label{eq:deltaV_MFIC_largeN}
\end{align}
with
\begin{align}
c^{(B)}_+
=
\frac{e^{K}\sinh^2H+e^{-K}}{\sqrt{e^{2K}\sinh^2 H+e^{-2K}}}+\cosh H,
\qquad 
\Lambda^{\,}_{+}= e^{K}\cosh H+
\sqrt{e^{2K}\sinh^2 H+e^{-2K}}.
\nonumber
\end{align}
Setting \(B=0\) reduces \(c^{(B)}_{+}\) to \(2\) and $\Lambda^{\,}_{+}$ to $2\cosh K$,
and hence reproduces the TFIC result
$\delta V_{\rm TFIC}\to\sqrt{2N}\,J\,\tanh(2\beta J)$ [Eq.~\eqref{eq:deltaV_TFIC_closed large N}].

\subsection{Evaluation of $\chi^{\,}_{\mathrm{F}}$ in MFIC}
\label{subsec:chiF_MFIC_transfer}

We next compute the fidelity susceptibility $\chi^{\,}_{\mathrm{F}}$ using Eq.~\eqref{eq: define fidelity susceptiblity simple form app2}
with $\hat H_0=\hat H_{0,\mathrm{MFIC}}$ and $\hat V=\hat V_{\mathrm{MFIC}}$.
Throughout this subsection we assume $B\neq 0,\pm 2J$, so that $B-J(s_{j-1}+s_{j+1})\neq 0$ for all local configurations.

In the $Z$-product basis $\ket{\mathbf s}=\ket{s_1,\dots,s_N}$ ($s_j=\pm 1$), the unperturbed energies are
\begin{align}
E(\mathbf s)=-J\sum_{j=1}^N s_js_{j+1}+B\sum_{j=1}^N s_j,
\qquad (s_{N+1}\equiv s_1),
\nonumber
\end{align}
and $\hat V_{\mathrm{MFIC}}=-J\sum_{j=1}^N X_j$ flips a single spin:
$\langle \mathbf s|\hat V_{\mathrm{MFIC}}|\mathbf s^{(j)}\rangle=-J$, where $\mathbf s^{(j)}$ is obtained
from $\mathbf s$ by $s_j\to -s_j$.
The corresponding energy difference is
$
\Delta E_j(\mathbf s)
:=E(\mathbf s)-E(\mathbf s^{(j)})
=
2s_j\Big(B-J(s_{j-1}+s_{j+1})\Big).
$

Starting from Eq.~\eqref{eq: define fidelity susceptiblity simple form app2} and using
$|\langle \mathbf s|\hat V_{\mathrm{MFIC}}|\mathbf s^{(j)}\rangle|^2=J^2$, we obtain
\begin{align}
\chi^{\rm MFIC}_{\mathrm F}
&=
\frac{2J^2}{Z_0(2\beta)}
\sum_{\mathbf s}\sum_{j=1}^N
\frac{\big(p(\mathbf s)-p(\mathbf s^{(j)})\big)^2}{\big(\Delta E_j(\mathbf s)\big)^2},
\qquad
p(\mathbf s):=e^{-\beta E(\mathbf s)}.
\nonumber
\end{align}
Using
$\big(p(\mathbf s)-p(\mathbf s^{(j)})\big)^2=e^{-2\beta E(\mathbf s)}\big(1-e^{+\beta\Delta E_j(\mathbf s)}\big)^2$
and $(\Delta E_j)^2=4\big(B-J(s_{j-1}+s_{j+1})\big)^2$, we can rewrite $\chi^{\,}_{\mathrm F}$ as an expectation value
in the $2\beta$ thermal average:
\begin{align}
\chi^{\rm MFIC}_{\mathrm F}
=
\frac{N J^2}{2}\,
\Bigg\langle
\frac{\big(1-e^{+\beta\Delta E_1(\mathbf s)}\big)^2}{\big(B-J(s_{N}+s_{2})\big)^2}
\Bigg\rangle_{2\beta},
\quad
\langle \cdots\rangle_{2\beta}
:=
\frac{1}{Z_0(2\beta)}
\sum_{\mathbf s}
\exp\!\Big(K\sum_{j=1}^N s_js_{j+1}-H\sum_{j=1}^N s_j\Big)\,(\cdots).
\label{eq:chiF_MFIC_expectation}
\end{align}
Here, we used translation invariance to replace $\sum_{j=1}^N$ by $N$ times $j=1$, and
$\beta\Delta E_1(\mathbf s)=s_1\big(H-K(s_N+s_2)\big)$.

\paragraph*{Transfer-matrix evaluation.}
Define
\begin{align}
S^{(B)}_1
:=
\sum_{\mathbf s}
\exp\!\Big(K\sum_{j=1}^N s_js_{j+1}-H\sum_{j=1}^N s_j\Big)\,
\frac{\big(1-e^{s_1(H-K(s_N+s_2))}\big)^2}{\big(B-J(s_N+s_2)\big)^2},
\nonumber
\end{align}
so that Eq.~\eqref{eq:chiF_MFIC_expectation} becomes
\begin{align}
\chi^{\rm MFIC}_{\mathrm F}
=
\frac{N J^2}{2}\,\frac{S^{(B)}_1}{Z_0(2\beta)}.
\label{eq:chiF_MFIC_S1def}
\end{align}
We can write
\begin{align}
S^{(B)}_1
&=
\sum_{a,b,c=\pm 1}
T^{(B)}_{a,b}\,T^{(B)}_{b,c}\,(T^{(B)})^{N-2}_{c,a}\;
\frac{\big(1-e^{b(H-K(a+c))}\big)^2}{\big(B-J(a+c)\big)^2},
\label{eq:S1_MFIC_local_sum}
\end{align}
where $a=s^{\,}_N$, $b=s^{\,}_1$, and $c=s^{\,}_2$.
Introduce the $2\times 2$ matrix $\mathcal{M}^{(B)}$ with entries
\begin{align}
\mathcal{M}^{(B)}_{a,c}
:=
\sum_{b=\pm 1}
T^{(B)}_{a,b}\,T^{(B)}_{b,c}\;
\frac{\big(1-e^{b(H-K(a+c))}\big)^2}{\big(B-J(a+c)\big)^2},
\qquad a,c=\pm 1.
\label{eq:M_MFIC_def}
\end{align}
Since $T^{(B)}$ is symmetric, Eq.~\eqref{eq:S1_MFIC_local_sum} becomes
$
S^{(B)}_1
=
\Tr\!\Big((T^{(B)})^{N-2}\mathcal{M}^{(B)}\Big).
$

\paragraph*{Explicit form of $\mathcal{M}^{(B)}$.}
Evaluating the $b$-sum in Eq.~\eqref{eq:M_MFIC_def} gives the symmetric matrix
\begin{align}
\mathcal{M}^{(B)}
=
\begin{pmatrix}
\mathcal{M}^{(B)}_{+,+} & \mathcal{M}^{(B)}_{+,-}\\
\mathcal{M}^{(B)}_{+,-} & \mathcal{M}^{(B)}_{-,-}
\end{pmatrix},
\nonumber
\end{align}
with
\begin{subequations}
\label{eq:M_MFIC_entries}
\begin{align}
\mathcal{M}^{(B)}_{+,+}
&=
\frac{8e^{-H}\,\sinh^{2}\!\big(\frac{H-2K}{2}\big)}{(B-2J)^2}
=
\frac{8e^{-2\beta B}\,\sinh^{2}\!\big(\beta(B-2J)\big)}{(B-2J)^2},
\nonumber
\\
\mathcal{M}^{(B)}_{-,-}
&=
\frac{8e^{+H}\,\sinh^{2}\!\big(\frac{H+2K}{2}\big)}{(B+2J)^2}
=
\frac{8e^{+2\beta B}\,\sinh^{2}\!\big(\beta(B+2J)\big)}{(B+2J)^2},
\nonumber
\\
\mathcal{M}^{(B)}_{+,-}
&=
\frac{8\sinh^{2}\!\big(\frac{H}{2}\big)}{B^2}
=
\frac{8\sinh^{2}\!\big(\beta B\big)}{B^2}.
\nonumber
\end{align}
\end{subequations}

\paragraph*{Closed form.}
Using Eq.~\eqref{eq:twoeig_trace_identity} with $T=T^{(B)}$, $M=\mathcal{M}^{(B)}$, and $n=N-2$, we obtain
\begin{align}
S^{(B)}_1
=
d_+^{(B)}\Lambda_+^{N-2}+d_-^{(B)}\Lambda_-^{N-2},
\qquad
d_\pm^{(B)}:=a_\pm(\mathcal{M}^{(B)}).
\nonumber
\end{align}
Explicitly,
\begin{align}
\Tr(\mathcal{M}^{(B)})
&=
8e^{-H}
\frac{\sinh^{2}\!\big(\frac{H-2K}{2}\big)}{(B-2J)^2}
+
8e^{+H}
\frac{\sinh^{2}\!\big(\frac{H+2K}{2}\big)}{(B+2J)^2},
\nonumber\\
\Tr\!\big(T^{(B)}\mathcal{M}^{(B)}\big)
&=
8e^{K-2H}\frac{\sinh^{2}\!\big(\frac{H-2K}{2}\big)}{(B-2J)^2}
+
8e^{K+2H}\frac{\sinh^{2}\!\big(\frac{H+2K}{2}\big)}{(B+2J)^2}
+
16e^{-K}\frac{\sinh^{2}\!\big(\frac{H}{2}\big)}{B^2},
\nonumber\\
\Lambda_{+}-\Lambda_{-}
&=2\sqrt{e^{2K}\sinh^2 H+e^{-2K}}.
\nonumber
\end{align}
Finally,
\begin{align}
\chi^{\rm MFIC}_{\mathrm F}
=
\frac{N J^2}{2}\,
\frac{d_+^{(B)}\Lambda_+^{N-2}+d_-^{(B)}\Lambda_-^{N-2}}{\Lambda_+^{N}+\Lambda_-^{N}}.
\label{eq:chiF_MFIC_final}
\end{align}

In the thermodynamic limit at fixed $\beta$ (so that $(\Lambda_-/\Lambda_+)^{N}\to 0$), Eq.~\eqref{eq:chiF_MFIC_final} simplifies to
\begin{align}
\chi^{\rm MFIC}_{\mathrm F}
\xrightarrow[N\to\infty]{}
\frac{N J^2}{2}\,\frac{d_+^{(B)}}{\Lambda_+^{2}},
\nonumber
\end{align}
where $d_+^{(B)}=a_+(\mathcal{M}^{(B)})$ and $\Lambda_\pm$ are given by Eq.~\eqref{eq:MFIC_lambdapm}.

\subsection{Threshold driving rate $\Gamma^{\,}_{\mathrm{th}}$ in MFIC}
\label{subsec:GammaTh_MFIC_transfer}

Using the definition of the threshold driving rate [Eq.~\eqref{eq: define threshold driving rate}]
together with Eq.~\eqref{eq:deltaV_MFIC_final_transfer} for $\delta V_{\mathrm{MFIC}}$ and
Eq.~\eqref{eq:chiF_MFIC_final} for $\chi^{\rm MFIC}_{\mathrm F}$, we obtain
\begin{align}
\Gamma^{\,}_{\mathrm{th,MFIC}}
&=
\frac{2\sqrt{2}\,\alpha}{\sqrt{N}}\,
\frac{Z_0(2\beta)}{J\,S^{(B)}_1}\,
\left(
1-\frac{2Q^{(B)}_{N-1}(2\beta)}{Z_0(2\beta)}
\right)^{1/2}
\nonumber\\
&=
\frac{2\sqrt{2}\,\alpha}{\sqrt{N}}\,
\frac{\Lambda_+^{N}+\Lambda_-^{N}}{J\big(d_+^{(B)}\Lambda_+^{N-2}+d_-^{(B)}\Lambda_-^{N-2}\big)}\,
\left(
1-\frac{2\big(c_+^{(B)}\Lambda_+^{N-2}+c_-^{(B)}\Lambda_-^{N-2}\big)}{\Lambda_+^{N}+\Lambda_-^{N}}
\right)^{1/2}.
\nonumber
\end{align}

Taking $\beta\to\infty$, the dominant single-spin-flip gap is $2J+|B|$, and
$S^{(B)}_1/Z_0(2\beta)\to(2J+|B|)^{-2}$ while $Q^{(B)}_{N-1}(2\beta)/Z_0(2\beta)\to 0$.
Hence the zero-temperature threshold rate is
\begin{align}
\Gamma^{\,}_{N}
:=
\lim_{\beta\to\infty}\Gamma^{\,}_{\mathrm{th,MFIC}}
=
\frac{2\sqrt{2}\,\alpha}{\sqrt{N}}\,
\frac{(2J+|B|)^2}{J}.
\nonumber
\end{align}
We normalize the temperature-dependent threshold rate by
\begin{align}
f^{\,}_{N}(\beta)
:=
\frac{\Gamma^{\,}_{\mathrm{th,MFIC}}}{\Gamma^{\,}_{N}}
=
\frac{\Lambda_+^{N}+\Lambda_-^{N}}{(2J+|B|)^2\big(d_+^{(B)}\Lambda_+^{N-2}+d_-^{(B)}\Lambda_-^{N-2}\big)}\,
\left(
1-\frac{2\big(c_+^{(B)}\Lambda_+^{N-2}+c_-^{(B)}\Lambda_-^{N-2}\big)}{\Lambda_+^{N}+\Lambda_-^{N}}
\right)^{1/2}.
\nonumber
\end{align}
By construction, $f^{\,}_{N}(\beta)\to 1$ as $\beta\to\infty$.

In the thermodynamic limit at fixed $\beta$,
\begin{align}
f(\beta)
:=
\lim_{N\to\infty} f^{\,}_{N}(\beta)
=
\frac{\Lambda_+^{2}}{(2J+|B|)^2\,d_+^{(B)}}\,
\left(
1-\frac{2c_+^{(B)}}{\Lambda_+^{2}}
\right)^{1/2},
\label{eq:fbeta_MFIC_thermo}
\end{align}
where $c_+^{(B)}$ is the thermodynamic-limit coefficient appearing in Eq.~\eqref{eq:deltaV_MFIC_largeN}.
We observe that $f(\beta)$ [Eq.~\eqref{eq:fbeta_MFIC_thermo}] is not guaranteed to be monotonic in $\beta$.
Expanding \(f(\beta)\) in the low- and high-temperature regimes, one finds
\begin{align}
f(\beta)\simeq
\begin{cases}
1+e^{-2\beta(2J+|B|)}, & \beta\to\infty,
\\[4pt]
\displaystyle
\frac{\sqrt{2+(B/J)^2}}{\sqrt{2}\,\bigl(2+|B|/J\bigr)^2}\;\frac{1}{\beta J},
& \beta\to 0,
\end{cases}
\nonumber
\end{align}
which matches the asymptotic temperature dependence stated in
Theorem~\ref{thm:GammaThTempScaling}, with
\[
c^{\,}_{1}=1,
\qquad
\Delta=2(2J+|B|),
\qquad
c^{\,}_{2}=
\frac{\sqrt{2+(B/J)^2}}{\sqrt{2}\,\bigl(2+|B|/J\bigr)^2}\;\frac{1}{J}.
\]

\end{document}